\newtheorem{theorem}{Theorem}
\newtheorem{lemma}{Lemma}
\newtheorem{corollary}{Corollary}
\newtheorem{proposition}{Proposition}
\newtheorem{definition}{Definition}
\newtheorem{remark}{Remark}
\title{Hidden Variables for Pauli Measurements}
\author{Leon Bankston}
\affiliation{Tulane University}
\email{doogrammargood@gmail.com}
\begin{document}
\maketitle

\begin{abstract}
The Pauli measurements (the measurements that can be performed with Clifford operators followed by measurement in the computational basis) are a fundamental object in quantum information. It is well-known that there is no assignment of outcomes to all Pauli measurements that is both complete and consistent. 

We define two classes of hidden variable assignments based on relaxing either condition. Partial hidden variable assignments retain the consistency condition, but forfeit completeness. Contextual hidden variable assignments retain completeness but forfeit consistency. We use techniques from spectral graph theory to show that the incompleteness and inconsistency of the respective hidden variable assignments are both maximal.

As an application, we interpret our incompleteness result as a statement of contextuality and our inconsistency result as a statement of nonlocality. Our results show that we can obtain large amounts of contextuality and nonlocality using Clifford gates and measurements in the computational basis.
\end{abstract}

\section{Introduction}
Quantum information is concerned with the information-processing powers of quantum systems \cite{Nielsen16}. Since information is realized in physics \cite{Deutsch85}, these unique powers stem from the discrepancy between ``classical'' physics and quantum physics. 

There are several possible definitions for the classical systems. Two popular choices for properties defining ``classical'' behavior are locality \cite{scarani19} and noncontextuality \cite{Spekkens_2005} \cite{Cabello_2014} \cite{Acn2015}. Both concepts can be phrased in terms of their respective classes of hidden variables.

General hidden variables can trivially explain any phenomenon by appealing to predestination. To make a non-trivial claim, we constrain the hidden variables to obtain local hidden variables (that can be factored into hidden variables for various locations) and noncontextual hidden variables (that assign outcomes to measurements independent of which other compatible measurements occur with it).

The partial hidden variable assignments that we define in this paper are generalizations of (measurement) noncontextual hidden variables \cite{Spekkens_2005}, and the contextual hidden variable assignments are the factors of local hidden variables. Our definitions have the advantage that they come with values, $Pval$ and $Cval$, that measure the quality of the hidden variable assignments. Another advantage is conceptual. Our two types of hidden variable assignments are the extremes in the trade-off between the completeness and consistency of the assignment. 

We are interested a natural class of measurements that we call the Pauli measurements that are fundamental to quantum information. These measurements are also called stabilizer measurements \cite{Van_den_Nest_2004}. They can be implemented with existent hardware \cite{Niemann14}, and are the foundation for many protocols related to quantum cryptography, quantum error correction and quantum tomography. Our results add to our understanding of the structure of these essential measurements by showing that any assignment of outcomes to them must be very incomplete or very inconsistent.

The Pauli measurements are also interesting because they generalize many of the simple proofs that quantum physics is neither local nor noncontextual \cite{mermin93} \cite{mermin_2007}. On the other hand, the measurements can be simulated efficiently when applied to the computational zero state \cite{Aaronson_2004}. We present new, quantitative proofs of the contextuality and nonlocality of quantum mechanics using measurements that can be simulated classically. This points to an apparent discrepancy between a notion of ``classical'' based on simulation and a notion of ``classical''  based on locality or noncontextuality.

Despite the existence of classical simulation, contextuality \cite{amaral19} and nonlocality \cite{de_Vicente_2014} are resources for quantum information processing tasks. We provide simple, practically implementable, procedures for generating these resources. Therefore, we expect that our results will have further applications for quantum information than we discuss in Section \ref{applications_section}.

Our most important contribution is the recognition that the mathematics of spectral graph theory applies well to the Pauli measurements. Roughly, spectral graph theory \cite{shl06} characterizes how spread out graphs are, provided they have large symmetry groups. Such graphs can be defined naturally from the Pauli measurements so that structure in the graph reflects structure of the measurements. Beyond the mere bounds on $Cval$ and $Pval$, our approach reveals a powerful and previously-unstudied structure within the Pauli measurements. We expect that our techniques apply to other questions about the Pauli group.

Though the Pauli measurements are fundamental to quantum information, they have not been studied using the techniques in this paper. In fact, we introduce the term ``Pauli measurements'' because there does not seem to be a standard definition in the literature. One reason is that there are a variety of similar objects that can be dubbed Pauli (or stabilizer) measurements \footnote{For example, the Pauli measurements described in \cite{Abramsky_2017} roughly correspond to $\mathcal{L}^n_1$ in our work. Other natural choices include the collection of all projective measurements whose outcomes are stabilizer states, the collection of measurements protocols \cite{Acn2015} that allow adaptive measurement, and the collection of 
 Pauli measurements that involve ancillary qubits.}. The outcomes of our Pauli measurements are stabilizer codes \cite{gottesman97}, which have been studied extensively. The Pauli measurements are also the measurements described in \cite{Cabello_2010}.

The mathematical object $\mathcal{L}^n$ that represents the Pauli measurements was studied in \cite{Lazar2017AssociationSA}, and is related to dual polar graphs \cite{Bro89} and $q$-krawtchouk polynomials \cite{stanton_1984}. It is also related to polar spaces \cite{batten_1997} and Tits buildings.

Our incompleteness can be interpreted as an error-robust experimental verification of contextuality, similar to the proposal in \cite{Cabello_2010}. Our inconsistency result can be phrased in terms of a particular class of binary constraint games, studied in \cite{zhengfeng13}. We do not know of any other work that has studied these games, but they are natural to define.

The rest of the paper is organized as follows. In Section \ref{basics_section}, we define the Pauli measurements as the mathematical object $\mathcal{L}^n$ and describe how $\mathcal{L}^n$ corresponds to a collection of measurements in the standard formulation of quantum mechanics. We introduce our notions of partial and contextual hidden variable assignments, as well as metrics $Pval$ and $Cval$ for their completeness and consistency. In Section \ref{background_section} we give background on the main mathematical tools that we will need to prove our results. Both the incompleteness and inconsistency results rely on the theory of expander graphs. The inconsistency result also uses some theory from nonlocal games. Section \ref{main_results_section} gives our main results, Theorem \ref{incompletness_thm} and Theorem \ref{inconsistency_thm}. We give applications in Section \ref{applications_section} by interpreting Theorem \ref{incompletness_thm} as a statement of contextuality and Theorem \ref{inconsistency_thm} as a statement of nonlocality. Section \ref{conclusions_section} concludes with some discussion of future inquiries.

\section{Basic notions}\label{basics_section}
    In this section, we define the core notions of our theory. We define the Pauli measurements and the two classes of hidden variable assignments (partial and contextual) for explaining outcomes of systems of measurements.
\subsection{The Pauli Measurements}\label{pauli_measurements_main_section}

In this paper, we are interested in a particular set of of projective measurements, equipped with the partial order of fine-graining. All measurements in this work are projective. We use the term ``outcome'' of a measurement to refer to a maximal eigenspace of the associated Hermitian operator.

If $a,b$ are measurements, the partial order of fine-graining is $a\leq b$ if every outcome of $b$ is also an eigenspace of $a$. A semilattice \cite{birkhoff1967} \cite{DELSARTE1976} of measurements is a collection of projective measurements that is also a graded $\wedge$-semilattice under this ordering. Though our hidden variable assignments are defined generally, we will evaluate them for a particular family of semilattices of measurements that we call the Pauli measurements $\mathcal{L}^n$, defined in Definition \ref{Pauli_measurement_def}.

We denote the Pauli group (defined in Appendix \ref{Pauli_measurement_section}) on $n$ qubits by $\mathcal{P}(n)$. Each element of $\mathcal{P}(n)$ (other than multiples of $I$) is an operator on $\mathbb{C}^{2^n}$ with two orthogonal eigenspaces and can therefore be viewed as a $2$-outcome projective measurement. Any collection of pairwise commuting observables may be performed simultaneously, resulting in a new measurement.

The Pauli measurements $\mathcal{L}^n$ are the elements of the semilattice of commuting collections of $\mathcal{P}(n)$. This is equivalent to the collection of measurements obtained by performing a Clifford operation followed by a partial or full measurement in the computational basis. The equivalence follows from the fact that the Clifford group acts transitively on $\mathcal{P}(n)$ and measurement in the computational basis is a Pauli measurement.

We associate $\mathcal{P}(n)$ with  $\mathbb{Z}_2^{2n}$ using the map $coord: \mathcal{P}(n)\to \mathbb{Z}_2^{2n}$ defined by $i^{\omega}XZ(x)\mapsto x$. This identification motivates defining the Pauli measurements as follows.

\begin{definition}[Pauli measurements]\label{Pauli_measurement_def}
    Let $n\in \mathbb{N}$. Define the Pauli measurements on $n$ qubits to be the poset of isotropic subspaces of $\mathbb{Z}_2^{2n}$ ordered by inclusion.
    
    We denote this poset by $\mathcal{L}^n$.
\end{definition}

See Appendix \ref{Pauli_measurement_section} for basic definitions and a discussion of the equivalence between $\mathcal{L}^n$ and collections of commuting measurements in $\mathcal{P}(n)$. 

It is obvious that $\mathcal{L}^n$ is not just a poset, but also a $\wedge-$semilattice graded by dimension, where the $\wedge$ operation is intersection. We denote the $k$-dimensional isotropic subspaces of $\mathbb{Z}^{2n}_2$ by $\mathcal{L}^n_k$ and the semilattice of subspaces of dimension $k$ or smaller by $\downarrow \mathcal{L}^n_k$.

\subsection{Partial Hidden Variables}

We formally define partial hidden variable assignments for semilattices of measurements and define $Pval$, our metric for evaluating their completeness. First, we define the key property of these assignments, consistency.

\begin{definition}[Consistency]\label{consistency_def}
    Let $\mathcal{M}$ be a semilattice of measurements. Let $m_1, m_2 \in \mathcal{M}$ and let $o_1$ and $o_2$ be eigenvalues for $m_1$ and $m_2$ respectively. For any $m_3\leq m_1 \wedge m_2$, $m_3$ inherits an eigenvalue $o_1(m_3)$ from $o_1$ and an eigenvalue $o_2(m_3)$ from $o_2$, due to the definition of the partial order. Define the outcomes $o_1$ and $o_2$ to be consistent at $m_3$ if $o_1(m_3)=o_2(m_3)$. 
    
    The two outcomes $o_1$ and $o_2$ are called consistent if they are consistent at every $m_3\leq m_1\wedge m_2$. A collection of outcomes is called consistent if every pair of outcomes is consistent.
\end{definition}

\begin{definition}[Partial hidden variable assignment]
     Let $\mathcal{M}$ be a semilattice of measurements, and let $M\subset \mathcal{M}$ be the set of maximal measurements. A partial hidden variable assignment for $\mathcal{M}$ is a partial function $f$ on $M$ that takes some maximal measurements to their respective outcomes such that the image of $f$ is consistent.
\end{definition}

Partial hidden variable assignments can be ranked according to the size of their domains. We are interested in those assignments that have large domains, so we introduce notation to describe this size.

\begin{definition}[Pval]
     Let $\mathcal{M}$ be a semilattice of measurements whose maximal elements are $M\subset \mathcal{M}$ and let $f$ be a partial hidden variable assignment. Then $Pval(f)\coloneqq \frac{|Dom(f)|}{|M|}$, where $Dom(f)$ is the domain of $f$.
     
     Define $Pval(\mathcal{M})$ to be $\sup_f(Pval(f))$, where $f$ runs over all partial hidden variable assignments for $\mathcal{M}$.
\end{definition}

We will show (as Theorem \ref{incompletness_thm}) that $Pval(\mathcal{L}^n)$ decreases to $0$ exponentially in $n$.

\subsection{Contextual Hidden Variables}
Here, we define contextual hidden variables and define $Cval$, our criterion for evaluating their consistency. 

\begin{definition}[Contextual hidden variable assignment]
     Let $\mathcal{M}$ be a semilattice of  measurements. A contextual hidden variable assignment is an assignment of outcomes to maximal measurements of $\mathcal{M}$.
\end{definition}

We can evaluate contextual hidden variable assignments based on their consistency.
\begin{definition}[Cval]\label{cval_def}
     Let $\mathcal{M}$ be a semilattice of measurements. Let $f$ be a contextual hidden variable for $\mathcal{M}$. Let $w\in \mathcal{M}$ be a minimal nontrivial \footnote{By nontrivial, we mean that there are at least $2$ distinct outcomes.} measurement. Let $M_w\subset \mathcal{M}$ be the set of maximal measurements above $w$.
     
     For each $x\in M_w$, assign an outcome to $w$ by inheritance from $f(x)$. Let $m_w$ be the fraction of $x\in M_w$ that assign the minority outcome. Define $Cval(f,w)\coloneqq m_w(1-m_w)$.
     
     Define $Cval(\mathcal{M})\coloneqq \min_{f} \mathbb{E}_w [Cval(f,w)]$. Here, $f$ runs over all contextual hidden variables for $\mathcal{M}$.
\end{definition}

We will show (as Theorem \ref{inconsistency_thm}) $\lim_{n\to \infty} Cval(\mathcal{L}^n_n) = \frac{1}{4}$, which is the largest possible value.

\section{Background}\label{background_section}
    We describe some of the background for the technical tools that we will use to prove our incompleteness and inconsistency results, Theorem \ref{incompletness_thm} and Theorem \ref{inconsistency_thm}.

\subsection{Spectral Graph Theory}\label{spec_graph_theory_subsection}
    In this subsection, we describe the graph-theoretic tools that are central to both our incompleteness and inconsistency results.
    
    The graphs that appear in this paper are finite simple graphs. For basic terminology, see \cite{Biggs93} or \cite{Godsil01}. We will define graphs using the Pauli measurements, and structure in the graphs will reflect structure of $\mathcal{L}^n$.
    
    The top-level Pauli measurements $\mathcal{L}^n_n$ have a natural distance that measures the amount of common information collected between pairs of Pauli measurements.
    
    \begin{definition}[Distance]\label{pauli_distance_def}
    Define $d: \mathcal{L}^n_n\times \mathcal{L}^n_n \to \{0,\dots,n\}$ by $(x,y)\mapsto n-\dim(x\cap y)$.
    \end{definition}

    It is routine to see that $d$ satisfies the usual properties of distance.\footnote{In fact, $d$ is the graph distance of the graph with vertex set $\mathcal{L}^n_n$ with $x\sim y$ iff $d(x,y)=1$.}
    
    Our main technical tool is the fact that $\mathcal{L}^n_n$ with this distance defines an association scheme \cite{Bro89}, so it is possible to obtain combinatorial data \cite{shl06} about the Pauli measurements based on the spectrum of the graph.
    
    \begin{definition}[Spectrum of a graph]\label{graph_spectrum_def}
        Let $G$ be a graph. 
        Its spectrum $Spec(G)$ is the set of eigenvalues of the matrix $(A_{v,w})_{v,w\in V(G)}$ where $A_{v,w} = 1$ if $v\sim w$ and $0$ otherwise.
        
        We will call the second largest element of $\{|l| \mid l \in Spec(G)\}$ the spectral parameter and denote it by $\lambda$.

        We will call $\frac{\lambda}{\max(Spec(G))}$ the spectral ratio.
    \end{definition}

    We will use the following $3$ families of graphs.
    
    \begin{definition}[$G^\prime_w$]\label{G1graph_def}
        Let $n\in \mathbb{N}$. Let $w\in \mathcal{L}^n_1$. Define the graph $G^\prime_w$ to have vertex set $V(G^\prime_w)=\mathcal{L}^n_n \cap \uparrow w$, with $x\sim y$ iff $d(x,y)=1$.
    \end{definition}

    \begin{definition}[$G_w$]\label{G2graph_def}
         Let $n\in \mathbb{N}$ be even. Define the graph $G_w$ to have vertex set $V(G_w)=V(G^\prime_w)$
         with $x\sim y$ iff $d(x,y)=\frac{n}{2}$.
    \end{definition}

    \begin{definition}[$B_{n,2}$]\label{Bgraph_def}
        Let $n\geq 3$. Define the bipartite graph $B_{n,2}$ to have vertex set $V(B_{n,2}) = \mathcal{L}^n_n \cup \mathcal{L}^n_2$. Edges are given by strict containment.
    \end{definition}
    
    The source of our results is the observation that families of regular graphs with small spectral ratios are well-spread out, as formalized by the following lemmas.
    
    \begin{lemma}[Expander mixing lemma]\label{expander_mixing_lemma}(\cite{vad12}, Theorem $4.15$)
    Let $G$ be a regular connected graph. Let $\lambda$ be the spectral parameter of $G$.
    
    Suppose $S,T \subset V(G)$. Then
    
    \[| E(S,T) - \Delta(G)\frac{|S||T|}{|V(G)|} | \leq \lambda \sqrt{|S||T|}\]
    
    where $E(S,T)$ is the number of pairs $(v,w)\in S\times T$ such that $v\sim w$.
\end{lemma}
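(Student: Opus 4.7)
The plan is to pass to the standard spectral-decomposition proof. I would begin by identifying the count $E(S,T)$ with a bilinear form in the adjacency matrix $A$: writing $\mathbf{1}_S$ and $\mathbf{1}_T$ for the indicator vectors of $S$ and $T$, one has $E(S,T) = \mathbf{1}_S^{T} A \mathbf{1}_T$, since each entry $A_{v,w}$ records whether $v \sim w$.

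Next I would use the fact that $G$ is connected and $\Delta(G)$-regular, so the all-ones vector $\mathbf{1}$ is an eigenvector of $A$ with eigenvalue $\Delta(G)$, and this eigenvalue has multiplicity one. I would split each indicator vector into its projection onto $\mathbf{1}$ and an orthogonal remainder,
\begin{equation*}
\mathbf{1}_S = \frac{|S|}{|V(G)|}\mathbf{1} + v_S, \qquad \mathbf{1}_T = \frac{|T|}{|V(G)|}\mathbf{1} + v_T,
\end{equation*}
with $v_S, v_T \perp \mathbf{1}$. Plugging these into the bilinear form, the cross terms vanish because $A\mathbf{1} = \Delta(G)\mathbf{1}$ is orthogonal to $v_S$ and $v_T$, and one obtains
\begin{equation*}
\mathbf{1}_S^{T} A \mathbf{1}_T = \Delta(G) \frac{|S||T|}{|V(G)|} + v_S^{T} A v_T.
\end{equation*}

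All that remains is to bound the error term $v_S^{T} A v_T$. Since $v_S$ and $v_T$ lie in the orthogonal complement of the top eigenspace, the operator norm of $A$ restricted to that complement is exactly the spectral parameter $\lambda$; hence by Cauchy--Schwarz, $|v_S^{T} A v_T| \leq \lambda \|v_S\| \|v_T\|$. A direct calculation gives $\|v_S\|^2 = |S| - |S|^2/|V(G)| \leq |S|$, and similarly $\|v_T\|^2 \leq |T|$, which yields the claimed bound $\lambda\sqrt{|S||T|}$.

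No step is particularly delicate; the only point requiring care is that $\lambda$, defined as the second largest absolute eigenvalue, genuinely controls $\|A|_{\mathbf{1}^{\perp}}\|$. This uses that the $\Delta(G)$-eigenspace is one-dimensional, which is where connectedness enters. If $G$ were bipartite the resulting bound would still be correct but vacuous, since $-\Delta(G)$ would also appear as an eigenvalue and force $\lambda = \Delta(G)$; in the applications to $G_w$, $G'_w$, and $B_{n,2}$ the interesting content is precisely that $\lambda$ is substantially smaller than $\Delta(G)$.
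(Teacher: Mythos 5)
Your proof is correct and is the standard spectral-decomposition argument; the paper does not prove this lemma itself but cites it from the reference \cite{vad12}, whose proof proceeds in essentially the same way (decompose the indicators along $\mathbf{1}$ and its orthogonal complement, bound the error term by the operator norm on $\mathbf{1}^{\perp}$ via Cauchy--Schwarz). Your closing remark correctly identifies where connectedness enters and why the bipartite case is deferred to the separate biregular version (Lemma \ref{biregular_expander_mixing_lemma}).
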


A similar result, proven in \cite{HAEMERS}, holds for biregular graphs and will be useful for our incompleteness result.

\begin{lemma}[Bipartite expander mixing lemma]\label{biregular_expander_mixing_lemma}
    Let $B$ be a biregular graph, i.e. a bipartite graph $V(B)=R\cup L$, where each $r\in R$ has a common degree $\Delta(R)$ and each $l\in L$ has a common degree $\Delta(L)$. Let $\lambda$ be the spectral parameter of $B$.
    
    Suppose $S\subset L$ and $T\subset R$, and that $|S|=\alpha|L|$ and $|T|=\beta|R|$, for some $\alpha, \beta \in [0,1]$. Then
    
    \[
    |\frac{E(S,T)}{E(G)} - \alpha \beta| \leq \frac{\lambda}{\sqrt{\Delta(R)\Delta(L)}} \sqrt{\alpha \beta (1-\alpha)(1-\beta)}
    \]
    
    where $E(S,T)$ is the number of edges between $S$ and $T$.
\end{lemma}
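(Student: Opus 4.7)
The plan is to mirror the spectral proof of the standard expander mixing lemma, adapted to the biregular setting via singular value decomposition. Write the adjacency matrix of $B$ in block form
\[
A = \begin{pmatrix} 0 & M \\ M^T & 0 \end{pmatrix},
\]
where $M \in \mathbb{R}^{L \times R}$ is the biadjacency matrix. The nonzero eigenvalues of $A$ come in pairs $\pm \sigma_i$ where $\sigma_i$ are the singular values of $M$. Biregularity gives $M \mathbf{1}_R = \Delta(L) \mathbf{1}_L$ and $M^T \mathbf{1}_L = \Delta(R) \mathbf{1}_R$, so the top singular value is $\sigma_1 = \sqrt{\Delta(R)\Delta(L)}$, with top singular vectors proportional to $\mathbf{1}_L$ and $\mathbf{1}_R$; the spectral parameter $\lambda$ from Definition \ref{graph_spectrum_def} is then the second singular value $\sigma_2$.

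Next I would observe that $E(S,T) = \mathbf{1}_S^T M \mathbf{1}_T$ and decompose $\mathbf{1}_S = \alpha \mathbf{1}_L + s$, $\mathbf{1}_T = \beta \mathbf{1}_R + t$, with $s \perp \mathbf{1}_L$ and $t \perp \mathbf{1}_R$. A direct computation yields $\|s\|^2 = \alpha(1-\alpha)|L|$ and $\|t\|^2 = \beta(1-\beta)|R|$. Expanding $\mathbf{1}_S^T M \mathbf{1}_T$ produces a main term $\alpha\beta \mathbf{1}_L^T M \mathbf{1}_R = \alpha\beta |E(B)|$, two cross terms that vanish (using $\mathbf{1}_L^T M = \Delta(R) \mathbf{1}_R^T$ and its transposed analogue together with orthogonality of $s,t$ to the all-ones vectors), and a residual $s^T M t$ bounded by $\sigma_2 \|s\| \|t\| = \lambda \sqrt{\alpha\beta(1-\alpha)(1-\beta)|L||R|}$, because both $s$ and $t$ lie in the orthogonal complement of the top singular vectors.

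Finally, the double-count $|E(B)| = |L|\Delta(L) = |R|\Delta(R)$ gives $|E(B)| = \sqrt{|L||R|\Delta(L)\Delta(R)}$, so dividing through converts the absolute bound into the relative bound
\[
\left| \frac{E(S,T)}{|E(B)|} - \alpha\beta \right| \leq \frac{\lambda}{\sqrt{\Delta(L)\Delta(R)}} \sqrt{\alpha\beta(1-\alpha)(1-\beta)},
\]
which is the claim. The only point requiring care is the identification of $\lambda$ with $\sigma_2$ rather than $\sigma_1$ under Definition \ref{graph_spectrum_def}; this relies on interpreting $\{|l| \mid l \in Spec(G)\}$ as a genuine set, so that the two eigenvalues $\pm \sigma_1$ contribute a single element and the second largest is indeed $\sigma_2$. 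I expect this bookkeeping to be the main subtlety, since the definition was stated for general graphs rather than tailored to the bipartite case; everything else is routine linear algebra once the SVD framework is set up.
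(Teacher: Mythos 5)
Your argument is correct, and it is worth noting that the paper itself does not prove this lemma at all --- it is imported from the cited reference (Haemers), whose original derivation goes through eigenvalue interlacing of a quotient matrix. Your SVD route is the standard modern alternative and is self-contained: the block structure $A=\left(\begin{smallmatrix}0&M\\ M^T&0\end{smallmatrix}\right)$, the identification $\sigma_1=\sqrt{\Delta(R)\Delta(L)}$ with singular vectors $\mathbf{1}_L,\mathbf{1}_R$, the orthogonal decompositions with $\|s\|^2=\alpha(1-\alpha)|L|$ and $\|t\|^2=\beta(1-\beta)|R|$, the vanishing cross terms, the bound $|s^TMt|\le\sigma_2\|s\|\|t\|$, and the normalization $|E(B)|=\sqrt{|L||R|\Delta(L)\Delta(R)}$ all check out, and your reading of the paper's Definition \ref{graph_spectrum_def} (so that $\pm\sigma_1$ collapse to one element of the set of absolute values and $\lambda=\sigma_2$) is the right one. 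The only caveat you should make explicit is connectivity: the step $|s^TMt|\le\sigma_2\|s\|\|t\|$ uses that the top singular value is simple with singular vectors exactly $\mathbf{1}_L$ and $\mathbf{1}_R$, which requires $B$ connected (the paper states this hypothesis for the regular version in Lemma \ref{expander_mixing_lemma} but omits it here; the graph $B_{n,2}$ to which the lemma is applied is connected, so nothing breaks downstream).
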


Another natural metric for the spread of a graph is given by how evenly random walks explore the vertices. One manifestation of this, needed in our applications section, is called the Hitting Lemma. We state a version that can be proven easily by slightly modifying Theorem $4.17$ of \cite{vad12}.

\begin{lemma}[Hitting lemma]\label{hitting_lemma}
    Let $G$ be a regular graph with spectral parameter $\lambda$. Let $\{A_i\}_{i=1}^k$ be subsets of $V(G)$ with $\frac{|A_i|}{|V(G)|}\leq \mu$ for each $i\in \{1,\dots,k\}$, where $\mu \in (0,1)$. Let $(v_1,\dots,v_k)$ be a walk on $G$ chosen uniformly at random. Then
    \[
    Pr(v_i \in A_i, \forall i \in \{1,\dots, k\}) \leq  (\mu + \frac{\lambda}{\Delta(G)}(1-\mu))^k.
    \]
\end{lemma}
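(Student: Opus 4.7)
The standard spectral proof tracks the ``survival vector'' of the walk through the transition operator. Let $M := A(G)/\Delta(G)$ be the (symmetric) transition matrix of the simple walk, with top eigenvalue $1$ (eigenvector $\mathbf{1}$) and spectral norm $\tilde\lambda := \lambda/\Delta(G)$ on $\mathbf{1}^\perp$. For each $i$ let $D_i$ denote the diagonal indicator of $A_i$, and define $p_i(v) := \Pr(v_j \in A_j \text{ for all } j \leq i,\ v_i = v)$, so that $p_1 = D_1 \mathbf{1}/|V(G)|$, $p_{i+1} = D_{i+1} M p_i$, and the quantity to bound is $s_k := \mathbf{1}^T p_k$.

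The core estimate is a single-step contraction $s_{i+1} \leq (\mu + \tilde\lambda(1-\mu))\, s_i$. Decompose $p_i = (s_i/|V(G)|)\,\mathbf{1} + p_i^\perp$ with $p_i^\perp \perp \mathbf{1}$; applying $M$ preserves the $\mathbf{1}$-component and shrinks $p_i^\perp$ by at most $\tilde\lambda$, so projecting onto $A_{i+1}$ and summing coordinates via $\mathbf{1}^T D_{i+1}$ yields, after Cauchy--Schwarz against the orthogonalized indicator $\mathbf{1}_{A_{i+1}} - \mu\mathbf{1}$ (of $\ell_2$ norm $\sqrt{\mu(1-\mu)|V(G)|}$),
\[
s_{i+1} \;\leq\; \mu\, s_i \;+\; \tilde\lambda\, \sqrt{\mu(1-\mu)|V(G)|}\; \|p_i^\perp\|_2.
\]
The pointwise bound $p_i(v) \leq 1/|V(G)|$ (the walk on a regular graph has uniform marginals from the uniform start) controls $\|p_i^\perp\|_2$ via $\|p_i\|_2^2 \leq s_i/|V(G)|$, and combined with the sharpened form of Lemma \ref{expander_mixing_lemma}, namely $|\langle \mathbf{1}_S, N\mathbf{1}_T\rangle| \leq \tilde\lambda\sqrt{|S||T|(1-\mu_S)(1-\mu_T)}$ with $N := M - \mathbf{1}\mathbf{1}^T/|V(G)|$, applied to the mean-on-$A_i$ component of $p_i$, collapses the estimate to the claimed contraction. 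Iterating from $s_1 = |A_1|/|V(G)| \leq \mu$ then produces $s_k \leq (\mu + \tilde\lambda(1-\mu))^k$.

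\textbf{Main obstacle.} Extracting the tight factor $\mu + \tilde\lambda(1-\mu)$ per step, rather than the weaker $\mu + \tilde\lambda\sqrt{1-\mu}$ or $\sqrt{\mu} + \tilde\lambda$ that fall out of a naive Cauchy--Schwarz with $\|p_i^\perp\|_2 \leq \sqrt{s_i(1-s_i)/|V(G)|}$. Recovering the sharp $(1-\mu)$ in the spectral correction requires splitting $p_i = (s_i/|A_i|)\mathbf{1}_{A_i} + r_i$ into its mean-on-$A_i$ part (whose contribution to $\mathbf{1}_{A_{i+1}}^T N p_i$ is exactly $\tilde\lambda(1-\mu) s_i$ by the sharpened expander mixing lemma) and its zero-mean fluctuation $r_i$ (of smaller $\ell_2$ norm, yielding only a lower-order correction that can be absorbed using $s_i \leq \mu$). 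This is precisely the ``slight modification'' of Vadhan's Theorem 4.17 referenced in the lemma statement.
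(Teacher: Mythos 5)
The paper gives no in-house proof of this lemma; it defers to Theorem 4.17 of Vadhan's survey, whose argument is an $\ell_2$ operator-norm chain. Your proposal instead tries to establish a per-step contraction of the scalar survival probability, $s_{i+1}\leq(\mu+\tilde\lambda(1-\mu))\,s_i$ with $\tilde\lambda:=\lambda/\Delta(G)$, and this is where it breaks: that inequality is false in general. Take $A_1=\{v\}$ and $A_2\supseteq N(v)$ with $|A_2|=\Delta(G)$, so $\mu=\Delta(G)/|V(G)|$; then $s_2=s_1=1/|V(G)|$, while $\mu+\tilde\lambda(1-\mu)<1$. The quantitative symptom sits exactly where you flag the ``main obstacle'': the zero-mean fluctuation $r_i$ contributes up to $\tilde\lambda\,\|\mathbf{1}_{A_{i+1}}\|_2\,\|r_i\|_2\leq\tilde\lambda\sqrt{\mu\, s_i}$ to $s_{i+1}$, and $\sqrt{s_i}$ is of \emph{larger} order than $s_i$ once $s_i$ is small --- which it already is at $i=1$, since $s_1\leq\mu$. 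So the fluctuation is not a lower-order correction and cannot be ``absorbed using $s_i\leq\mu$''; no splitting of $p_i$ rescues a contraction of the $\ell_1$ mass alone.

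The quantity that does contract at the sharp rate is the $\ell_2$ norm of the survival vector. Writing $N=|V(G)|$, $P_i$ for the coordinate projection onto $A_i$, and $J$ for the all-ones matrix, one has $p_k=P_kMP_{k-1}\cdots MP_1(\mathbf{1}/N)$, and the key fact is the operator-norm bound $\|P_{i+1}MP_i\|\leq\mu+\tilde\lambda(1-\mu)$. This follows from the decomposition $M=(1-\tilde\lambda)\frac{J}{N}+\tilde\lambda E$ with $\|E\|\leq 1$ (here $E$ acts as the identity on $\mathbf{1}$ and as $M/\tilde\lambda$ on $\mathbf{1}^{\perp}$), because $\|P_{i+1}\frac{J}{N}P_i\|=\sqrt{|A_i|\,|A_{i+1}|}/N\leq\mu$ and $\|P_{i+1}EP_i\|\leq1$. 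Chaining gives $\|p_k\|_2\leq(\mu+\tilde\lambda(1-\mu))^{k-1}\sqrt{\mu/N}$, and Cauchy--Schwarz against $\mathbf{1}_{A_k}$ recovers $s_k\leq\mu\,(\mu+\tilde\lambda(1-\mu))^{k-1}\leq(\mu+\tilde\lambda(1-\mu))^k$. The ``slight modification'' the paper alludes to is only that the sets vary with $i$, which the operator chain absorbs with no change. Your Cauchy--Schwarz against $\mathbf{1}_{A_{i+1}}-\mu\mathbf{1}$ and the sharpened mixing lemma are the right ingredients for a \emph{single} step, but for a $k$-step walk you must propagate the full vector $p_i$ (its $\ell_2$ norm), not just its sum.
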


In order to apply these lemmas, we need to know the spectral ratios for the three graphs $G^\prime_w$, $G_w$, and $B_{n,2}$.

The first proposition shows that we cannot apply the expander mixing lemmas to $G^\prime_w$.
\begin{proposition}\label{full_graph_spectrum_prop}
    Let $R(G^\prime_w)$ be the spectral parameter of $G^\prime_w$ (which implicitly depends on $n$). Then $\log(R(G^\prime_w)) = -1 + o(1)$.
\end{proposition}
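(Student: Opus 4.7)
The plan is to recognise $G^\prime_w$ as the symplectic dual polar graph of type $C_{n-1}(2)$ and then read off its spectrum from the literature. Since $w$ is a $1$-dimensional isotropic subspace of $\mathbb{Z}_2^{2n}$, I would first establish that $x\mapsto x/w$ gives a bijection between maximal isotropic subspaces of $\mathbb{Z}_2^{2n}$ containing $w$ and maximal isotropic subspaces of the quotient symplectic space $w^\perp/w$, which has dimension $2(n-1)$ over $\mathbb{F}_2$. Because $w\subseteq x\cap y$ for any two vertices of $G^\prime_w$, the intersection dimensions satisfy $\dim(x\cap y)=\dim((x/w)\cap(y/w))+1$, so the distance-$1$ adjacency on $G^\prime_w$ pulls back exactly to nearest-neighbour adjacency in the dual polar graph of $C_{n-1}(2)$.

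Next, I would invoke the known spectrum of the $C_d(q)$ dual polar graph (e.g.\ from Brouwer--Cohen--Neumaier). With $q=2$ and $d=n-1$, the eigenvalues take the form $\theta_j=2^{n-j}-2^j-1$ for $j=0,1,\dots,n-1$, and in particular $\theta_0=2^n-2$ is the valency. A direct comparison shows $|\theta_{n-1}|=2^{n-1}-1$ is the largest non-principal absolute value, so that (reading $R$ as the spectral ratio, for which the claim is meaningful) $|\theta_{n-1}|/\theta_0=(2^{n-1}-1)/(2^n-2)=1/2$ exactly for every $n\ge 2$. Taking $\log_2$ then yields the desired $-1$ exactly, which is stronger than the stated $-1+o(1)$.

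The main obstacle will be establishing the identification with $C_{n-1}(2)$ cleanly, tracking the symplectic-form conventions when passing to the quotient $w^\perp/w$ and confirming that the intersection-dimension relation is preserved. Once this is in place the eigenvalue formula may simply be cited, and checking that $|\theta_{n-1}|$ dominates the other non-principal eigenvalues reduces to the one-line inequality $|2^{n-j}-2^j-1| \le 2^{n-1}-1$ for $1\le j\le n-1$.
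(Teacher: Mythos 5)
Your proposal is correct and follows essentially the same route as the paper: identify $G^\prime_w$ with the dual polar graph $C_{n-1}(2)$ via the quotient $w^\perp/w$ (the paper's Propositions \ref{lattice_isomorphism_prop} and \ref{dual_polar_graph_prop}), cite the known eigenvalue formula, and compute the spectral ratio, including the same observation that $R$ must be read as the spectral ratio for the statement to make sense. In fact your eigenvalue comparison is the more careful one --- the extreme negative eigenvalue $-(2^{n-1}-1)$ dominates the non-principal spectrum, giving a ratio of exactly $\frac{1}{2}$, whereas the paper's proof asserts that the $k=1$ eigenvalue has the greater magnitude (its magnitude is $2^{n-1}-3$, which is smaller); the conclusion $\log(R)=-1+o(1)$ is unaffected either way.
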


However, we can apply the expander mixing lemmas to the other two families of graphs.
\begin{proposition}\label{half_graph_spectrum_prop}
    Let $R(G_w)$ be the spectral ratio of $G_w$. Then $\log (R(G_w)) = -\frac{1}{2}n$.
\end{proposition}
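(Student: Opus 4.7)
The plan is to identify $G_w$ with a distance-$i$ graph of a classical dual polar scheme and then read off its spectrum from the known $P$-matrix.

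First I would set up a quotient identification. Since $w$ is $1$-dimensional and isotropic, $w \subseteq w^\perp$ and the quotient $w^\perp/w$ is a non-degenerate symplectic space of dimension $2(n-1)$. Any maximal isotropic $x \in \mathcal{L}^n_n$ containing $w$ is forced into $w^\perp$ (because $x$ is isotropic and $w \subseteq x$), so $x/w$ is a maximal isotropic of $w^\perp/w$, giving a bijection $\mathcal{L}^n_n \cap \uparrow w \cong \mathcal{L}^{n-1}_{n-1}$. A routine dimension count shows $n - \dim(x \cap y) = (n-1) - \dim((x/w) \cap (y/w))$, so this bijection identifies the distance from Definition \ref{pauli_distance_def} restricted to $\mathcal{L}^n_n \cap \uparrow w$ with the dual polar graph distance on $\mathcal{L}^{n-1}_{n-1}$. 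Consequently $G_w$ is the distance-$(n/2)$ graph of the symplectic dual polar graph of rank $n-1$ over $\mathbb{F}_2$.

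Second, this graph is a relation in the symplectic dual polar association scheme, whose spectrum is classical: the eigenvalue $P_i(j)$ of the distance-$i$ relation on the $j$-th primitive idempotent is (up to a normalizing power of $q$) an evaluation at $q=2$ of an affine $q$-Krawtchouk polynomial, cf.\ \cite{Bro89} and \cite{stanton_1984}. Plugging in $i = n/2$, the degree of $G_w$ is $P_{n/2}(0)$, and the nontrivial eigenvalues are $P_{n/2}(1), \dots, P_{n/2}(n-1)$, each expressible in closed form as an alternating sum of products of $q$-binomial coefficients and powers of $2$. I would then compute the ratios $|P_{n/2}(j)|/P_{n/2}(0)$ explicitly and show that each is bounded by $2^{-n/2}$, with equality attained to leading order at some extreme value of $j$; this directly yields the claimed $\log R(G_w) = -n/2$.

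The main obstacle is this last comparison: $P_{n/2}(j)$ is an oscillating sum, so identifying which $j$ maximizes $|P_{n/2}(j)|$ requires careful bookkeeping of cancellations across the $n$ terms. I expect the cleanest route is to use the three-term recurrence for the $q$-Krawtchouk polynomials to track the ratio $P_{n/2}(j+1)/P_{n/2}(j)$ monotonically, which should pin down the extremum without having to perform the full saddle-point analysis of \cite{stanton_1984}. The reduction to the known dual polar scheme in the first step is straightforward; essentially all the work is in this asymptotic accounting.
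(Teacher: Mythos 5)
Your first step---passing to the quotient $w^\perp/w$ to identify $G_w$ with the distance-$\frac{n}{2}$ relation of the rank-$(n-1)$ symplectic dual polar scheme over $\mathbb{F}_2$---is exactly the reduction the paper performs (Propositions \ref{lattice_isomorphism_prop} and \ref{dual_polar_graph_prop}), and it is carried out correctly. The divergence, and the gap, is in the second step. The paper does not attempt to analyze the oscillating $q$-Krawtchouk sums at all: it simply cites known results of \cite{Bro18} (Proposition $4.1$ for the existence of the eigenvalue $(-1)^{\frac{n}{2}}{n-1 \choose \frac{n}{2}}_2\, 2^{\frac{n}{2}\choose 2}$, and Proposition $6.4$(iii) for the fact that its absolute value \emph{is} the spectral parameter), after which the ratio against the degree ${n-1\choose \frac{n}{2}}_2\, 2^{\frac{n}{2}+1\choose 2}$ collapses to exactly $2^{-\frac{n}{2}}$ because the $q$-binomial factors cancel.

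Your proposal leaves precisely this identification of the second-largest absolute eigenvalue as a plan rather than a proof. You acknowledge it as ``the main obstacle,'' and your suggested remedy---tracking $P_{n/2}(j+1)/P_{n/2}(j)$ ``monotonically'' via the three-term recurrence---is not obviously workable: the nontrivial eigenvalues of the distance-$i$ relations in dual polar schemes alternate in sign and their absolute values are not monotone in $j$ in general, which is why pinning down the extremal idempotent is the actual technical content of the result in \cite{Bro18}. Two further points deserve care if you do pursue the computation yourself: (i) the statement asserts an exact equality $\log R(G_w) = -\frac{n}{2}$, not just an upper bound, so ``equality attained to leading order'' is not quite enough---you need the extremal eigenvalue in closed form; and (ii) the three-term recurrence in the dual polar scheme runs over the distance index $i$ at fixed $j$, not over $j$ at fixed $i$, so the recurrence you want (in $j$) is the one for the dual eigenvalues, and one must check it actually controls sign-free ratios. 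Either carry out that analysis in full or, as the paper does, cite the classification of extreme eigenvalues of dual polar graphs directly.
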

\begin{proposition}\label{bipartite_spectrum_prop}
    Let $R(B_{n,2})$ be the spectral ratio of $B_{n,2}$. Then
$\log(R(B_{n,2})) \leq -\frac{1}{2}n+O(1)$.
\end{proposition}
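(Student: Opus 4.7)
The plan is to reduce the computation to eigenvalues of a matrix in the Bose--Mesner algebra of the symplectic dual polar scheme on $\mathcal{L}^n_n$, then invoke the $q$-Krawtchouk eigenvalue formulas cited in \cite{stanton_1984}.

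First, I would write the adjacency matrix of $B_{n,2}$ in block form
\[
A(B_{n,2}) = \begin{pmatrix} 0 & N \\ N^{T} & 0 \end{pmatrix},
\]
where $N$ is the $|\mathcal{L}^n_n|\times|\mathcal{L}^n_2|$ inclusion matrix with $N_{x,y}=1$ iff $y\subset x$. The nonzero eigenvalues of this matrix come in pairs $\pm\sqrt{\mu}$ as $\mu$ ranges over the positive eigenvalues of $NN^{T}$, so it suffices to bound the second-largest eigenvalue of $NN^T$ by $2^{-n+O(1)}$ times its Perron eigenvalue.

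Next, since every subspace of an isotropic subspace is isotropic, $(NN^T)_{x,x'}$ is just the number of 2-dimensional subspaces of $x \cap x'$, namely $\binom{n-d(x,x')}{2}_{2}$ (a Gaussian binomial at $q=2$). This depends only on the Pauli distance $d(x,x')$ of Definition \ref{pauli_distance_def}, so $NN^T$ lies in the Bose--Mesner algebra of the symplectic dual polar scheme on $\mathcal{L}^n_n$; writing $A_i$ for the distance-$i$ matrix of this scheme, we obtain
\[
NN^T = \sum_{i=0}^{n-2}\binom{n-i}{2}_{2} A_i.
\]
The eigenvalue of $NN^T$ on the $j$-th common eigenspace of the scheme is therefore $\mu_j = \sum_i \binom{n-i}{2}_2 P_i(j)$, where $P_i(j)$ is the $q$-Krawtchouk polynomial at $q=2$. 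The Perron eigenvalue $\mu_0$ is the common row sum, which equals the product of the two vertex degrees of $B_{n,2}$; using $|\mathcal{L}^n_n| = \prod_{i=1}^n (2^i+1)$ and counting maximal isotropic subspaces through a fixed 2-dimensional one, one finds $\log\mu_0 = n^2/2 + O(n)$.

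Finally I would use the explicit $q$-Krawtchouk formulas to estimate $\mu_j$ for $j \geq 1$ and show $|\mu_j|\leq 2^{-n+O(1)}\mu_0$; the factor $2^{-n}$ arises because the leading exponential-in-$n$ contribution survives only on the trivial eigenspace. Taking square roots gives $\log R(B_{n,2})\leq -n/2 + O(1)$, which is the claim. The main obstacle is verifying this bound uniformly over all $j \geq 1$ rather than just at $j=1$: the higher $q$-Krawtchouk polynomials have alternating signs, so one must control possible cancellations in $\mu_j = \sum_i \binom{n-i}{2}_2 P_i(j)$ term-by-term rather than relying on a single dominant term.
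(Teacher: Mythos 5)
Your proposal follows essentially the same route as the paper: block-decompose $A(B_{n,2})$, pass to $NN^T$, recognize it as $\sum_i \binom{n-i}{2}_2 A_i$ in the Bose--Mesner algebra of the dual polar scheme, and compare the nontrivial eigenvalues $\mu_h=\sum_i\binom{n-i}{2}_2 P_{h,i}$ to the Perron eigenvalue. The one step you flag as the main obstacle --- controlling sign cancellations in the $q$-Krawtchouk sums uniformly over $h\geq 1$ --- is resolved in the paper more bluntly than you anticipate: it simply applies the triangle inequality term by term and bounds $\max_{h\neq 0}|P_{h,i}|$ by the known spectral parameter $\binom{n}{i}_2 2^{\binom{i}{2}}$ of each distance-$i$ graph $C^i_n(2)$ (from Brouwer's eigenvalue tables), which is a factor $2^{-i}$ below the degree $P_{0,i}=\binom{n}{i}_2 2^{\binom{i+1}{2}}$. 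Since the resulting sum is dominated by the terms with $i$ near $n-3$, and $\Delta^2$ is bounded below by the single $i=n-3$ term of the $h=0$ sum, one gets $\lambda^2/\Delta^2 \leq C\,2^{-(n-3)}$ without any cancellation analysis; no per-$j$ asymptotics of the $q$-Krawtchouk polynomials are needed beyond their maximum modulus on nontrivial eigenspaces.
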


Proofs can be found in Appendix \ref{eigenvalue_conjectures_section}.
\subsection{Nonlocal Games}
In this subsection, we give some background information about nonlocal games \cite{Brunner_2014}. They will be critical tools for our inconsistency result.

\begin{definition}[Nonlocal game]
        A (2-player) nonlocal game consists of a collection of questions for each player, $X,Y$ and a set of allowed answers, $A$ and $B$.
        
        A referee chooses $(x,y)\in X\times Y$ according to a probability distribution $\pi:X\times Y \to [0,1]$. Given $x$ and a resource shared with Bob, Alice provides a response $a\in A$ without knowing $y$. Similarly, Bob provides a response $b\in B$ given access only to $y$ and a shared resource with Alice.
        
        A (possibly probabilistic) predicate $V(a,b,x,y) \in [0,1]$ describes whether Alice and Bob win ($V(a,b,x,y)=1$) or lose ($V(a,b,x,y)=0$) given their questions and answers.
\end{definition}

\begin{remark}
    All of the nonlocal games in this paper will have $X=Y$, with both being a set of Pauli measurements at a given level, and $A=B$ being the set of  allowed outcomes on those measurements. Though technically each measurement has its own set of outcomes, measurements at a given level all have the same number of outcomes. This means that we can number them arbitrarily and use this numbering as a canonical set of responses, $A$, independent of the particular $x\in X$ that Alice is given as a question.
\end{remark}

\begin{definition}[Strategy types]
     Let $Z$ be a nonlocal game with question sets $X,Y$ and answer sets $A,B$.
     \begin{enumerate}
     
     \item A local ($loc$) strategy is one in which Alice and Bob's shared resource consists of classical information.
     \item A deterministic strategy ($det$) is one in which Alice and Bob's answers are functions of  their questions.
     \item If $X=Y$, a synchronous strategy ($syn$)  \cite{Dykema2019} is a deterministic strategy where Alice and Bob adopt the same strategy.
     \item A quantum strategy $(qtm)$ allows Alice and Bob to share parts of an entangled quantum state as their shared resource. They may perform local measurements and use the results of their measurement to decide on their responses.
     \end{enumerate}
     
     For each of the $4$ types above $r\in \{loc,det,syn,qtm\}$, we define $Val_{r}(Z)$ to be the optimal probability of winning when the strategies come from their respective sets.
\end{definition}

It is well-known that $Val_{loc}(Z)=Val_{det}(Z)$ for any game $Z$. Obviously, $Val_{syn}(Z) \leq Val_{det}(Z) \leq Val_{qtm}(Z)$ because every synchronous strategy is also a deterministic strategy and every deterministic strategy is also a quantum strategy.

Observe that a deterministic strategy can be described by a pair of contextual hidden variable assignments. For a synchronous strategy, Alice and Bob adopt the same strategy, so only one contextual hidden variable theory is needed.

The following construction allows us to create new games from old ones by playing them in parallel.

\begin{definition}[Parallel repetition]\label{parallel_repetition_def}
     Let $Z$ be a nonlocal game with question sets $X$ and $Y$, question distribution $\pi$ and answer sets $A$ and $B$. For $k\in \mathbb{N}$, define $Z^{\otimes k}$ to be the game whose questions are $k$-tuples of questions from $Z$, $X^k$ and $Y^k$. The questions are distributed according to the product distribution $\pi^k$, meaning that the question for each coordinate are drawn uniformly at random according to $\pi$.
     
     Alice's responses are $k$-tuples $A^k$. That is, she provides one response for each coordinate of her questions, and her $i^{th}$ response must be a valid response in $Z$ to her $i^{th}$ question. Bob's responses in $Z^{\otimes k}$ are defined similarly as $k$-tuples, $B^k$.
     
     Alice and Bob win the parallel repetition game $Z^{\otimes k}$ if they win on each coordinate according to the predicate for $Z$.
\end{definition}

The parallel game $Z^{\otimes k}$ is subtly different from playing $Z$ $k$-times independently in that Alice and Bob are able to coordinate their responses within a list of $k$ questions posed in parallel. This subtlety makes the following theorem non-obvious \cite{lovasz92}.

\begin{lemma}[Parallel repetition theorem]\label{parallel_repetition_lemma}(\cite{Holenstein_2009}, Theorem $4$.)
    Let $Z$ be a nonlocal game with answer sets $A$ and $B$. Then
    
    \[Val_{loc}(Z^{\otimes k}) \leq \left(1-\frac{(1-Val_{loc}(Z))^3}{6000}\right)^{\frac{k}{\log(|A||B|)}}.\]
\end{lemma}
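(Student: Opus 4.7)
The plan is to prove the theorem by contradiction via an information-theoretic reduction: assuming a local strategy for $Z^{\otimes k}$ wins with probability $\delta$ substantially larger than the stated bound, I would construct a local strategy for a single copy of $Z$ that wins with probability strictly greater than $Val_{loc}(Z)$, contradicting optimality. Since $Val_{loc}=Val_{det}$, I may assume the $k$-fold strategy is a pair of deterministic functions $f\colon X^k\to A^k$ and $g\colon Y^k\to B^k$, and I denote by $W_j$ the event that this strategy wins coordinate $j$.

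The single-copy reduction works as follows. Given inputs $(x,y)\sim \pi$, Alice and Bob use public randomness to pick a subset $T\subset\{1,\dots,k\}$ and a coordinate $i^*\notin T$; they then sample $(X_T,Y_T)$ from the distribution $\pi^{\otimes T}$ conditioned on the joint-win event $W_T\coloneqq \bigcap_{j\in T} W_j$ and on $(X_{i^*},Y_{i^*})=(x,y)$; they fill in the remaining coordinates from independent $\pi$-samples, evaluate $f$ and $g$, and output the $i^*$-th coordinate. The point is that for a typical large $T$, an averaging argument over $i^*\notin T$ forces the conditional single-coordinate win probability to exceed $Val_{loc}(Z)+\Omega((1-Val_{loc}(Z))^{3})$ unless $\delta$ is already very small; the $\log(|A||B|)$ factor in the exponent will arise because each sampled response carries at most $\log(|A||B|)$ bits of new information about the joint distribution.

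The main obstacle, and what makes this theorem famous, is that conditioning on $W_T$ introduces correlations between $X_j$ and $Y_j$ for $j\notin T$, so neither player can locally sample from the required conditional distribution using only private randomness. I would resolve this with Holenstein's \emph{dependency-breaking variables}: for each coordinate $j\notin T\cup\{i^*\}$ a public coin chooses whether $X_j$ or $Y_j$ is revealed to both players, after which the remaining uncertainty about the hidden coordinates approximately factorizes between Alice and Bob. The technical core is bounding
\[
\mathbb{E}_{i^*}\, \mathrm{KL}\bigl(P_{X_{i^*}Y_{i^*}\mid W_T, \text{public}} \,\|\, \pi\bigr) \;\leq\; \frac{\log(1/\delta)\cdot \log(|A||B|)}{|\overline{T}|}
\]
via the chain rule for KL divergence together with $\mathrm{KL}(P_{f(X)g(Y)\mid W_T}\,\|\,\pi^{\otimes k})\geq \log(1/\delta)$. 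Pinsker's inequality converts this to an average statistical distance of $O(\sqrt{\log(1/\delta)\log(|A||B|)/k})$ between the true conditional distribution and a product, and a correlated-sampling lemma lets Alice and Bob reproduce the conditional distribution up to that error using only shared randomness. Calibrating the constants so that this statistical error is at most $\tfrac{1}{2}(1-Val_{loc}(Z))$ gives the explicit bound $\delta\leq (1-(1-Val_{loc}(Z))^3/6000)^{k/\log(|A||B|)}$ stated in the lemma.
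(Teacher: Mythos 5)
The paper does not prove this lemma at all: it is imported as a black-box citation of Holenstein's Theorem~4, so there is no internal proof to compare against. Your proposal is an outline of exactly that cited argument (Holenstein's simplification of Raz's parallel repetition theorem), and the skeleton you describe --- reduce to a deterministic strategy, embed a single instance at a random coordinate $i^*$, condition on winning a set $T$ of other coordinates, break the dependencies introduced by that conditioning with publicly revealed variables, control the damage by an information-theoretic accounting, and finish with correlated sampling --- is the correct shape of the known proof.

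That said, as a proof it has genuine gaps rather than just omitted routine details, because the steps you name are themselves the hard theorems. First, the correlated sampling lemma (two parties holding distributions at small statistical distance can jointly sample from a common distribution using only shared randomness, with failure probability proportional to that distance) is essential and nontrivial; you invoke it without proof. Second, the claim that revealing one of $X_j$ or $Y_j$ per coordinate makes the conditional distribution ``approximately factorize'' is the technical heart of Holenstein's paper and requires a careful definition of the dependency-breaking variables (which must also include the answers on $T$ --- this is where $\log(|A||B|)$ actually enters) together with a delicate conditioning analysis; asserting it does not establish it. Third, your displayed information bound is off: the standard fact is that conditioning on an event of probability $\delta$ increases relative entropy by \emph{at most} $\log(1/\delta)$, i.e.\ $\mathrm{KL}(P_{XY\mid W_T}\,\|\,\pi^{\otimes k})\leq\log(1/\delta)$, which the chain rule then spreads over the unconditioned coordinates; as written you have the inequality reversed and applied to the distribution of answers compared against a distribution on questions, which is a type mismatch. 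None of this affects the paper, which correctly treats the result as external; but your writeup should be regarded as a faithful summary of the reference, not an independent derivation.
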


\section{Main results}\label{main_results_section}

In this section, we derive our main results. Theorem \ref{incompletness_thm} shows that partial hidden variable assignments must be very incomplete, and Theorem \ref{inconsistency_thm} shows that contextual hidden variable assignments must be very inconsistent. These results can be viewed as a generalization of the basic fact that the Pauli measurements do not have an assignment of variables that is both complete and consistent \cite{mermin93}. We recover this result as Corollary \ref{incomplete_inconsistent_cor}.

\subsection{Incompleteness Bound}
    In this section, we describe our incompleteness result. Our strategy is to work towards a bound on $Pval(\mathcal{L}^n)$ by calculating $Pval(\downarrow \mathcal{L}^n_2)$, then apply structure from spectral graph theory to arrive at an asymptotic bound as $n\to \infty$.

    With the help of Proposition \ref{lin_antilin} in Appendix \ref{Pauli_measurement_section}, we can formulate $Pval(\mathcal{L}^2)$ as a MAXSAT instance and calculate it directly. Our result agrees with \cite{Cabello_2010}.
    \begin{proposition}\label{L22_prop}
        $Pval(\mathcal{L}^2)=\frac{12}{15}=\frac{4}{5}.$
    \end{proposition}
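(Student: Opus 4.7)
The plan is to use Proposition~\ref{lin_antilin} to reduce the computation to an explicit small MAX-XOR instance. By that proposition, an outcome of a maximal context $C\in\mathcal{L}^2_2$ is determined by a sign assignment to the three non-identity Paulis generating $C$, subject to the single parity relation inherited from the Pauli group (the sign of the product of the three generators). Consistency of two outcomes is equivalent to agreement on their common sub-measurements, so a consistent partial hidden variable assignment with domain of size $k$ is precisely a global sign function $v$ on the $15$ nontrivial cosets of $\mathcal{P}(2)$ (modulo scalars) whose restriction to each of $k$ chosen Lagrangians satisfies that Lagrangian's parity constraint. Hence $Pval(\mathcal{L}^2)$ equals $\max_v |\{C : v \text{ satisfies } C\}|/15$.

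The resulting instance has $15$ boolean variables and $15$ three-variable XOR constraints, with a symmetric $3$-regular incidence structure (each Pauli lies in three contexts and each context contains three Paulis). For the lower bound $Pval(\mathcal{L}^2)\geq 4/5$ I would exhibit an explicit $v$ violating exactly three contexts; a natural starting point is the sign pattern induced by the eigenvalues of a stabilizer state (which fixes all contexts containing its stabilizer generators), extended arbitrarily on the remaining Paulis, with the three violations verified by direct enumeration.

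The upper bound $Pval(\mathcal{L}^2)\leq 4/5$ is where the main work lies. The elementary obstruction is Mermin's magic square: there is a $3\times 3$ arrangement of Paulis yielding six contexts whose parity constraints multiply to $-1$ (each of the nine Paulis appearing twice), so any $v$ violates at least one of these six, giving only $Pval(\mathcal{L}^2)\leq 14/15$. To sharpen to $12/15$, I would enumerate the Mermin squares sitting inside $\mathcal{L}^2_2$, using the fact that the Clifford group $\mathcal{C}_2$ acts transitively on $\mathcal{L}^2_2$ to classify them up to symmetry, and then argue via an overlap/counting argument on the induced violation pattern that every $v$ must violate at least three contexts. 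The main obstacle is precisely this combinatorial step: Mermin's argument only produces one violation per square, and bundling squares to force three simultaneous violations requires careful bookkeeping of how the squares intersect. Since the instance has only $15$ variables and a large automorphism group, a finite case analysis modulo the Clifford action suffices to certify the bound, in agreement with the computation of \cite{Cabello_2010}.
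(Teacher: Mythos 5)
Your reduction is exactly the paper's: Proposition \ref{lin_antilin} turns $Pval(\mathcal{L}^2)$ into a MAX-XOR instance on the $15$ nonzero points and $15$ isotropic lines of $\mathbb{Z}_2^4$ (the generalized quadrangle $W(3,2)$, i.e.\ the doily), and the paper then simply ``calculates it directly,'' which is also your stated fallback; that part of your write-up is correct, including the observation that a consistent partial assignment of maximal size is the same thing as a global sign function maximizing the number of satisfied parity constraints. Where you diverge is in trying to replace the brute-force upper bound by a structural Mermin-square argument, which the paper does not attempt. The good news is that the ``overlap/counting'' step you flag as the main obstacle does close cleanly: the doily contains exactly $10$ grids ($3\times 3$ Mermin--Peres squares), each of the $15$ lines lies on exactly $4$ of them (since $10\cdot 6=15\cdot 4$), and once one verifies that every grid is magic, any set of violated lines must hit all $10$ grids and hence has size at least $\lceil 10/4\rceil = 3$, giving $Pval(\mathcal{L}^2)\leq 12/15$. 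This buys a human-checkable certificate at the cost of verifying magicness of the $10$ grids, which the paper's computation sidesteps. Two caveats on your lower bound: an arbitrary extension of a stabilizer state's sign pattern will generally not land on only three violations, so you still need to exhibit one concrete assignment satisfying $12$ lines (a small finite search); and the naive Mermin-square bound you state as a warm-up gives $\leq 14/15$ only after you know at least one magic square exists in $\mathcal{L}^2_2$, which is part of the same finite verification.
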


    The following proposition is proven in Appendix \ref{supp_proofs_section} using an inductive argument.
    \begin{proposition}\label{L22_lemma_prop}
        $Pval(\downarrow \mathcal{L}^n_2)\leq \frac{4}{5}$.
    \end{proposition}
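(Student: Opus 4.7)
The plan is to reduce the statement for general $n$ to the base case $n=2$ established in Proposition \ref{L22_prop} by averaging over copies of $\mathcal{L}^2$ sitting naturally inside $\mathcal{L}^n$. Every $4$-dimensional non-degenerate symplectic subspace $V \subset \mathbb{Z}_2^{2n}$ gives such an embedding: the isotropic subspaces of $V$ form a semilattice $\mathcal{L}^V$ isomorphic to $\mathcal{L}^2$, whose maximal elements $\mathcal{L}^V_2$ are precisely the $2$-dimensional isotropic subspaces of $\mathbb{Z}_2^{2n}$ that happen to lie inside $V$. A partial hidden variable assignment $f$ on $\downarrow \mathcal{L}^n_2$ restricts to $f|_V$ on $\downarrow \mathcal{L}^V_2$: for $U_1, U_2 \in \mathcal{L}^V_2$, the meet $U_1 \wedge U_2 = U_1 \cap U_2$ is computed identically in $\mathcal{L}^V$ and in $\mathcal{L}^n$, so consistency in the ambient lattice forces consistency in the restricted one.

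Next I would turn to symmetry to make the covering uniform. By Witt's extension theorem, the symplectic group $Sp(2n, \mathbb{F}_2)$ acts transitively on the $2$-dimensional isotropic subspaces of $\mathbb{Z}_2^{2n}$; consequently, each $U \in \mathcal{L}^n_2$ is contained in the same number $k$ of $4$-dim non-degenerate symplectic subspaces $V$. Letting $N$ be the total number of such $V$, double counting gives $N \cdot |\mathcal{L}^2_2| = k \cdot |\mathcal{L}^n_2|$. Applying Proposition \ref{L22_prop} to each restriction $f|_V$ and summing over $V$,
\[
k \cdot |Dom(f)| \;=\; \sum_V |Dom(f) \cap \mathcal{L}^V_2| \;\leq\; N \cdot \tfrac{4}{5}|\mathcal{L}^2_2| \;=\; \tfrac{4}{5} \cdot k \cdot |\mathcal{L}^n_2|,
\]
so $Pval(f) \leq \tfrac{4}{5}$. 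Taking the supremum over all partial hidden variable assignments $f$ then yields the claim.

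The main obstacle is the uniformity step: verifying that every $U \in \mathcal{L}^n_2$ lies in the same number of $4$-dim non-degenerate symplectic subspaces. This is what allows the averaging to be clean rather than merely an inequality in one direction, and it relies on the transitive action of $Sp(2n, \mathbb{F}_2)$ on $2$-dimensional isotropic subspaces (a standard consequence of Witt's extension theorem). The remaining steps—that restriction preserves consistency, and the double-counting identity relating $N, k, |\mathcal{L}^n_2|, |\mathcal{L}^2_2|$—are routine once this transitivity is in hand.
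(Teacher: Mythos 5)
Your proof is correct, and it uses the same essential mechanism as the paper's --- averaging over non-degenerate symplectic subspaces via the transitivity of $Sp(2n,\mathbb{Z}_2)$, anchored by the computation $Pval(\mathcal{L}^2)=\frac{4}{5}$ --- but with a different decomposition. The paper works with ``square subspaces,'' i.e.\ $2(n-1)$-dimensional non-degenerate symplectic subspaces of $\mathbb{Z}_2^{2n}$, each of whose isotropic-subspace semilattices is isomorphic to $\mathcal{L}^{n-1}$; a pigeonhole/averaging step shows that if more than $\frac{4}{5}$ of $\mathcal{L}^n_2$ were assigned, some square subspace would violate the bound for $\mathcal{L}^{n-1}_2$, and the proof closes by induction down to the base case $n=2$. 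You instead average in one shot over $4$-dimensional non-degenerate symplectic subspaces, each carrying a copy of $\mathcal{L}^2$, and the double count $k\,|Dom(f)| = \sum_V |Dom(f)\cap \mathcal{L}^V_2| \leq \frac{4}{5}\,k\,|\mathcal{L}^n_2|$ gives the bound directly. Your route avoids the induction entirely and makes the uniform-covering requirement explicit (each $U\in\mathcal{L}^n_2$ lies in the same number of $V$'s, via Witt extension), which the paper asserts only for its square subspaces; the paper's inductive route has the mild advantage that it only ever needs the single fact that a codimension-$2$ symplectic subspace carries a sub-system of $n-1$ qubits. Both arguments rely on the same unstated-but-necessary point that the restriction of $f$ to the sub-semilattice is again a partial hidden variable assignment, i.e.\ that outcomes and consistency for measurements inside $V$ are computed identically in $\mathcal{L}^V$ and in $\mathcal{L}^n$; you flag this, and it holds because meets and the down-sets below them are the same in both posets.
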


    \begin{corollary}\label{incomplete_inconsistent_cor}
    $\mathcal{L}^n$ does not have a hidden variable assignment that is both complete and consistent.
    \end{corollary}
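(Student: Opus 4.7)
The plan is to reduce the corollary to Proposition \ref{L22_lemma_prop} by coarse-graining outcomes from $\mathcal{L}^n_n$ down to $\mathcal{L}^n_2$. Assume $n\geq 2$ and, for contradiction, that $f:\mathcal{L}^n_n \to \text{outcomes}$ is a complete, consistent hidden variable assignment for $\mathcal{L}^n$. The maximal elements of $\mathcal{L}^n$ are precisely the Lagrangian subspaces in $\mathcal{L}^n_n$, so $f$ assigns an outcome $f(x)$ to every such $x$. From $f$ I would construct a candidate partial hidden variable assignment $g$ for $\downarrow \mathcal{L}^n_2$ whose domain is all of $\mathcal{L}^n_2$: for each $w\in \mathcal{L}^n_2$ pick some $x\in \mathcal{L}^n_n$ with $w\leq x$ (such an $x$ exists because every isotropic subspace of $\mathbb{Z}_2^{2n}$ extends to a Lagrangian), and define $g(w)$ to be the outcome that $f(x)$ inherits onto $w$ via Definition \ref{consistency_def}.

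The only step requiring care is to verify that $g$ is well-defined and that its image is consistent; both facts fall out of the consistency of $f$. If $x_1,x_2\in \mathcal{L}^n_n$ both contain $w$, then $w\leq x_1\wedge x_2$, so by consistency of $f$ the outcomes $f(x_1)$ and $f(x_2)$ agree at $w$; hence $g(w)$ does not depend on the chosen extension. For consistency of the image, take $w_1, w_2\in \mathcal{L}^n_2$ and any $m_3\leq w_1\wedge w_2$, and pick $x_i\geq w_i$ in $\mathcal{L}^n_n$; then $m_3 \leq x_1\wedge x_2$, consistency of $f$ gives $f(x_1)(m_3)=f(x_2)(m_3)$, and inheriting through the $w_i$ yields $g(w_1)(m_3)=g(w_2)(m_3)$.

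Thus $g$ is a genuine partial hidden variable assignment for $\downarrow \mathcal{L}^n_2$ with $Dom(g)=\mathcal{L}^n_2$, i.e.\ $Pval(g)=1$. This contradicts $Pval(\downarrow \mathcal{L}^n_2)\leq \tfrac{4}{5}<1$ from Proposition \ref{L22_lemma_prop}, and completes the proof. The only substantive ingredient is that already-established bound; everything else is bookkeeping with the fine-graining partial order and the inheritance of outcomes, so I do not expect any genuine obstacle beyond carefully invoking the consistency hypothesis twice (once for well-definedness and once for consistency of the image).
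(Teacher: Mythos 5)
Your proof is correct and takes essentially the same route the paper intends: the corollary is presented as an immediate consequence of Proposition \ref{L22_lemma_prop}, obtained by restricting a complete consistent assignment to $\downarrow \mathcal{L}^n_2$ (the same restriction-by-inheritance used in the proof of Proposition \ref{partial_var_prop}), which would force $Pval(\downarrow \mathcal{L}^n_2)=1>\frac{4}{5}$. Your explicit checks of well-definedness and consistency of the coarse-grained assignment are exactly the bookkeeping the paper leaves implicit.
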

    \begin{proposition}\label{partial_var_prop}
    $Pval(\mathcal{L}^n)\leq 4 (\frac{
    \lambda}{\Delta(B_{n,2})})^2$, where $\lambda$ is the spectral parameter of $B_{n,2}$. See subsection \ref{spec_graph_theory_subsection} for definitions.
    
	\end{proposition}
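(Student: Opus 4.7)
The plan is to bootstrap the level-$2$ bound from Proposition \ref{L22_lemma_prop} into a full bound at level $n$ using the bipartite expander mixing lemma (Lemma \ref{biregular_expander_mixing_lemma}) applied to $B_{n,2}$. The bridge is the observation that any partial HVA $f$ on $\mathcal{L}^n$ restricts to a partial HVA $\tilde{f}$ on $\downarrow \mathcal{L}^n_2$: if a large fraction $\alpha$ of $\mathcal{L}^n_n$ lies in $Dom(f)$, then a correspondingly large fraction of $\mathcal{L}^n_2$ is forced into $Dom(\tilde{f})$, but this is capped at $4/5$, leaving a nonnegligible subset $T \subseteq \mathcal{L}^n_2$ with no edge to $Dom(f)$ in $B_{n,2}$.

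Concretely, let $f$ be a partial HVA on $\mathcal{L}^n$, and set $S := Dom(f)$ and $\alpha := Pval(f)$. For each $w \in \mathcal{L}^n_2$ lying below some $x \in S$, define $\tilde{f}(w)$ to be the outcome that $f(x)$ induces on $w$. This is well-defined and consistent: whenever $x_1, x_2 \in S$ both lie above a subspace $w_3$, we have $w_3 \leq x_1 \wedge x_2$, so Definition \ref{consistency_def} forces $f(x_1)$ and $f(x_2)$ to agree at $w_3$. Taking $w_3 = w$ gives well-definedness of $\tilde{f}(w)$, and taking $w_3 \leq w_1 \wedge w_2$ (for any $w_1, w_2 \in Dom(\tilde{f})$ with witnesses $x_1, x_2$) gives pairwise consistency of $\tilde f$. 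Proposition \ref{L22_lemma_prop} then yields $|Dom(\tilde{f})| \leq \frac{4}{5}|\mathcal{L}^n_2|$; setting $T := \mathcal{L}^n_2 \setminus Dom(\tilde{f})$ and $\beta := |T|/|\mathcal{L}^n_2|$, we have $\beta \geq \frac{1}{5}$ and, by construction, $E(S,T) = 0$ in $B_{n,2}$.

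Feeding $E(S,T) = 0$ into Lemma \ref{biregular_expander_mixing_lemma}, dividing by $\sqrt{\alpha\beta}$ and squaring yields $\alpha\beta \leq (\lambda/\Delta(B_{n,2}))^2 (1-\alpha)(1-\beta)$, interpreting $\Delta(B_{n,2})^2$ as $\Delta(\mathcal{L}^n_n)\Delta(\mathcal{L}^n_2)$. Bounding $(1-\alpha) \leq 1$ and using $(1-\beta)/\beta \leq 4$ (which follows from $\beta \geq \frac{1}{5}$) delivers $\alpha \leq 4(\lambda/\Delta(B_{n,2}))^2$. The only conceptual step is the construction of $\tilde{f}$ and the verification that the consistency condition is exactly what is needed to propagate outcomes from level $n$ down to level $2$; the rest is a mechanical application of the expander mixing lemma, and the resulting bound is nontrivial precisely because Proposition \ref{bipartite_spectrum_prop} makes $\lambda/\Delta(B_{n,2})$ decay exponentially in $n$.
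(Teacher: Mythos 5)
Your proposal is correct and follows essentially the same route as the paper: restrict $f$ to $\downarrow\mathcal{L}^n_2$, invoke Proposition \ref{L22_lemma_prop} to guarantee a set $T\subset\mathcal{L}^n_2$ of density at least $\frac{1}{5}$ with $E(Dom(f),T)=0$, and then apply Lemma \ref{biregular_expander_mixing_lemma} with the same algebra (the paper likewise reads $\Delta(B_{n,2})$ as $\sqrt{\Delta(R)\Delta(L)}$). Your explicit verification that consistency makes $\tilde f$ well-defined is a detail the paper leaves implicit, but it is not a divergence in method.
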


    \begin{proof}
        Let $f$ be a partial hidden variable assignment for $\mathcal{L}^n$. Then $f$ restricts to a partial hidden variable assignment $\downarrow \mathcal{L}^n_2$. We use this restriction to argue that if $Pval(\mathcal{L}^n)$ exceeds our bound, then $Pval(\downarrow \mathcal{L}^n_2)$ is large enough to contradict Proposition \ref{L22_lemma_prop}.

        The vertices of $B_{n,2}$ in $\mathcal{L}^n_2$ that inherit an outcome from restriction of $f$ are the neighbors of $Dom(f)\subset \mathcal{L}^n_n$. 
        
        We use Lemma \ref{biregular_expander_mixing_lemma} to estimate the size of the neighborhood of $Dom(f)$ in terms of $|Dom(f)|$ and the square of the spectrum of $B_{n,2}$.
         
         See Appendix \ref{supp_proofs_section} for details.
    \end{proof}
    
    Our incompleteness bound follows immediately from Proposition \ref{bipartite_spectrum_prop}, applied to Proposition \ref{partial_var_prop}.
    
    \begin{theorem}[Incompleteness bound]\label{incompletness_thm}

		$Pval(\mathcal{L}^n) \leq 2^{-n +O(1)}$.
	\end{theorem}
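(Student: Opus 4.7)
The plan is to chain the two previous propositions: Proposition \ref{partial_var_prop} yields a spectral bound on $Pval(\mathcal{L}^n)$ in terms of $B_{n,2}$, and Proposition \ref{bipartite_spectrum_prop} converts this into the asymptotic $2^{-n+O(1)}$ rate. First I would start from $Pval(\mathcal{L}^n) \leq 4(\lambda/\Delta(B_{n,2}))^2$, where $\lambda$ is the spectral parameter of $B_{n,2}$. Because $B_{n,2}$ is biregular bipartite, its top adjacency eigenvalue equals $\sqrt{\Delta(R)\Delta(L)}$, which is what $\Delta(B_{n,2})$ must denote in this context; hence $\lambda/\Delta(B_{n,2})$ is exactly the spectral ratio $R(B_{n,2})$ of Definition \ref{graph_spectrum_def}. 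Proposition \ref{bipartite_spectrum_prop} then gives $\log R(B_{n,2}) \leq -\tfrac{1}{2}n + O(1)$, so $R(B_{n,2})^2 \leq 2^{-n+O(1)}$, and the multiplicative factor $4$ is absorbed into the $O(1)$ term, producing $Pval(\mathcal{L}^n) \leq 2^{-n+O(1)}$ as claimed.

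There is no real obstacle in this last assembly step; the content is entirely in the two inputs, both of which are assumed. Proposition \ref{partial_var_prop} is the main quantitative engine: given a partial hidden variable assignment $f$ on $\mathcal{L}^n$, its restriction to $\downarrow \mathcal{L}^n_2$ has domain equal to the neighborhood in $\mathcal{L}^n_2$ of $Dom(f) \subset \mathcal{L}^n_n$ inside $B_{n,2}$, and the bipartite expander mixing lemma (Lemma \ref{biregular_expander_mixing_lemma}) forces this neighborhood to be a large fraction of $\mathcal{L}^n_2$ whenever $Pval(f)$ exceeds roughly $4(\lambda/\Delta(B_{n,2}))^2$, contradicting the base bound $Pval(\downarrow \mathcal{L}^n_2) \leq 4/5$ from Proposition \ref{L22_lemma_prop}. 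Proposition \ref{bipartite_spectrum_prop} separately requires the spectral computation for the containment graph between top-level and $2$-dimensional isotropic subspaces, which is where the dual polar / association scheme machinery enters. If I had to identify the hardest step in the overall chain leading to this theorem, it would be that spectral computation, not the purely notational bookkeeping performed here.
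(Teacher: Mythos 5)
Your proposal matches the paper's proof exactly: the theorem is obtained by substituting the spectral ratio bound of Proposition \ref{bipartite_spectrum_prop} into the inequality $Pval(\mathcal{L}^n)\leq 4(\lambda/\Delta(B_{n,2}))^2$ of Proposition \ref{partial_var_prop}, with the factor $4$ absorbed into the $O(1)$. Your identification of $\Delta(B_{n,2})$ with $\sqrt{\Delta(R)\Delta(L)}=\max(Spec(B_{n,2}))$ is also consistent with how the paper uses it in the appendix proof of Proposition \ref{partial_var_prop}.
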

	
	Thus, the fraction of measurements of $\mathcal{L}^n_n$ that can be given outcomes consistently approaches $0$ exponentially in $n$.
    
\subsection{Inconsistency Bound}
    Our inconsistency bound also relies on eigenvalue techniques to extend simple results for $\mathcal{L}^2$ to the more complicated case of $\mathcal{L}^n$. We define a series of games, $Z_1$, $Z_1^{\otimes n}$, $\bar{Z}_\frac{n}{2}$ and $Z_{\frac{n}{2}}$. Applying Lemma \ref{expander_mixing_lemma} to the last game  $Z_{\frac{n}{2}}$ derives the main result of this subsection, the inconsistency bound, Theorem \ref{inconsistency_thm}.

    Let $Z_1$ be the nonlocal game whose questions are pairs $(x,y)\in\mathcal{L}^2_2 \times \mathcal{L}^2_2$, such that $\dim(x\cap y) =1$. The questions are asked uniformly at random from this set. Alice and Bob are challenged to provide outcomes for their measurements that agree on their common measurement, $x\cap y \in \mathcal{L}^2_1$.
    
    Using quantum resources, Alice and Bob can always win $Z_1$. They can employ a similar strategy to win all games defined in this paper when allowed quantum resources.
    \begin{proposition}\label{qtm_always_wins_prop}
        $Val_{qtm}(Z_1) = 1$.
    \end{proposition}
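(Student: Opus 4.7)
The plan is to exhibit an explicit quantum strategy that wins $Z_1$ with probability $1$. I would have Alice and Bob share two Bell pairs, with Alice holding one qubit from each pair and Bob holding the other. The key fact I would invoke is the standard identity that for any single-qubit Pauli $P$, the Bell state satisfies $(P \otimes P^T)\ket{\Phi^+} = \ket{\Phi^+}$, so that if Alice and Bob each measure the observable $P$ on their half of a Bell pair, they obtain perfectly correlated outcomes. Tensoring, the same correlation holds on the two-Bell-pair state for any two-qubit Pauli observable (up to the transpose, which is easily absorbed by a fixed local Clifford).

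Next I would translate this into the language of $\mathcal{L}^2$. A question $x \in \mathcal{L}^2_2$ is a maximal isotropic subspace, i.e., a choice of two commuting independent Pauli observables $P_1^x, P_2^x$; Alice's strategy on question $x$ is to simultaneously measure these two observables on her two qubits and report the joint eigenspace obtained. Bob does the same on his side with $y$. Because the intersection $x \cap y$ is one-dimensional, it is generated by some Pauli observable $Q$, which lies in the subspace spanned by Alice's observables and also in the subspace spanned by Bob's observables. Thus $Q$ is jointly measured by both parties, and by the Bell-pair correlation the two outcomes they read off for $Q$ are equal with probability $1$.

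The final step is to verify that this satisfies the winning condition: by Definition \ref{consistency_def} of consistency, Alice's outcome $o_x$ and Bob's outcome $o_y$ are required to be consistent at $x \cap y$, which is precisely the statement that the inherited outcomes of $Q$ from $o_x$ and from $o_y$ coincide. This is what the Bell-pair strategy guarantees, so the strategy wins on every question pair and hence $Val_{qtm}(Z_1) = 1$.

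The only potentially subtle step is bookkeeping the phases and the transpose: the identity $(P \otimes P^T)\ket{\Phi^+} = \ket{\Phi^+}$ involves $P^T$, not $P$, and the sign conventions for the $coord$ map and for the eigenspace labels must line up so that ``Alice's $+1$ eigenspace for $P$'' really inherits to the same label on $Q$ as ``Bob's $+1$ eigenspace for $P^T$.'' I would handle this uniformly by having Bob apply a fixed Clifford correction (the involution implementing transpose on the Pauli group) before measuring, so that the strategy reduces cleanly to the symmetric statement above; everything else is routine.
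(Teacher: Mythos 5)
Your strategy is the same as the paper's: share the canonical maximally entangled state, exploit the identity $(P\otimes \overline{P})\ket{\Phi^+}=\ket{\Phi^+}$ so that the common observable generating $x\cap y$ is perfectly (anti-)correlated between the two parties, and then correct for the transpose. The gap is in the one step you defer: there is no Clifford --- indeed no unitary --- ``implementing transpose on the Pauli group.'' Conjugation by a unitary is an automorphism of the operator algebra, while transposition is an anti-automorphism: if $UXU^\dagger=X^T=X$ and $UZU^\dagger=Z^T=Z$, then multiplicativity forces $U(XZ)U^\dagger=XZ$, whereas $(XZ)^T=ZX=-XZ$. Equivalently, transpose fixes $X$ and $Z$ but sends $Y\mapsto -Y$, which is incompatible with $Y=iXZ$ being conjugated consistently. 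So the ``fixed Clifford correction before measuring'' that your argument leans on does not exist, and since all of the phase bookkeeping was delegated to it, the proof as written does not close.

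The repair is classical, not unitary, and is what the paper does: Bob measures the same observables $\widetilde{XZ}(b)$, $b\in y$, and then flips his reported bit on exactly those $b=(b_1,b_2)$ with $b_1\cdot b_2=1$ (the Paulis with an odd number of $Y$ tensor factors, for which $\overline{\widetilde{XZ}(b)}=-\widetilde{XZ}(b)$ and the Bell correlations are perfect anti-correlations). One must also check that this relabeling sends valid outcomes to valid outcomes; it does, because on an isotropic subspace the symplectic condition $x_1\cdot y_2=x_2\cdot y_1$ makes $b\mapsto b_1\cdot b_2$ additive, so the corrected function still satisfies the linearity/antilinearity constraints of Proposition \ref{lin_antilin}. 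With that substitution your argument matches the paper's proof.
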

    \begin{proof}
        Suppose Alice and Bob share the canonical maximally entangled pair  $\ket{\psi}=\frac{1}{\sqrt{2}}\sum_{i=1}^{2} \ket{e_i}\ket{e_i}$ and are challenged with the questions $x, y\in \mathcal{L}^2_1$ respectively. We will provide a quantum strategy that allows Alice and Bob to win $Z_1$ with certainty.
        
        Alice will perform the set of local measurements $\{\widetilde{XZ}(a) \mid a \in x\}$ on her half of $\ket{\psi}$, and Bob will perform the set of local measurements $\{\widetilde{XZ}(b) \mid b \in y\}$ on his half of $\ket{\psi}$. Alice will submit her outcome as her response. Bob will transform his outcome before submitting it. Specifically, if his outcome is given by the linear/antilinear function $f:y\to \mathbb{Z}_2$, he will submit the linear/antilinear function $\tilde{f}:y\to \mathbb{Z}_2$ given by $b=(b_1,b_2)\mapsto (-1)^{b_1 \cdot b_2}f(b)$. It remains to show that this strategy wins $Z_1$ with certainty. We show this in Appendix \ref{supp_proofs_section}.
    \end{proof}
    
    Without quantum resources, Alice and Bob's individual strategies can each be described by a contextual hidden variable assignment for $\mathcal{L}^2$. First, we show that they cannot always win $Z_1$ with a synchronous strategy.
    
    \begin{proposition}
    
    There exists $\epsilon>0$ such that $1-Val_{syn}(Z_1)\geq \epsilon$.
    \end{proposition}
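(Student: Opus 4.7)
The plan is to translate a perfect synchronous strategy for $Z_1$ into a complete consistent hidden variable assignment for $\mathcal{L}^2$, then invoke $Pval(\mathcal{L}^2)<1$ to rule this out. A quantitative $\epsilon$ then falls out from finiteness of the strategy set.

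First, I would observe that a synchronous strategy for $Z_1$ is specified by a single contextual hidden variable assignment $f:\mathcal{L}^2_2\to \text{outcomes}$ that Alice and Bob both apply to their respective questions. Faced with a question pair $(x,y)$ with $\dim(x\cap y)=1$, the players win exactly when $f(x)$ and $f(y)$ restrict to the same eigenvalue on $x\cap y \in \mathcal{L}^2_1$. Since the only measurements at or below a $1$-dimensional isotropic subspace in $\mathcal{L}^2$ are itself and the trivial zero-dimensional measurement (at which consistency is automatic), consistency of $f(x)$ with $f(y)$ in the sense of Definition \ref{consistency_def} reduces to agreement at $x\cap y$. Therefore a synchronous strategy wins $Z_1$ with probability one if and only if, for every pair of distinct maximal measurements meeting in dimension $1$, the common outcome agrees; equivalently, if and only if $f$ is a complete consistent hidden variable assignment for $\mathcal{L}^2$ (the cases $\dim(x\cap y)\in\{0,2\}$ are either vacuous or force $x=y$).

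By Proposition \ref{L22_prop} we have $Pval(\mathcal{L}^2)=\tfrac{4}{5}<1$ (equivalently the $n=2$ case of Corollary \ref{incomplete_inconsistent_cor}), so no such complete consistent assignment exists, and no synchronous strategy wins $Z_1$ with probability one. Since $\mathcal{L}^2_2$ is finite and each maximal measurement has finitely many outcomes, the set of synchronous strategies is finite; hence $Val_{syn}(Z_1)$ is attained and strictly less than $1$, and we may set $\epsilon:=1-Val_{syn}(Z_1)>0$.

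The only real conceptual step is recognizing the equivalence between perfect synchronous strategies and complete consistent hidden variable assignments; after that, the proposition reduces to an already-established impossibility together with a finiteness argument. Extracting a sharper explicit value for $\epsilon$ would require quantifying how many additional pairwise inconsistencies are forced when one extends an optimal partial assignment (covering $12$ of the $15$ elements of $\mathcal{L}^2_2$) to all of $\mathcal{L}^2_2$, but this is not demanded by the statement.
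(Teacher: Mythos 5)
Your proposal is correct and follows essentially the same route as the paper: encode the common synchronous strategy as a contextual hidden variable assignment, note that a perfect strategy would make it both complete and consistent, and contradict Corollary \ref{incomplete_inconsistent_cor}. Your added details (the reduction of consistency to agreement at $x\cap y$, and the finiteness of the strategy set guaranteeing the supremum is attained so that $\epsilon>0$) are correct elaborations of steps the paper leaves implicit.
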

\begin{proof}
   We describe Alice and Bob's common strategy with a  contextual hidden variable assignment for $\mathcal{L}^2$. If Alice and Bob always win $Z_1$, then this hidden variable assignment has no contradictions, so it is both a complete and partial hidden variable assignment for $\mathcal{L}^2$. This contradicts Corollary \ref{incomplete_inconsistent_cor}.
\end{proof}
    The next proposition shows that limitations on synchronous strategies for $Z_1$ imply limitations on deterministic strategies for $Z_1$. A similar fact holds for all games defined in this paper.
    \begin{proposition}
        Let $\epsilon >0$. If $1- Val_{syn}(Z_1) \geq \epsilon >0 $ then $1-Val_{loc}(Z_1) \geq \frac{\epsilon}{2}$.
    \end{proposition}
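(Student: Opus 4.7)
The plan is to show that, for any local strategy $(f_A, f_B)$ for $Z_1$ with winning probability $v$, the synchronous strategy in which both players use $f_A$ has value at least $2v - 1$. Applied to an optimal local strategy, this gives $Val_{syn}(Z_1) \geq 2\,Val_{loc}(Z_1) - 1$, which rearranges to $1 - Val_{loc}(Z_1) \geq \tfrac{1}{2}(1 - Val_{syn}(Z_1)) \geq \epsilon/2$.

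The bound comes from a pointwise inequality in $w := x \cap y \in \mathcal{L}^2_1$. Write $S_w := \{x \in \mathcal{L}^2_2 : x \supset w\}$; a standard count gives $|S_w| = 3$ uniformly in $w$. Define $a,b \in \{0,1/3,2/3,1\}$ as the fractions of $x \in S_w$ with $f_A(x)|_w = 0$ and $f_B(x)|_w = 0$ respectively, and $c$ as the fraction of $x \in S_w$ with $f_A(x)|_w = f_B(x)|_w$. Because the question distribution excludes pairs with $x = y$, a short calculation gives the conditional winning probabilities at $w$ as $\tfrac{3(a^2+(1-a)^2)-1}{2}$ (for the synchronous strategy using $f_A$) and $\tfrac{3(ab+(1-a)(1-b))-c}{2}$ (for the local strategy $(f_A,f_B)$), so the pointwise claim reduces to
\[
(a-b)^2 + b(1-b) \geq \tfrac{1-c}{3}.
\]

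I would then verify this reduced inequality by a finite case check: with $a,b \in \{0,1/3,2/3,1\}$ and the admissibility constraint $c \geq \max(0,|a+b-1|)$ arising from positivity of the joint distribution of $(f_A|_w,f_B|_w)$, there are only $16$ configurations to examine with $c$ at its minimum value, and direct arithmetic confirms the inequality in each. Averaging over $w$ then yields the global bound that the synchronous strategy using $f_A$ has winning probability at least $2v-1$.

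The main obstacle is the pointwise inequality itself: it is tight at several configurations (for instance $a=b=1/3$ with $c=1/3$) and actually fails if $|S_w|=2$. The proof therefore depends essentially on the combinatorial fact that every $1$-dimensional isotropic subspace of $\mathbb{Z}_2^4$ has exactly three $2$-dimensional isotropic extensions.
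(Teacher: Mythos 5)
Your proposal is correct, and the final quantitative bound ($1-Val_{loc}\geq\tfrac12(1-Val_{syn})$) is exactly what the paper obtains, but the mechanism is genuinely different. The paper's proof is a coupling argument: the referee secretly samples $w$, Bob's question $b$, and \emph{two} candidate questions $a,a'$ for Alice above $w$, observes that $(a,a')$ is itself distributed as a $Z_1$ question pair so that Alice-versus-herself is a synchronous strategy losing with probability at least $\epsilon$, and then notes that whenever Alice's two answers at $w$ differ, Bob's fixed answer can agree with at most one of them, costing a factor of $\tfrac12$. You instead condition on $w$, write the synchronous and local winning probabilities explicitly in terms of the marginals $a,b$ and the overlap $c$, and verify the resulting pointwise inequality $(a-b)^2+b(1-b)\geq\tfrac{1-c}{3}$ by a finite case check; I have verified your conditional-probability formulas and all $16$ cases at minimal $c$, and monotonicity in $c$ handles the rest. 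What the paper's route buys is brevity and a transparent reason the factor is $\tfrac12$; what yours buys is an explicit identification of where the combinatorics enters (the fact that each $w\in\mathcal{L}^2_1$ lies below exactly three elements of $\mathcal{L}^2_2$, without which both arguments break: yours because the pointwise inequality fails for $|S_w|=2$, the paper's because no third question $a'$ would exist) and a sharper, per-$w$ statement with its tight cases located. One presentational nit: you should say explicitly at the outset that you may take $(f_A,f_B)$ deterministic because $Val_{loc}=Val_{det}$ and the optimum over the finite set of deterministic strategies is attained.
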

    \begin{proof}
        Let us referee $Z_1$ by following a protocol. All choices are made uniformly at random over their allowed sets.
        \begin{enumerate}
            \item Choose $w\in \mathcal{L}^2_1$.
            \item Choose $b\in \mathcal{L}^2_2$ such that $b\geq w$.
            \item Choose $a\in \mathcal{L}^2_2$ such that $a\geq w$ and $a\neq b$.
            \item Choose $a^\prime\in \mathcal{L}^2_2$ such that $a^\prime \geq w$, $a^\prime \neq a$ and $a^\prime \neq b$.
            \item Give Bob $b$ and choose which of $a$ or $a^\prime$ to give Alice.
        \end{enumerate}
        
        After performing these $5$ steps, Alice and Bob will receive questions according to the distribution for $Z_1$. Also, if we marginalize out the choices for $w$ and $b$, then $a$ and $a^\prime$ are both drawn uniformly at random according to the question set for $Z_1$. Focusing on $a$ and $a^\prime$, we can view Alice as playing $Z_1$ against herself. Thus, with probability at least $\epsilon$, Alice's responses to the Pauli measurement $w$ will be different under the choice of $a$ in step $5$ versus the choice of $a^\prime$. When Alice's responses differ, Bob has at most a probability of $\frac{1}{2}$ of agreeing with Alice. The protocol makes it clear that Bob must disagree with Alice probability at least $\frac{\epsilon}{2}$.
    \end{proof}

    The game $Z_1^{\otimes n}$ is defined as the $n$-fold parallel repetition of $Z_1$ (see Definition \ref{parallel_repetition_def}). $Z_1^{\otimes n}$ can still be played perfectly with a quantum strategy. Lemma \ref{parallel_repetition_lemma} bounds $Val_{loc}(Z_1^{\otimes n})$ by exponential decay in $n$. Specifically, we have the following proposition.
    
    \begin{proposition}\label{parallel_prop}
        For some $\epsilon >0 $, $Val_{loc}(Z^{\otimes n}_1) \leq (1-\frac{\epsilon^3}{2^3 6000})^\frac{n}{8}.$
    \end{proposition}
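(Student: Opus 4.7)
The plan is to invoke the parallel repetition theorem (Lemma~\ref{parallel_repetition_lemma}) with base game $Z = Z_1$ and $k = n$. The two immediately preceding propositions already do almost all of the work of setting up its hypothesis: combining them gives $1 - Val_{loc}(Z_1) \geq \epsilon/2$ for the same $\epsilon > 0$ furnished by the synchronous bound, so that $(1-Val_{loc}(Z_1))^3 \geq \epsilon^3/2^3$.

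Substituting this estimate into Lemma~\ref{parallel_repetition_lemma} gives
\[
Val_{loc}(Z_1^{\otimes n}) \leq \left(1 - \frac{(1-Val_{loc}(Z_1))^3}{6000}\right)^{n/\log(|A||B|)} \leq \left(1 - \frac{\epsilon^3}{2^3 \cdot 6000}\right)^{n/\log(|A||B|)}.
\]
To obtain the stated exponent $n/8$, I then bound $\log(|A||B|)$ by a small constant. The answer alphabet for $Z_1$ labels the outcomes of a measurement in $\mathcal{L}^2_2$, of which there are only a constant number: the joint eigenspaces of two commuting Pauli operators give $|A| = |B| = 4$ under the canonical labelling fixed in the Remark following the definition of a nonlocal game. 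Thus $|A||B| = 16$ and $\log(|A||B|) \leq 8$ comfortably, so the stated inequality follows from the monotonicity $(1-x)^{n/\log(|A||B|)} \leq (1-x)^{n/8}$ for $x \in [0,1]$.

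There is essentially no obstacle here; this is a direct plug-in of the parallel repetition theorem, with the two preparatory propositions providing exactly the hypothesis needed. The only points requiring any care are keeping track of the factor of $2$ coming from the synchronous-to-deterministic conversion (which is what produces the $2^3$ in the denominator) and verifying that the answer-set counting matches the convention set in the Remark, both of which are immediate.
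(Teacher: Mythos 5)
Your proof is correct and is exactly the argument the paper intends: the paper presents Proposition \ref{parallel_prop} as an immediate plug-in of Lemma \ref{parallel_repetition_lemma} using the bound $1-Val_{loc}(Z_1)\geq \epsilon/2$ from the two preceding propositions. Your accounting of the answer sets ($|A|=|B|=4$, since a measurement in $\mathcal{L}^2_2$ has $2^2$ outcomes) in fact yields the slightly stronger exponent $n/\log(16)$, from which the stated $n/8$ follows by monotonicity as you note.
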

    
   Now we define our final game in the series.

    \begin{definition}
        For $n$ even, let $Z_\frac{n}{2}$ be the nonlocal game whose questions are pairs $(x,y)\in \mathcal{L}^n_n \times \mathcal{L}^n_n$ such that $d(x,y) = \frac{n}{2}$ and answers are outcomes of those measurements. The winning condition is that Alice and Bob's answers must be consistent on their common Pauli measurements.
    \end{definition}
    As in Proposition \ref{qtm_always_wins_prop}, it is trivial that $Val_{qtm}(Z_{\frac{n}{2}}) = 1$. The local value can be bounded by using an embedding, analogous to Proposition \ref{L22_lemma_prop}.
    
    \begin{proposition} \label{parallel_nonparallel_prop}
        $Val_{loc}(Z_{\frac{n}{2}}) \leq Val_{loc}(Z^{\otimes n}_1)$.
    \end{proposition}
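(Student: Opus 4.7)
The plan is to turn any local strategy $(\tau_A,\tau_B)$ for $Z_{n/2}$ into a local strategy for $Z_1^{\otimes n}$ of at least the same winning probability. Using shared randomness, Alice and Bob embed their $n$ independent $Z_1$ question pairs $((X_i,Y_i))_{i=1}^n$ into a single $Z_{n/2}$ question pair $(x,y)$; they apply $(\tau_A,\tau_B)$ to obtain outcomes on $x$ and $y$, and restrict coordinate-wise to recover answers for each $(X_i,Y_i)$.

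The construction has to be local: Alice's $x\in\mathcal{L}^n_n$ must be a function only of $(X_i)_i$ and shared randomness, and Bob's $y$ only of $(Y_i)_i$ and the same randomness. I would ask the embedding to satisfy (a) $X_i\leq x$ and $Y_i\leq y$ for every $i$, (b) $X_i\cap Y_i\subseteq x\cap y$ for every $i$, and (c) the pushforward of the uniform distribution on $Z_1^{\otimes n}$ questions matches the uniform distribution on $Z_{n/2}$ questions. Granted (a)--(c), consistency of $\tau_A(x)$ and $\tau_B(y)$ on $x\cap y$ automatically descends to consistency on each $X_i\cap Y_i$, so the constructed strategy wins $Z_1^{\otimes n}$ whenever $(\tau_A,\tau_B)$ wins $Z_{n/2}$, and the inequality follows.

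The hard part is producing the embedding. A naive direct-sum embedding places the $n$ independent instances in disjoint symplectic slots, yielding a question pair in $\mathbb{Z}_2^{4n}$ rather than in $\mathbb{Z}_2^{2n}$, and only gives $Z_1^{\otimes n/2}$ on the nose. To squeeze all $n$ copies into an $n$-qubit instance of $Z_{n/2}$, the shared randomness should pick a symplectic reduction of the $4n$-dimensional ambient space by a random Lagrangian that folds the $n$ instances pairwise, together with random symplectic identifications within each pair. I would verify, using the transitivity of the symplectic group on pairs of maximal isotropics with prescribed intersection dimension, that after folding the induced joint distribution of $(x,y)$ is uniform on the $Z_{n/2}$ question set and that properties (a)--(b) continue to hold; this verification is the bulk of the argument and the step I expect to be the main obstacle.
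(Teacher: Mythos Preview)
Your core idea --- use a shared random symplectic map to convert between the two question formats --- is exactly the paper's approach. The paper phrases it dually: it introduces an auxiliary game $\bar{Z}_{n/2}$ in which the referee hands both players a uniformly random $\phi\in Sp(2n,\mathbb{Z}_2)$ that carries their $Z_{n/2}$ questions to direct-sum form, then argues $Val_{loc}(Z_{n/2})\leq Val_{loc}(\bar{Z}_{n/2})$ (ignore the hint) and that $\bar{Z}_{n/2}$ is just $Z_1^{\otimes n}$ together with public shared randomness. This is the same reduction you describe, with the roles of ``question'' and ``shared randomness'' interchanged; your conditions (a)--(c) are precisely what the paper's phrase ``this conversion respects question distributions'' encodes.

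You have correctly spotted a genuine dimensional issue: $n$ independent copies of $Z_1$ naturally live in $\mathbb{Z}_2^{4n}$, while $Z_{n/2}$ lives in $\mathbb{Z}_2^{2n}$, so the direct-sum embedding only accommodates $n/2$ copies. The paper does not address this, and its argument, read literally, establishes only $Val_{loc}(Z_{n/2})\leq Val_{loc}(Z_1^{\otimes n/2})$, which is already enough for the downstream exponential-decay corollary.

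The gap in your proposal is the ``folding'' step. If you fold two independent $Z_1$ instances into a common $4$-dimensional symplectic slot, the images of $X_i$ and $X_j$ (and of $Y_i$ and $Y_j$) must overlap inside the $2$-dimensional Lagrangian $x$ (resp.\ $y$), and then conditions (a)--(c) cannot all survive: consistency of $\tau_A(x),\tau_B(y)$ on $x\cap y$ no longer forces consistency on each unfolded $X_i\cap Y_i$ separately, and the pushforward of the product distribution will not be uniform on $Z_{n/2}$ questions. No amount of transitivity of $Sp$ on pairs with prescribed intersection rescues this, because the obstruction is a dimension count, not a symmetry issue. The clean fix is to drop the folding entirely and prove the inequality with exponent $n/2$; your direct-sum construction together with a random $\phi$ then works verbatim and matches what the paper's proof actually delivers.
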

    \begin{proof}
        Let us define an auxiliary game, $\bar{Z}_{\frac{n}{2}}$ which is the same as $Z_{\frac{n}{2}}$ except that Alice and Bob are given a hint $\phi\in Sp(2n,\mathbb{Z}_2)$. The hint $\phi$ transfers their pair of questions from $Z_\frac{n}{2}$ to a pair of questions of $Z^{\otimes n}_1$. The hint $\phi$ is to be chosen uniformly at random with this property.
        
        Clearly, $Val_{loc}(Z_{\frac{n}{2}}) \leq Val_{loc}(\bar{Z}_{\frac{n}{2}})$, because Alice and Bob can always ignore the hint.
        
        We obtain another inequality $Val_{loc}(\bar{Z}_{\frac{n}{2}})\leq Val_{loc}(Z^{\otimes n}_1)$ because any strategy for $Z_1^{\otimes n}$ can be used as a strategy for $\bar{Z}_{\frac{n}{2}}$ by using $\phi$ to convert the questions for $\bar{Z}_{\frac{n}{2}}$ to questions for $Z^{\otimes n}_1$. Note that this conversion respects question distributions.

        Chaining the two inequalities gives the result.
    \end{proof}

    By chaining the inequalities of Proposition \ref{parallel_nonparallel_prop} and Proposition \ref{parallel_prop}, we have
    \begin{corollary}\label{half_game_cor}
        There exists $\epsilon \in (0,1)$ such that $Val_{syn}(Z_\frac{n}{2})\leq Val_{loc}(Z_\frac{n}{2})< \epsilon ^n$ for all even $n \in \mathbb{N}$.
    \end{corollary}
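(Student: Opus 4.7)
The plan is to chain together the inequalities already established in Proposition \ref{parallel_nonparallel_prop} and Proposition \ref{parallel_prop}, together with a trivial observation for the first inequality. First I would note that $Val_{syn}(Z_{n/2}) \leq Val_{loc}(Z_{n/2})$ is immediate from the definitions: every synchronous strategy is a deterministic strategy in which Alice and Bob happen to adopt the same function, and deterministic strategies coincide with local strategies in value. So the supremum over synchronous strategies cannot exceed the supremum over local strategies.

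For the exponential bound I would chain the two propositions in the obvious order. Proposition \ref{parallel_nonparallel_prop} gives
\[
Val_{loc}(Z_{n/2}) \;\leq\; Val_{loc}(Z_1^{\otimes n}),
\]
and Proposition \ref{parallel_prop} produces an $\epsilon_0 > 0$ (inherited from the strict lower bound $1 - Val_{loc}(Z_1) \geq \epsilon_0/2$ established earlier) such that
\[
Val_{loc}(Z_1^{\otimes n}) \;\leq\; \left(1 - \frac{\epsilon_0^3}{8 \cdot 6000}\right)^{n/8}.
\]
Setting $c := \epsilon_0^3/48000 \in (0,1)$ and $\epsilon := (1-c)^{1/8} \in (0,1)$, the right-hand side is exactly $\epsilon^n$. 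Combining with the synchronous-versus-local comparison gives the stated chain $Val_{syn}(Z_{n/2}) \leq Val_{loc}(Z_{n/2}) < \epsilon^n$ for every even $n$.

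There is essentially no obstacle here: the corollary is a bookkeeping step whose entire substance has already been absorbed into Propositions \ref{parallel_prop} and \ref{parallel_nonparallel_prop} (which in turn invoke the parallel repetition theorem and the $Sp(2n,\mathbb{Z}_2)$-based hint reduction). The only tiny care needed is to verify that the base $(1-c)^{1/8}$ of the exponential is strictly less than $1$, which is automatic because $c > 0$, and that the bound is uniform in $n$, which follows because $\epsilon_0$ depends only on the single game $Z_1$ and not on $n$.
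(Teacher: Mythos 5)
Your proposal is correct and matches the paper's own derivation exactly: the corollary is obtained by chaining Proposition \ref{parallel_nonparallel_prop} with Proposition \ref{parallel_prop} and using the general fact that $Val_{syn}\leq Val_{loc}$, with $\epsilon$ extracted as the eighth root of the parallel-repetition base just as you describe.
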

    We can relate $Val_{loc}(Z_{\frac{n}{2}})$ to $Cval(\mathcal{L}^n)$ by counting the following objects in two ways.
    
    \begin{definition}[Contradiction triangle]
         Let $n\in \mathbb{N}$ be even. Fix a contextual hidden variable assignment $f$ for $\mathcal{L}^n$. A contradiction triangle is a triple $(w,x,y)$ with $w\in \mathcal{L}^n_1$, $x,y\in \mathcal{L}^n_n$, such that $d(x,y)=\frac{n}{2}$, and such that $f(x)$ and $f(y)$ are not consistent at $w$.

         For a given $n$ and contextual hidden variable assignment for $\mathcal{L}^n$, number of contradiction triangles will be denoted $T_n$.
    \end{definition}
    
    We first count the number of contradiction triangles $T_n$ by counting the number of pairs $(x,y) \in \mathcal{L}^n_n \times \mathcal{L}^n_n$ such that $d(x,y) = \frac{n}{2}$ and $f(x)$ is inconsistent with $f(y)$. We then count the number of $w$ that complete the triangle using Corollary \ref{disagreement_on_half_cor} to obtain the following proposition.
    \begin{proposition}\label{triangle_count_prop}
         Fix a contextual hidden variable assignment for $\mathcal{L}^n$. The number of contradiction triangles is at least $T_n\geq (1-Val_{syn}(Z_{\frac{n}{2}}) ) Q 2^{\frac{n}{2}-1}$, where $Q$ is the number of pairs of questions in the game $Z_{\frac{n}{2}}$.
    \end{proposition}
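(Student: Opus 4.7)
The plan is to count contradiction triangles by first fixing their ``base'' (the pair $(x,y)$) using a synchronous strategy argument, and then multiplying by a lower bound on the number of $w$'s that complete each bad base, supplied by Corollary \ref{disagreement_on_half_cor}.

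First I would transport the fixed contextual hidden variable assignment $f$ into a synchronous strategy for $Z_{\frac{n}{2}}$: both players answer a question $x\in \mathcal{L}^n_n$ by $f(x)$. By definition of $Val_{syn}$, this strategy wins with probability at most $Val_{syn}(Z_{\frac{n}{2}})$, so it loses on at least a $(1-Val_{syn}(Z_{\frac{n}{2}}))$-fraction of the $Q$ question pairs. The losing pairs are precisely those $(x,y)\in \mathcal{L}^n_n\times \mathcal{L}^n_n$ with $d(x,y)=\frac{n}{2}$ for which $f(x)$ and $f(y)$ fail to be consistent on the common sub-measurement $x\wedge y\in \mathcal{L}^n_{\frac{n}{2}}$. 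Hence the number of such \emph{bad pairs} is at least $(1-Val_{syn}(Z_{\frac{n}{2}}))\,Q$.

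Next, for each bad pair $(x,y)$ I would count the number of $w\in \mathcal{L}^n_1$ with $w\leq x\wedge y$ on which $f(x)$ and $f(y)$ inherit distinct values; each such $w$ gives a contradiction triangle $(w,x,y)$, and every contradiction triangle arises this way since $w\leq x\wedge y$ is forced by $w\leq x$ and $w\leq y$. Here I invoke Corollary \ref{disagreement_on_half_cor}: the inherited outcomes of $f(x)$ and $f(y)$ on $x\wedge y$ are two distinct linear/antilinear functions on an $\frac{n}{2}$-dimensional isotropic subspace, and any two distinct such functions differ on an affine hyperplane, leaving at least $2^{\frac{n}{2}-1}$ one-dimensional subspaces of $x\wedge y$ on which they disagree. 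Thus each bad pair contributes at least $2^{\frac{n}{2}-1}$ triangles.

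Multiplying the two bounds yields $T_n\geq (1-Val_{syn}(Z_{\frac{n}{2}}))\,Q\cdot 2^{\frac{n}{2}-1}$, as required. The only delicate step is the use of Corollary \ref{disagreement_on_half_cor}; I would make sure that the hypothesis of the corollary (two distinct outcomes on an $\frac{n}{2}$-dimensional isotropic subspace) is indeed guaranteed by the failure of consistency on $x\wedge y$, and double-check that no overcounting occurs in going from ``bad pairs $\times$ completing $w$'s'' to ``triangles'' — which is automatic since each triple $(w,x,y)$ is uniquely determined by its coordinates.
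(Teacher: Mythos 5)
Your proposal is correct and follows essentially the same route as the paper: interpret $f$ as a synchronous strategy for $Z_{\frac{n}{2}}$ to lower-bound the number of inconsistent ordered pairs $(x,y)$ by $(1-Val_{syn}(Z_{\frac{n}{2}}))Q$, then apply Corollary \ref{disagreement_on_half_cor} to get $2^{\frac{n}{2}-1}$ completing $w$'s per bad pair. Your attention to the hypothesis of the corollary and to the absence of overcounting matches the intended argument.
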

    
    We use eigenvalue techniques to bound the number of contradiction triangles containing a given $w\in \mathcal{L}^n_1$.
    
    \begin{proposition}\label{contradiction_triangles_containing_w}
        Fix a contextual hidden variable assignment $f$ for $\mathcal{L}^n$ where $n$ is even. Fix $w\in \mathcal{L}^n_1$.
        Let $G_w$ be the graph defined in Definition \ref{G2graph_def}. Let $m_w \in [0, \frac{1}{2}]$ be the fraction of $V(G_w)$ such that $f(x)$ restricted to $w$ assigns $w$ its minority outcome, as in Definition \ref{cval_def}.
        
        The number of contradiction triangles containing $w$ is at most
        
        \[ T_n\leq 2|V(G_w)|\left(\Delta(G_w) m_w (1-m_w) + \lambda \sqrt{m_w (1-m_w)}\right)
        \]
        
        where $\Delta(G_w)$ is the degree of $G_w$ and $\lambda$ is the spectral parameter of $G_w$.
    \end{proposition}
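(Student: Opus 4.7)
The plan is to apply the expander mixing lemma (Lemma \ref{expander_mixing_lemma}) to the graph $G_w$ after partitioning its vertex set according to the value that $f$ induces at $w$.

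Since $w\in \mathcal{L}^n_1$ is a two-outcome measurement (the $\pm 1$ eigenspaces of the associated Pauli observable), each $f(x)$ with $x\in V(G_w)$ inherits one of exactly two possible values on $w$. Let $S\subset V(G_w)$ be the set of $x\in V(G_w)$ whose inherited value on $w$ is the minority outcome, and let $T=V(G_w)\setminus S$. By the definition of $m_w$ we have $|S|=m_w|V(G_w)|$ and $|T|=(1-m_w)|V(G_w)|$.

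Next I would observe that an ordered triple $(w,x,y)$ is a contradiction triangle precisely when $x,y\in V(G_w)$, $d(x,y)=\frac{n}{2}$ (equivalently $x\sim y$ in $G_w$, by Definition \ref{G2graph_def}), and the outcomes on $w$ inherited from $f(x)$ and $f(y)$ disagree. The last condition means exactly that one of $x,y$ lies in $S$ and the other in $T$. Counting ordered pairs, the number of contradiction triangles containing $w$ equals $2\,E(S,T)$, the factor of two accounting for the two orderings of the unordered adjacent pair.

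Finally I would apply Lemma \ref{expander_mixing_lemma} to the regular graph $G_w$ with the subsets $S$ and $T$, giving
$$E(S,T)\le \Delta(G_w)\frac{|S||T|}{|V(G_w)|}+\lambda\sqrt{|S||T|}=|V(G_w)|\bigl(\Delta(G_w)\,m_w(1-m_w)+\lambda\sqrt{m_w(1-m_w)}\bigr),$$
and doubling reproduces the claimed bound on $T_n$.

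The only non-routine input is that $G_w$ is a regular (and connected) graph whose spectral parameter is the $\lambda$ named in the statement; this is exactly what is supplied by Proposition \ref{half_graph_spectrum_prop} together with the surrounding discussion in Section \ref{spec_graph_theory_subsection}. Granted this, the argument is a direct specialization of the expander mixing lemma, with no genuine obstacle — the work is really in having built $G_w$ so that the combinatorics of contradiction triangles at $w$ translate into a bipartite edge count in a graph with controlled spectrum.
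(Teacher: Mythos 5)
Your proposal is correct and follows essentially the same route as the paper: color $V(G_w)$ by the outcome inherited at $w$, identify contradiction triangles containing $w$ with bichromatic adjacent pairs in $G_w$, and apply the expander mixing lemma with $|S|=m_w|V(G_w)|$ and $|T|=(1-m_w)|V(G_w)|$. The bookkeeping of the factor of $2$ for ordered pairs and the substitution into Lemma \ref{expander_mixing_lemma} match the intended argument exactly.
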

    
    \begin{proof}
        Color $V(G_w)$ according to the outcome of $f$ when restricted to $w$. A contradiction triangle containing $w$ corresponds to an adjacent pair $x\sim y$ such that $x$ and $y$ have different colors.
        
       We use Lemma \ref{expander_mixing_lemma} to estimate the number of such pairs using only structural constants of $G_w$, $\Delta(G_w)$ and $\lambda$, along with the sizes of the sets with each color, $m_w$ and $1-m_w$.
    \end{proof}
    
    \begin{corollary}\label{num_triangle_cor}
    Fix a contextual hidden variable assignment for $\mathcal{L}^n$ where $n \geq 2$. The total number of contradiction triangles is at most 
    \[
    T_n\leq 2|V(G_w)|\Delta(G_w) \sum_{w\in \mathcal{L}^n_1}m_w (1-m_w) + \lambda|\mathcal{L}^n_1| |V(G_w)|.
    \]
    \end{corollary}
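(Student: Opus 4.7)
The plan is to sum the bound of Proposition \ref{contradiction_triangles_containing_w} over $w\in\mathcal{L}^n_1$. Each contradiction triangle $(w,x,y)$ has a uniquely specified first coordinate $w$, so the total count $T_n$ equals the sum over $w$ of the per-$w$ counts. Thus applying the per-$w$ upper bound and summing gives an upper bound on $T_n$.

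Before summing, I would observe that the Clifford group acts transitively on $\mathcal{L}^n_1$, so the graphs $G_w$ are mutually isomorphic as $w$ varies. In particular $|V(G_w)|$, $\Delta(G_w)$, and the spectral parameter $\lambda$ of $G_w$ are independent of the choice of $w$, so they can be pulled outside the sum. This yields
\[
T_n \leq 2|V(G_w)|\Delta(G_w)\sum_{w\in\mathcal{L}^n_1} m_w(1-m_w) \;+\; 2|V(G_w)|\lambda \sum_{w\in\mathcal{L}^n_1} \sqrt{m_w(1-m_w)}.
\]

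The first term already has the shape claimed in the corollary, so what remains is to simplify the second term. Since $m_w\in[0,\tfrac12]$ by definition (it is the minority fraction), the quadratic $t(1-t)$ is maximized at $t=\tfrac12$, giving $\sqrt{m_w(1-m_w)}\leq \tfrac12$ pointwise. Summing this trivial bound over $w\in\mathcal{L}^n_1$ gives $\sum_w \sqrt{m_w(1-m_w)}\leq \tfrac{1}{2}|\mathcal{L}^n_1|$, and the prefactor of $2$ cancels, producing $\lambda|\mathcal{L}^n_1||V(G_w)|$, exactly matching the claimed bound.

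There is no real obstacle here: the corollary is a direct summation of the preceding proposition combined with the elementary inequality $\sqrt{t(1-t)}\leq \tfrac12$ on $[0,1]$. The only detail worth spelling out is the transitivity remark justifying why the graph parameters factor out of the sum; once that is in place, the rest is bookkeeping.
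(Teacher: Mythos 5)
Your proposal is correct and follows exactly the paper's argument: sum the bound of Proposition \ref{contradiction_triangles_containing_w} over $w\in\mathcal{L}^n_1$ and apply $\sqrt{m_w(1-m_w)}\leq\frac{1}{2}$. Your added remark on transitivity, justifying that $|V(G_w)|$, $\Delta(G_w)$, and $\lambda$ do not depend on $w$, is a worthwhile detail the paper leaves implicit.
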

    \begin{proof}
        Use Proposition \ref{contradiction_triangles_containing_w} and sum over $w\in \mathcal{L}^n_1$ and use that $\sqrt{m_w(1-m_w)}\leq \frac{1}{2}$.
    \end{proof}
    
    \begin{lemma}\label{vertex_subgraph_count_lemma} Let $n\geq 2$ and $w\in \mathcal{L}^n_1$. Then
        \[|V(G_w)| = \frac{|\mathcal{L}^n_n| (2^n-1)}{|\mathcal{L}^n_1|}.\]
    \end{lemma}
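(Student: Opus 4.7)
The plan is to establish the identity by a straightforward double counting argument, using the fact that all $w \in \mathcal{L}^n_1$ are symmetric. I would begin by observing that the symplectic group $Sp(2n,\mathbb{Z}_2)$ acts transitively on $\mathcal{L}^n_1$ and that this action permutes $\mathcal{L}^n_n$. Consequently, the number $|V(G_w)| = |\mathcal{L}^n_n \cap {\uparrow}w|$ of Lagrangian subspaces containing $w$ does not depend on the choice of $w \in \mathcal{L}^n_1$; call this common value $N$.

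Next I would count the set $S = \{(w',x) : w' \in \mathcal{L}^n_1,\ x \in \mathcal{L}^n_n,\ w' \subseteq x\}$ in two ways. Summing over $w' \in \mathcal{L}^n_1$ first gives $|S| = |\mathcal{L}^n_1| \cdot N$, by the transitivity argument above. Summing over $x \in \mathcal{L}^n_n$ first, I would note that for each fixed maximal isotropic $x$ of dimension $n$, every $1$-dimensional subspace of $x$ is automatically isotropic (the symplectic form restricted to a Lagrangian vanishes identically, so it certainly vanishes on any line inside it). Thus the $w'$ with $w' \subseteq x$ are precisely the $1$-dimensional subspaces of the $n$-dimensional $\mathbb{Z}_2$-vector space $x$, of which there are $\frac{2^n-1}{2-1} = 2^n - 1$. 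Hence $|S| = |\mathcal{L}^n_n|(2^n-1)$.

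Equating the two counts yields $|\mathcal{L}^n_1| \cdot N = |\mathcal{L}^n_n|(2^n-1)$, and solving for $N$ gives the claimed formula. There is no real obstacle here; the only non-trivial input is the transitivity of $Sp(2n,\mathbb{Z}_2)$ on $\mathcal{L}^n_1$, which is a standard fact about symplectic geometry over $\mathbb{Z}_2$ and is either invoked from or proved in the appendix on Pauli measurements.
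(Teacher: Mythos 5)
Your proof is correct and is essentially the paper's own argument: a double count of the incidence pairs $(w',x)$ with $w'\in\mathcal{L}^n_1$, $x\in\mathcal{L}^n_n$, $w'\subseteq x$. You merely make explicit two points the paper leaves implicit (transitivity of $Sp(2n,\mathbb{Z}_2)$ on $\mathcal{L}^n_1$ to justify that $|V(G_w)|$ is independent of $w$, and the count $2^n-1$ of lines in a Lagrangian), which is fine.
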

    \begin{proof}
        We count the number of pairs $(w,x)$ with $w\in \mathcal{L}^n_1$ and $x\in \mathcal{L}^n_n$ such that $w\leq x$ in two ways. Either choose $w$ first, resulting in a count of $|\mathcal{L}^n_1||V(G_w)|$ or choose $x$ first which results in a count of $|\mathcal{L}^n_n| (2^{n}-1)$.
    \end{proof}
    
    \begin{lemma}\label{cval_approximation_lemma}
    Let $n$ be even. Then
    \[
    \mathbb{E}_{w\in \mathcal{L}^n_1}[m_w(1-m_w)]\geq \frac{(1-Val_{syn}(Z_{\frac{n}{2}}))Q2^{\frac{n}{2}-1}}{2|\mathcal{L}^n_n|(2^n-1)\Delta(G_w)}
    -\frac{\lambda}{2\Delta(G_w)}.
    \]

    where $Q$ is the number of questions of $Z_{\frac{n}{2}}$, $\lambda$ is the spectral parameter of $G_w$, and the expected value draws $w\in \mathcal{L}^n_1$ uniformly at random.
    \end{lemma}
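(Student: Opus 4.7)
The plan is to combine the upper and lower bounds on the number of contradiction triangles $T_n$ that are already in hand, and then rearrange. Specifically, Proposition \ref{triangle_count_prop} gives
\[
T_n \geq (1-Val_{syn}(Z_{\frac{n}{2}}))\,Q\, 2^{\frac{n}{2}-1},
\]
while Corollary \ref{num_triangle_cor} gives
\[
T_n \leq 2|V(G_w)|\Delta(G_w)\sum_{w\in \mathcal{L}^n_1} m_w(1-m_w) + \lambda\,|\mathcal{L}^n_1|\,|V(G_w)|.
\]
Chaining these and solving for the sum yields
\[
\sum_{w\in\mathcal{L}^n_1} m_w(1-m_w) \geq \frac{(1-Val_{syn}(Z_{\frac{n}{2}}))\,Q\, 2^{\frac{n}{2}-1} - \lambda\,|\mathcal{L}^n_1|\,|V(G_w)|}{2|V(G_w)|\Delta(G_w)}.
\]

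Next, I would divide through by $|\mathcal{L}^n_1|$ to convert the sum into the uniform expectation $\mathbb{E}_{w\in\mathcal{L}^n_1}[m_w(1-m_w)]$. This yields
\[
\mathbb{E}_w[m_w(1-m_w)] \geq \frac{(1-Val_{syn}(Z_{\frac{n}{2}}))\,Q\, 2^{\frac{n}{2}-1}}{2\,|\mathcal{L}^n_1|\,|V(G_w)|\,\Delta(G_w)} - \frac{\lambda}{2\Delta(G_w)}.
\]

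Finally, I apply Lemma \ref{vertex_subgraph_count_lemma}, which identifies the denominator product as $|\mathcal{L}^n_1|\,|V(G_w)| = |\mathcal{L}^n_n|(2^n-1)$. Substituting this into the first term gives exactly the bound stated in the lemma.

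There is no serious obstacle here; the hardest part has already been offloaded into the two triangle-counting statements (one via parallel repetition and the other via the expander mixing lemma), and the remaining work is just bookkeeping. The only thing to be careful about is that the expectation is with respect to the uniform distribution on $\mathcal{L}^n_1$, which is why dividing the sum bound by $|\mathcal{L}^n_1|$ is the right normalization, and that Lemma \ref{vertex_subgraph_count_lemma} allows a clean cancellation so that the final expression involves only $|\mathcal{L}^n_n|$, $2^n-1$, and $\Delta(G_w)$ in the denominator.
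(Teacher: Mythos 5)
Your proposal is correct and follows exactly the route the paper takes: chain the lower bound on $T_n$ from Proposition \ref{triangle_count_prop} against the upper bound from Corollary \ref{num_triangle_cor}, divide by $|\mathcal{L}^n_1|$ to obtain the uniform expectation, and use Lemma \ref{vertex_subgraph_count_lemma} to replace $|\mathcal{L}^n_1|\,|V(G_w)|$ by $|\mathcal{L}^n_n|(2^n-1)$. The algebra checks out, including the clean cancellation of $|V(G_w)|$ in the $\lambda$ term.
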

    \begin{proof}
        Bound the count of contradiction triangles in Proposition \ref{triangle_count_prop} via Corollary \ref{num_triangle_cor}, then replace $|V(G_w)|$ by applying Lemma \ref{vertex_subgraph_count_lemma}.
    \end{proof}
    
    The following lemma, proven in Appendix \ref{degree_count_section}, gives formulas the relevant quantities that appear in Lemma \ref{cval_approximation_lemma}.
    \begin{lemma}\label{cval_approximation_helper_lemma}
    \begin{enumerate} Let $n\in \mathbb{N}$ be even, $k\leq n$, and $Q$ be the number of questions of $Z_{\frac{n}{2}}$.
        \item \[|\mathcal{L}^n_k| = \prod_{i=0}^{k-1} \frac{2^{2n-i}-2^i}{2^{k}-2^i}.\]
        
        \item \[
        Q={n \choose \frac{n}{2} }_2 2^{\frac{n}{2}+1 \choose 2} |\mathcal{L}^n_n|. 
        \]
        \item \[
        \Delta(G_w) = 2{n-1\choose \frac{n}{2}-1}_2 2^{\frac{n}{2}+1 \choose 2}.
        \]
        \item \[
        \frac{Q}{|\mathcal{L}^n_n|\Delta(G_w)} = \frac{2^{\frac{n}{2}+1}-1}{2^{\frac{n}{2}}-1}\prod_{i=0}^{\frac{n}{2}-2} \frac{2^{n-i}-1}{2^{n-i-1}-1} \geq 2^{\frac{n}{2}}.
        \]
        \end{enumerate}
    \end{lemma}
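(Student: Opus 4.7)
All four parts reduce to counting exercises; the only nontrivial ingredient is the count of Lagrangian complements in a symplectic $\mathbb{F}_2$-space, which supplies the factor $2^{\binom{n/2+1}{2}}$ appearing in (2) and (3).

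For (1), I would count ordered generating $k$-tuples of isotropic subspaces. If $W_i$ denotes the isotropic $i$-dimensional subspace spanned by the first $i$ chosen vectors, the next must lie in $W_i^\perp\setminus W_i$, giving $2^{2n-i}-2^i$ choices. The total count of ordered bases is $\prod_{i=0}^{k-1}(2^{2n-i}-2^i)$, and dividing by $|GL_k(\mathbb{F}_2)|=\prod_{i=0}^{k-1}(2^k-2^i)$ produces the claimed formula.

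For (2) and (3), the arguments are parallel. A pair $(x,y)\in\mathcal{L}^n_n\times\mathcal{L}^n_n$ with $d(x,y)=n/2$ is determined by its intersection $W=x\cap y$ (an $n/2$-dimensional subspace of $x$) together with a choice of Lagrangian $y\supseteq W$ satisfying $y\cap x=W$. Passing to the $n$-dimensional symplectic quotient $W^\perp/W$, Lagrangians of $\mathbb{Z}_2^{2n}$ containing $W$ correspond bijectively to Lagrangians of $W^\perp/W$, and the condition $y\cap x=W$ says that the image of $y$ is a Lagrangian complement of the Lagrangian $x/W$. I would isolate as a lemma: in any $2m$-dimensional symplectic $\mathbb{F}_2$-space, a fixed Lagrangian $L_0$ has exactly $2^{\binom{m+1}{2}}$ Lagrangian complements. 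To prove this, fix a reference complement $L_1$ and write any candidate complement as the graph $\{v+Tv:v\in L_1\}$ of a linear map $T\colon L_1\to L_0$; the Lagrangian condition then becomes symmetry of the bilinear form $B(u,v)=\omega(u,Tv)$ on $L_1$, and the count $2^{\binom{m+1}{2}}$ is just the number of symmetric bilinear forms on $\mathbb{F}_2^m$ (a free upper triangle plus diagonal). Applying this with $m=n/2$, then multiplying by the number of admissible $W$'s (namely $\binom{n}{n/2}_2$ arbitrary $n/2$-dim subspaces of $x$ for (2), and $\binom{n-1}{n/2-1}_2$ of them that additionally contain $w$ for (3)), and finally multiplying (2) by $|\mathcal{L}^n_n|$ to turn a per-vertex count into the pair count $Q$, delivers the displayed formulas.

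For (4), forming $Q/(|\mathcal{L}^n_n|\,\Delta(G_w))$ cancels the shared factor $2^{\binom{n/2+1}{2}}$ and leaves $\binom{n}{n/2}_2/\binom{n-1}{n/2-1}_2$. Applying the standard Gaussian-binomial identity $\binom{n}{k}_q=\binom{n-1}{k-1}_q\cdot(q^n-1)/(q^k-1)$ collapses this ratio to $(2^n-1)/(2^{n/2}-1)=2^{n/2}+1\geq 2^{n/2}$; unpacking the same quantity as a telescoping product over $i=0,\dots,n/2-2$ against the leading factor $(2^{n/2+1}-1)/(2^{n/2}-1)$ recovers the precise expression stated in the lemma. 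The main obstacle is really just the Lagrangian-complement count in the isolated lemma; once that is in hand, every remaining manipulation is routine Gaussian-binomial bookkeeping.
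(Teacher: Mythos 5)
Your proposal is correct and follows essentially the same route as the paper: bases divided by overcounts for (1), decomposition of a distance-$\frac{n}{2}$ pair into an intersection plus a Lagrangian complement for (2) and (3) (the paper counts the complements one basis vector at a time, $2^{n/2}\cdot 2^{n/2-1}\cdots 2^{1}$, arriving at the same $2^{\binom{n/2+1}{2}}$ that you obtain by counting symmetric bilinear forms), and a telescoping Gaussian-binomial identity for (4). One remark: your derivation yields $\Delta(G_w)={n-1\choose \frac{n}{2}-1}_2\, 2^{\binom{n/2+1}{2}}$, \emph{without} the leading factor of $2$ in the stated formula of part (3); this is not a gap on your side, since the paper's own proof of that item produces the same expression without the $2$, and the version without the $2$ is the one consistent with the closed form $2^{n/2}+1$ in part (4), with the degree of $C^{n/2}_{n-1}(2)$ as used in the proof of Proposition \ref{half_graph_spectrum_prop}, and with the sanity check $n=2$ (where $G_w\simeq K_3$ has degree $2$, not $4$). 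The factor of $2$ in the statement of (3) appears to be a typo rather than something your argument fails to capture.
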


   We apply Lemma \ref{cval_approximation_helper_lemma} to Lemma \ref{cval_approximation_lemma} and simplify.
    \begin{proposition}\label{cval_prop}
    \[Cval(\mathcal{L}^n) = \mathbb{E}_{w\in \mathcal{L}^n_1}[m_w(1-m_w)]\geq \frac{1-Val_{syn}(Z_{\frac{n}{2}})}{4} - \frac{\lambda}{2\Delta(G_w)}.\]

    The expected value draws $w\in \mathcal{L}^n_1$ uniformly at random.
    \end{proposition}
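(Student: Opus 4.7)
The plan is to combine Lemma \ref{cval_approximation_lemma} with the arithmetic identities of Lemma \ref{cval_approximation_helper_lemma} to clear the combinatorial factors. Lemma \ref{cval_approximation_lemma} already isolates $\mathbb{E}_{w}[m_w(1-m_w)]$ as a lower bound of the form (main term) $-\frac{\lambda}{2\Delta(G_w)}$, so the error term already matches the statement exactly. All that remains is to rewrite the main term
\[
\frac{(1-Val_{syn}(Z_{\frac{n}{2}}))\,Q\,2^{\frac{n}{2}-1}}{2\,|\mathcal{L}^n_n|\,(2^n-1)\,\Delta(G_w)}
\]
as (at least) $\tfrac{1-Val_{syn}(Z_{\frac{n}{2}})}{4}$.

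The key substitution is part $4$ of Lemma \ref{cval_approximation_helper_lemma}, which says $\frac{Q}{|\mathcal{L}^n_n|\,\Delta(G_w)} \geq 2^{\frac{n}{2}}$. Plugging this in collapses the numerator $Q \cdot 2^{\frac{n}{2}-1}$ against the denominator $|\mathcal{L}^n_n|\,\Delta(G_w)$ and leaves
\[
\frac{(1-Val_{syn}(Z_{\frac{n}{2}}))\,2^{\frac{n}{2}}\cdot 2^{\frac{n}{2}-1}}{2\,(2^n-1)}
\;=\;\frac{(1-Val_{syn}(Z_{\frac{n}{2}}))\,2^{n}}{4\,(2^n-1)}.
\]
Since $\frac{2^n}{2^n-1}\geq 1$ for all $n\geq 1$, this is at least $\frac{1-Val_{syn}(Z_{\frac{n}{2}})}{4}$, giving the stated lower bound on $\mathbb{E}_w[m_w(1-m_w)]$.

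Finally, since Lemma \ref{cval_approximation_lemma} applies to an arbitrary contextual hidden variable assignment $f$, we can take the infimum over $f$; by Definition \ref{cval_def}, this infimum is precisely $Cval(\mathcal{L}^n)$, which justifies writing the displayed equality. The main obstacle here is entirely bookkeeping: one has to be careful that the $2^{\frac{n}{2}-1}$ factor produced by Proposition \ref{triangle_count_prop} cancels correctly against the $2^{\frac{n}{2}}$ arising from the ratio $Q/(|\mathcal{L}^n_n|\,\Delta(G_w))$, and that the $(2^n-1)$ denominator (coming from Lemma \ref{vertex_subgraph_count_lemma}) is absorbed by the numerator $2^n$ with only harmless slack. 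No new ideas are required beyond the two cited lemmas.
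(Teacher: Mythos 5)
Your proposal is correct and follows exactly the route the paper intends: its proof of this proposition is the single line "apply Lemma \ref{cval_approximation_helper_lemma} to Lemma \ref{cval_approximation_lemma} and simplify," and your computation (substituting part $4$ of the helper lemma, cancelling $2^{\frac{n}{2}-1}\cdot 2^{\frac{n}{2}} = 2^{n-1}$ against the $2\cdot(2^n-1)$ in the denominator, and noting $\frac{2^n}{2^n-1}\geq 1$) is precisely that simplification, with the quantification over $f$ handled correctly at the end.
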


    Finally, we apply Proposition \ref{half_graph_spectrum_prop} and obtain the inconsistency bound.
    
    \begin{theorem}[Inconsistency bound]\label{inconsistency_thm}
        There exists  $\epsilon \in (0,1)$ such that for each even $n\in \mathbb{N}$, \[Cval(\mathcal{L}^n) \geq \frac{1}{4} - \epsilon^n.\]
    \end{theorem}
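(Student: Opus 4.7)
The plan is that this theorem is the final synthesis step, so essentially all the hard work has been done in the build-up and what remains is to combine three earlier results cleanly. The key inputs I would use are Proposition \ref{cval_prop}, which provides the structural inequality
\[
Cval(\mathcal{L}^n) \;\geq\; \frac{1-Val_{syn}(Z_{\frac{n}{2}})}{4} \;-\; \frac{\lambda}{2\Delta(G_w)},
\]
together with Corollary \ref{half_game_cor}, which bounds $Val_{syn}(Z_{\frac{n}{2}})$ by $\epsilon_1^n$ for some $\epsilon_1 \in (0,1)$, and Proposition \ref{half_graph_spectrum_prop}, which tells us that the spectral ratio of $G_w$ satisfies $\lambda/\Delta(G_w) \leq 2^{-n/2}$.

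My approach would be in two short steps. First, substitute the parallel-repetition-based bound from Corollary \ref{half_game_cor} into the first term of Proposition \ref{cval_prop}, yielding $(1-\epsilon_1^n)/4 = \tfrac{1}{4} - \tfrac{1}{4}\epsilon_1^n$. Second, substitute the spectral ratio bound from Proposition \ref{half_graph_spectrum_prop} into the error term, getting at most $\tfrac{1}{2}\cdot 2^{-n/2} = 2^{-(n/2+1)}$, which is an exponential of the form $\epsilon_2^n$ with $\epsilon_2 = 2^{-1/2}$. Adding these two exponentially small corrections yields
\[
Cval(\mathcal{L}^n) \;\geq\; \frac{1}{4} - \frac{\epsilon_1^n}{4} - \epsilon_2^n.
\]
Finally, set $\epsilon = \max(\epsilon_1,\epsilon_2)$ and absorb the constant factors to conclude $Cval(\mathcal{L}^n) \geq \tfrac{1}{4} - \epsilon^n$ (possibly after adjusting $\epsilon$ slightly upward, still strictly below $1$, to swallow the constants for every $n \geq 2$; for small even $n$ the inequality can be enforced directly since $Cval \geq 0$ is trivial while $\tfrac{1}{4} - \epsilon^n$ can be made negative by choosing $\epsilon$ close enough to $1$).

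There is no real mathematical obstacle at this stage, since the nontrivial content lies in the parallel repetition bound on $Val_{loc}(Z_1^{\otimes n})$, the embedding $Val_{loc}(Z_{n/2}) \leq Val_{loc}(Z_1^{\otimes n})$, and the spectral estimate for $G_w$, all of which are established earlier. The only mild bookkeeping issue is choosing a single $\epsilon$ that works uniformly in $n$: the cleanest way is to take $\epsilon$ strictly larger than both $\epsilon_1$ and $2^{-1/2}$ so that $\tfrac{1}{4}\epsilon_1^n + 2^{-(n/2+1)} \leq \epsilon^n$ for all sufficiently large $n$, then enlarge $\epsilon$ further (still in $(0,1)$) so that the bound also covers the finitely many small cases using the trivial inequality $Cval(\mathcal{L}^n) \geq 0$.
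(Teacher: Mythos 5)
Your proposal is correct and follows exactly the route the paper takes: the paper's entire "proof" of the theorem is the single sentence "we apply Proposition \ref{half_graph_spectrum_prop}" to Proposition \ref{cval_prop}, with Corollary \ref{half_game_cor} supplying the bound on $Val_{syn}(Z_{\frac{n}{2}})$, and the constants absorbed into a single $\epsilon\in(0,1)$ just as you describe. Your explicit handling of the uniform choice of $\epsilon$ and the small-$n$ cases is more careful than the paper's own presentation.
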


\section{Applications}\label{applications_section}
    In this section, we expand on the idea that bounds on partial hidden variable assignments for $\mathcal{L}^n$ can be interpreted as contextuality, as introduced in \cite{Cabello_2014} and formalized in \cite{Acn2015}. Roughly speaking, contextuality is the idea that outcomes of measurements are sensitive to which other measurements are performed simultaneously.
    
    In this formalism, contextuality is described by a graph $G$ such that $\vartheta(G) >\alpha(G)$, where $\vartheta(G)$ is the Lovasz-theta number defined in \cite{Lovasz79} and $\alpha(G)$ is the independence number.
    
    Let us recall the definition of $\vartheta(G)$, since it is relevant to contextuality.
    
    \begin{definition}[Orthonormal representation]
    Let $G$ be a graph. An orthonormal representation is a function $f:V(G)\to \mathbb{C}^n$ (for some $n\in \mathbb{N}$) that sends vertices of $G$ to unit vectors such that if $v \sim w$ then $\left<f(v), f(w)\right>=0$.
    \end{definition}
    \begin{definition}\label{theta_definition}
         Let $G$ be a graph.
         
         Then $\vartheta(G)$ is defined as
         
         \[\vartheta(G)\coloneqq \max_{f,\ket{\psi}} \sum_{v\in V(G)} |\braket{\psi|f(v)}|^2\]
         
         where $f$ ranges over all orthonormal representations of $G$ and $\ket{\psi}$ ranges over all unit vectors in $\mathbb{C}^n$.
    \end{definition}

    \begin{remark}
        There are two competing, dual conventions for the definition of orthonormal representations. Also, orthonormal representations are usually defined as vector arrangements in $\mathbb{R}^n$ rather than $\mathbb{C}^n$. Our definition was chosen because it gives a natural interpretation for orthonormal representations in terms of physics.

        Though our definition of orthonormal representations is non-standard, our definition of $\vartheta(G)$ is equivalent to the standard definition from \cite{Lovasz79}.
    \end{remark}
    
    Let $S_n$ be the orthogonality graph\footnote{meaning that $x\sim y \iff \left<x|y\right>=0$} of stabilizer vectors, identified up to phase. We will calculate $\alpha(S_n)$ and $\vartheta(S_n)$ to analyze the contextuality of the Pauli measurements. First, we develop an equivalent description of $S_n$ using the concepts developed in Section \ref{basics_section}.
    \begin{proposition}\label{pauli_measurements_orthogonalities_prop}
        Let $s,z\in V(S_n)$. Then $s\sim z$ iff $s$ and $z$ are inconsistent outcomes of two measurements $x,y\in \mathcal{L}^n_n$.
    \end{proposition}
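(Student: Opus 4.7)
The plan is first to identify the vertex set $V(S_n)$ with the collection of outcomes of maximal Pauli measurements. Every stabilizer vector $s$ (up to global phase) is the unique joint $+1$-eigenvector of some maximal abelian subgroup $S \subset \mathcal{P}(n)$ with $-I \notin S$; applying $coord$ to $S$ yields some $x \in \mathcal{L}^n_n$, and the choice of signs of the Hermitian lifts in $S$ gives an outcome of $x$, which I record as a function $f_s : x \to \mathbb{Z}_2$. This produces a bijection $s \leftrightarrow (x, f_s)$. Since $x \cap y$ is the maximum element of $\{w \in \mathcal{L}^n : w \leq x \cap y\}$, Definition \ref{consistency_def} reduces to the statement that $s$ and $z$ are consistent iff $f_s = f_z$ on $x \cap y$.

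Next I would compute the overlap directly from stabilizer projectors. Writing $|s\rangle\langle s| = 2^{-n}\sum_{P \in S} P$ and $|z\rangle\langle z| = 2^{-n}\sum_{Q \in Z} Q$, and using the fact that for Hermitian Pauli operators $\mathrm{Tr}(PQ) = \pm 2^n$ if $PQ = \pm I$ and $0$ otherwise, I obtain
\[
|\langle s|z\rangle|^2 \;=\; \mathrm{Tr}\bigl(|s\rangle\langle s|\cdot|z\rangle\langle z|\bigr) \;=\; 2^{-n}\bigl(m^+ - m^-\bigr),
\]
where $m^{\pm} := |\{P \in S : \pm P \in Z\}|$. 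Every $P$ counted by $m^{\pm}$ has $coord(P) \in x \cap y$, and conversely each $p \in x \cap y$ has a unique Hermitian lift in $S$ and a unique one in $Z$; define $g(p) \in \mathbb{Z}_2$ to record whether these two lifts agree. Then $m^+ = |\ker g|$ and $m^- = |x \cap y| - |\ker g|$, and unwinding definitions gives $g = f_s - f_z$.

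The technical heart is checking that $g$ is a linear functional on $x \cap y$. Multiplication in $\mathcal{P}(n)$ introduces $i$-factors controlled by the symplectic form, but since $x \cap y$ is isotropic inside both $x$ and $y$, these factors cancel: the Hermitian lift of $p + q$ in $S$ is the product of the Hermitian lifts of $p$ and $q$ up to a sign determined by the symplectic form, and exactly the same sign arises for $Z$. Hence the agreement bit satisfies $g(p + q) = g(p) + g(q)$. This sign bookkeeping is the main obstacle I anticipate; once linearity is granted the conclusion is immediate. If $f_s \equiv f_z$ on $x \cap y$, then $g \equiv 0$, $m^+ = 2^{\dim(x \cap y)}$, $m^- = 0$, and $|\langle s|z\rangle|^2 = 2^{\dim(x \cap y) - n} > 0$, so $s \not\sim z$; if $f_s \not\equiv f_z$, then $g$ is a nonzero linear functional, so $|\ker g| = \tfrac{1}{2}|x \cap y|$, forcing $m^+ = m^-$ and $s \perp z$.
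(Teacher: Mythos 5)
Your proposal is correct, but it takes a genuinely different route from the paper. The paper's proof is two lines: the forward direction (inconsistency $\Rightarrow$ orthogonality) is immediate because inconsistent outcomes lie in different eigenspaces of a common observable, and the converse is handled by citing the standard fact from Aaronson--Gottesman that any two orthogonal stabilizer states are distinguished by some Pauli operator $w$, which is then extended to maximal measurements $x,y$. You instead prove that standard fact from scratch via the projector expansion $\ket{s}\bra{s}=2^{-n}\sum_{P\in S}P$ and the trace orthogonality of Pauli operators, reducing $|\braket{s|z}|^2$ to a character sum over $x\cap y$ weighted by the disagreement function $g=f_s-f_z$. This buys you more than the paper's argument: you get the exact overlap $|\braket{s|z}|^2 = 2^{\dim(x\cap y)-n}$ in the consistent case and an entirely self-contained proof, at the cost of length. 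One remark on the step you flag as the ``technical heart'': the linearity of $g$ is easier than you anticipate. Because $S$ and $Z$ are abelian subgroups of $\mathcal{P}(n)$ not containing $-I$, the Hermitian lift of $p+q$ in $S$ is \emph{exactly} the product of the lifts of $p$ and $q$ (it is the unique element of $S$ with coordinate $p+q$), and likewise in $Z$; no symplectic sign correction appears, and $g(p+q)=g(p)+g(q)$ follows by comparing $P_1P_2=(-1)^{g(p_1)+g(p_2)}Q_1Q_2$. (This is also essentially the content of Corollary \ref{disagreement_on_half_cor} in the appendix, which you could cite instead of reproving.)
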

    \begin{proof}
        If $s,z$ are inconsistent outcomes, then there is some $w\in \mathcal{L}^n$ such that $s$ and $z$ are in different eigenspaces of $w$, so $s\sim z$.

        Conversely, we use a standard fact about
          $\mathcal{P}(n)$. If $s\sim z \in S_n$ then there is some $w\in \mathcal{P}(n)$ such that $s$ and $z$ are in different eigenspaces of $w$ \cite{Aaronson_2004}. We extend $w$ to maximal Pauli measurements for $s$ and $z$ to obtain $x$ and $y$.
    \end{proof}
    
    The above proposition shows that $S_n$ is isomorphic to the graph of outcomes of measurements in $\mathcal{L}^n_n$ where adjacency is inconsistency in the sense of Definition \ref{consistency_def}. The next proposition expands on this equivalence. It shows that we can view partial hidden variable assignments of $\mathcal{L}^n$ as independent sets of $S_n$.
    
    \begin{proposition}\label{pval_independence_prop}
        There is a bijection between partial hidden variable assignments $f$ of $\mathcal{L}^n$ and independent sets $A \subset V(S_n)$. Under this bijection, $A \leftrightarrow Im(f)$. Hence, $\alpha(S_n)=Pval(\mathcal{L}^n)|\mathcal{L}^n_n|$.
    \end{proposition}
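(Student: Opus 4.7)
The plan is to construct an explicit bijection between partial hidden variable assignments of $\mathcal{L}^n$ and independent sets of $S_n$, and then observe that $|Dom(f)| = |Im(f)|$ under this bijection, which will yield the cardinality statement.

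First, I would check that the forward map $f \mapsto Im(f)$ lands in the independent sets of $S_n$. By Proposition \ref{pauli_measurements_orthogonalities_prop}, two stabilizer states are adjacent in $S_n$ exactly when they are inconsistent outcomes of some pair of measurements in $\mathcal{L}^n_n$. Since the image of a partial hidden variable assignment is consistent by definition, no two elements of $Im(f)$ can be adjacent, so $Im(f)$ is independent.

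For the inverse map, given an independent set $A \subset V(S_n)$, I would define $f_A(x)$ to be the unique outcome of $x$ that lies in $A$, leaving $f_A(x)$ undefined if no outcome of $x$ lies in $A$. Uniqueness is immediate: any two distinct outcomes of the same $x$ are orthogonal $1$-dimensional eigenspaces, hence adjacent in $S_n$, so at most one outcome of $x$ can lie in an independent set. Consistency of the resulting $f_A$ then follows directly from independence of $A$ via Proposition \ref{pauli_measurements_orthogonalities_prop}.

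The main technical content---and the step I expect to be the most delicate---is verifying that distinct $x, y \in \mathcal{L}^n_n$ have disjoint sets of outcomes, which is precisely what forces $|Dom(f)| = |Im(f)|$ under the bijection. Here I would invoke the standard fact that a stabilizer state $\ket{\psi}$ uniquely determines its stabilizer subgroup of $\mathcal{P}(n)$, namely the Pauli operators fixing $\ket{\psi}$; this subgroup is a maximal abelian subgroup, and passing to the quotient by the center $\{\pm I, \pm iI\}$ recovers a unique $n$-dimensional isotropic subspace, i.e.\ a unique element of $\mathcal{L}^n_n$. Given this, the two maps above are mutually inverse, $|Dom(f)| = |Im(f)|$ for every $f$, and taking the supremum over $f$ (equivalently over independent sets $A$) yields $\alpha(S_n) = Pval(\mathcal{L}^n)|\mathcal{L}^n_n|$.
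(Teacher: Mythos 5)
Your proposal is correct and follows essentially the same route as the paper: map $f \mapsto Im(f)$, use Proposition \ref{pauli_measurements_orthogonalities_prop} in both directions, and recover $f$ from an independent set via the unique outcome (if any) lying in the set. Your explicit check that distinct maximal measurements have disjoint outcome sets (via uniqueness of the stabilizer group of a stabilizer state) is a detail the paper leaves implicit but which is genuinely needed to pass from $|Im(f)|$ to $|Dom(f)|$ in the final cardinality identity.
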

    \begin{proof}
        Let $f$ be a partial hidden variable assignment and let $A=Im(f)\subset V(S_n)$.

        Proposition \ref{pauli_measurements_orthogonalities_prop} shows that $A$ is indeed an independent set.
        
        Conversely, if $A\subset V(S_n)$ is an independent set, no two vertices of $A$ are outcomes of the same measurement in $\mathcal{L}^n_n$, because outcomes of a measurement are pairwise orthogonal. We define a partial hidden variable assignment for $\mathcal{L}^n$ by assigning measurements of $\mathcal{L}^n_n$ their unique outcome in $A$ if it exists and leaving the measurement unassigned otherwise. This partial hidden variable assignment cannot contain an inconsistency, else there would be two vertices in $A$ corresponding to vectors that lie in orthogonal eigenspaces of some measurement in $\mathcal{L}^n$. Since orthogonality defines adjacency of $S_n$, the inconsistency of the assignment contradicts that $A$ is an independent set. Therefore, the correspondence introduced in the first paragraph is indeed a bijection.
    \end{proof}

    \begin{proposition}\label{theta_count_prop}
        $\vartheta(S_n) = |\mathcal{L}^n_n|= \prod_{i=0}^{n-1} \frac{2^{2(n-i)}-1}{2^{n-i}-1}$.
    \end{proposition}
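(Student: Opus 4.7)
The plan is to prove both inequalities $\vartheta(S_n) \geq |\mathcal{L}^n_n|$ and $\vartheta(S_n) \leq |\mathcal{L}^n_n|$ simultaneously by exploiting a single structural fact: the outcomes of the maximal Pauli measurements partition $V(S_n)$ into $|\mathcal{L}^n_n|$ orthonormal bases of $\mathbb{C}^{2^n}$. To see this is a partition rather than just a covering, I would argue that each stabilizer state $|\psi\rangle$ is uniquely determined by its maximal stabilizer group; equivalently, in the $\mathcal{L}^n$ picture, a stabilizer vector corresponds bijectively to a pair (maximal isotropic subspace $V \in \mathcal{L}^n_n$, character on $V$), so it is an outcome of exactly one measurement in $\mathcal{L}^n_n$.

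For the lower bound, I would use the tautological orthonormal representation $f(v) = \ket{v}$ sending a stabilizer vector to itself in $\mathbb{C}^{2^n}$, which is valid because Proposition \ref{pauli_measurements_orthogonalities_prop} identifies adjacency in $S_n$ with orthogonality of the underlying vectors. Choosing any unit $\ket{\psi} \in \mathbb{C}^{2^n}$ and grouping the sum in Definition \ref{theta_definition} according to the partition above, Parseval's identity gives $\sum_{v \in \text{outcomes}(m)} |\braket{\psi|v}|^2 = 1$ for each $m \in \mathcal{L}^n_n$, and summing over the $|\mathcal{L}^n_n|$ measurements yields exactly $|\mathcal{L}^n_n|$.

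For the upper bound I would apply the standard clique-cover bound on $\vartheta$: for any orthonormal representation $f$ and any unit $\ket{\psi}$, within each clique $C_m = \text{outcomes}(m)$ the vectors $\{f(v)\}_{v \in C_m}$ are orthonormal in the codomain, so Bessel's inequality yields $\sum_{v \in C_m} |\braket{\psi|f(v)}|^2 \leq 1$. Summing over the partition gives $\sum_{v \in V(S_n)} |\braket{\psi|f(v)}|^2 \leq |\mathcal{L}^n_n|$, and taking the supremum over $f$ and $\ket{\psi}$ finishes the bound.

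The closed-form expression $|\mathcal{L}^n_n| = \prod_{i=0}^{n-1} \frac{2^{2(n-i)}-1}{2^{n-i}-1}$ then follows immediately from Lemma \ref{cval_approximation_helper_lemma}(1) with $k = n$, after factoring $2^i$ out of both numerator and denominator. The main conceptual obstacle is verifying that the maximal cliques really partition $V(S_n)$ rather than merely cover it; this hinges on the uniqueness of the maximal stabilizer subgroup of a stabilizer state, which I would treat as a direct corollary of the identification of stabilizer vectors with (isotropic subspace, character) pairs established in Appendix \ref{Pauli_measurement_section}.
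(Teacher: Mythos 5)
Your proposal is correct and follows essentially the same route as the paper: the tautological orthonormal representation of stabilizer vectors plus the partition of $V(S_n)$ by maximal measurements gives the lower bound, and the clique-cover/Bessel argument is exactly the $\vartheta(S_n)\leq\chi(\bar{S}_n)$ bound the paper invokes for the upper bound. Your explicit Bessel's-inequality justification of the upper bound is a slightly more careful spelling-out of what the paper states in one line, but it is the same idea.
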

    \begin{proof}
    The stabilizer vectors are unit vectors in $\mathbb{C}^{2^n}$. The edges are defined by orthogonalities, so the stabilizer vectors are 
an orthonormal representation of $S_n$. Let $\ket{\psi}$ be any unit vector in $\mathbb{C}^{2^n}$. Our particular vectors are feasible for the program defining $\vartheta(S_n)$, and so define a lower bound as we now explain.
    
    We interpret the summands in Definition \ref{theta_definition} physically. The summand corresponding to $v\in V(S_n)$ is the probability that the measurement $\ket{f(v)}\bra{f(v)}$ gives outcome $1$ when measured on a system in state $\ket{\psi}\bra{\psi}$. Each stabilizer vector is an outcome of a unique measurement of $\mathcal{L}^n_n$, so $\mathcal{L}^n_n$ partitions $V(S_n)$. The outcomes of $x\in \mathcal{L}^n_n$ are $1$-dimensional subspaces, so we can represent the outcomes by units vector in the appropriate $1$-dimensional subspaces. Denote the outcome vectors for the measurement $x\in\mathcal{L}^n_n$ by $out(x)$. We group the summands according to this partition and obtain

    \[\vartheta(S_n)\geq \sum_{x\in \mathcal{L}^n_n} \sum_{\ket{s}\in out(x)} |\braket{\psi|s}|^2.\]
    
    For any measurement, the sum of probabilities of the outcomes is $1$, so this natural orthonormal representation gives a lower bound $|\mathcal{L}^n_n| \leq \vartheta(S_n)$.
    
    For the upper bound, observe that the sum of probabilities of outcomes for the measurements in $\mathcal{L}^n_n$ cannot exceed $1$, so a larger value for $\vartheta(S_n)$ is not possible. An equivalent way to see the upper bound is via the inequality from \cite{Knuth94}, $\vartheta(S_n)\leq \chi(\bar{S}_n)$.
    
    The formula for $|\mathcal{L}^n_k|$ is described in Lemma \ref{cval_approximation_helper_lemma}. Setting $k=n$ gives the second equality.
    
    \end{proof}
    
    \begin{remark}
        The orthonormal representation of stabilizer vectors is particularly nice in every unit vector $\ket{\psi}$ achieves the maximum in Definition \ref{theta_definition}. In other words, the maximal Pauli measurements exhibit state-independent contextuality in the sense of \cite{Cabello2015}.
    \end{remark}
    
    Using Theorem \ref{incompletness_thm} and Propositions \ref{theta_count_prop} and \ref{pval_independence_prop}, we easily obtain the next theorem.
    \begin{theorem}\label{stabilizer_alpha_theta_thm}
        \[\frac{\alpha(S_n)}{\vartheta(S_n)} \leq 2^{-n +O(1)}.\]
    \end{theorem}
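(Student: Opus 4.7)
The plan is to chain together the three ingredients that the excerpt has already assembled, since the stated bound is essentially a restatement of the incompleteness bound in graph-theoretic language. The relevant pieces are Proposition \ref{pval_independence_prop}, which identifies $\alpha(S_n)$ with $Pval(\mathcal{L}^n)\cdot|\mathcal{L}^n_n|$, Proposition \ref{theta_count_prop}, which computes $\vartheta(S_n)=|\mathcal{L}^n_n|$, and Theorem \ref{incompletness_thm}, which gives the exponential upper bound on $Pval(\mathcal{L}^n)$.

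First I would form the ratio $\alpha(S_n)/\vartheta(S_n)$ and substitute using the two equalities above. The factor $|\mathcal{L}^n_n|$ appears in the numerator (via $\alpha(S_n)$) and equals the denominator (via $\vartheta(S_n)$), so it cancels and leaves exactly $Pval(\mathcal{L}^n)$. Then I would invoke Theorem \ref{incompletness_thm} to conclude that this quantity is at most $2^{-n+O(1)}$, which is the claimed bound.

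There is really no obstacle, since all the conceptual work has been done in the preceding propositions: the bijection between partial hidden variable assignments and independent sets of $S_n$ handles the $\alpha$ side, the natural orthonormal representation by stabilizer vectors handles the $\vartheta$ side, and the spectral bound on $B_{n,2}$ (through Proposition \ref{partial_var_prop} and Proposition \ref{bipartite_spectrum_prop}) supplies the exponential decay. The only care needed is simply to note that the normalization factors $|\mathcal{L}^n_n|$ match exactly, so no extra $\prod_{i}(2^{2(n-i)}-1)/(2^{n-i}-1)$ terms leak into the final estimate.
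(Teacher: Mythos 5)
Your proposal is correct and is exactly the argument the paper intends: it cites Propositions \ref{pval_independence_prop} and \ref{theta_count_prop} together with Theorem \ref{incompletness_thm} and states that the result follows easily, which is precisely the cancellation of $|\mathcal{L}^n_n|$ you describe. No further comment is needed.
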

    
    In \cite{Amaral_2015}, a natural measure of contextuality was introduced that also controls for the number of vertices of the graph. Their measure applies to families of graphs. We introduce an equivalent measure that applies to particular graphs.
    
    \begin{definition}
         Let $G$ be a graph. Define \[T(G)\coloneqq \frac{\log(\frac{\vartheta(G)}{\alpha(G)})}{\log(|V(G)|)}.\]
    \end{definition}

    Standard calculations show the following equivalence between $T$ and absolute maximal contextuality defined in \cite{Amaral_2015}.
    
    \begin{proposition}
        A family of graphs  $\{G_k\}_{k=1}^\infty$ exhibits maximal absolute contextuality in the sense of \cite{Amaral_2015} iff $\lim_{k\to \infty} T(G_k)= 1$.
    \end{proposition}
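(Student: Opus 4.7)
The plan is to unpack both sides of the claimed equivalence and show they are two notations for the same asymptotic statement. The definition in \cite{Amaral_2015} of absolute maximal contextuality for a family $\{G_k\}$ is (in essence) the condition that the ratio $\vartheta(G_k)/\alpha(G_k)$ scales like $|V(G_k)|^{1-o(1)}$, i.e.\ saturates the only nontrivial upper bound available in terms of vertex count. So the whole proof amounts to verifying that $T(G_k)\to 1$ is literally the logarithmic form of this scaling, together with checking that $T$ is well-defined and bounded by $1$.

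First I would establish $0\leq T(G)\leq 1$ for any graph with $\alpha(G)\geq 1$, by combining the standard sandwich $\alpha(G)\leq \vartheta(G)\leq \chi(\bar G)\leq |V(G)|$ (the first inequality is the Lovasz sandwich and gives nonnegativity; the chain of inequalities to $|V(G)|$ gives the upper bound). Taking logs and dividing by $\log|V(G)|$ produces the two-sided bound on $T(G)$ and, crucially, shows that $1$ is the only possible ``maximal'' limit.

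Next I would translate. Write $\alpha(G_k)=|V(G_k)|^{a_k}$ and $\vartheta(G_k)=|V(G_k)|^{v_k}$ with $a_k,v_k\in[0,1]$. Then
\[
T(G_k)=\frac{\log(\vartheta(G_k)/\alpha(G_k))}{\log|V(G_k)|}=v_k-a_k,
\]
so $T(G_k)\to 1$ is equivalent to $v_k\to 1$ and $a_k\to 0$ jointly (since each is bounded in $[0,1]$ and their difference tends to its maximum). This is exactly the asymptotic statement that $\vartheta(G_k)/\alpha(G_k)$ grows like $|V(G_k)|^{1-o(1)}$, which by the definition in \cite{Amaral_2015} is absolute maximal contextuality. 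To finish the ``only if'' direction, one verifies that any family satisfying the Amaral scaling condition, when log-normalized, produces $T(G_k)\to 1$; the ``if'' direction is the converse substitution into their definition.

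The main obstacle is purely bibliographic rather than mathematical: one has to state cleanly the definition from \cite{Amaral_2015} (which is phrased in terms of a limit of a ratio that may be normalized slightly differently, e.g.\ using $\log\vartheta/\log|V|$ versus $\log(\vartheta/\alpha)/\log|V|$) and verify that these reformulations coincide when $\alpha(G_k)$ is subpolynomial in $|V(G_k)|$. Once the definitions are aligned, the equivalence is immediate from the identity $T(G_k)=v_k-a_k$ above, justifying the ``standard calculations'' remark preceding the proposition.
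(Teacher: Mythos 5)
Your proposal is correct and is precisely the ``standard calculation'' the paper invokes without writing out: the paper gives no explicit proof of this proposition, and your log-normalization of the ratio $\vartheta(G_k)/\alpha(G_k)$ (together with the sandwich bounds giving $0\le T(G)\le 1$) is the intended argument. The only caveats, which you already flag, are aligning the exact normalization in the cited definition and noting that $|V(G_k)|\to\infty$ is implicitly required for the limit $T(G_k)\to 1$ to be attainable.
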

    Since $V(S_n) = 2^n |\mathcal{L}^n_n| \approx 2^{n^2}$, 
Theorem \ref{stabilizer_alpha_theta_thm} implies $\lim_{n\to \infty}T(S_n) = 0$ and the stabilizer graphs $S_n$ do not exhibit contextuality in the sense of \cite{Amaral_2015}, essentially because they have too many vertices. We describe a construction that rectifies this.

    It is easy to calculate that $T(G) = T(G\otimes G)$ for any graph $G$, where $\otimes$ is the disjunctive product. The $k$-fold disjunctive product of $G$ with itself is isomorphic to the orthogonality graph of outcomes of the Pauli measurements on $k$ separate systems. In other words, the corresponding physical scenario to $G^{\otimes k}$ has a list of $k$ Pauli measurements to be performed on $k$ separate systems as its measurements.

    Our construction is to sample the 
    tuples $(\mathcal{L}^n_n)^{k}_{i=1}$. Instead of all tuples, we use only the length-$k$ walks on a certain graph with vertex set $\mathcal{L}^n_n$. The effect of the sampling is to dramatically reduce the number of vertices while keeping the independence number small.
    
    \begin{theorem}\label{random_walk_construction_theorem}
    Let $d_n \coloneqq \lceil\frac{1}{(Pval(\mathcal{L}^n))^2}\rceil$. There exist explicit $d_n$-regular graphs $R_n$ with $V(R_n) \subset \mathcal{L}^n_n$ and $|V(R_n)| = |\mathcal{L}^n_n|(1-\epsilon_1)$, where $\epsilon_1 \leq \frac{1}{2}$. Moreover, the spectral parameter of $R_n$ satisfies $\lambda_n \leq 2\sqrt{d_n-1} + \epsilon_2$, where $\epsilon_2 \leq 1$. 
  
    Let $W_{n,k}$ be the orthogonality graph of outcomes of tuples of measurements defined by walks of length $k$ on $R_n$ as described above. Then 
	\[ \lim_{n\to \infty}\lim_{k\to \infty} T(W_{n,k})\geq \frac{1}{3}.\]
	\end{theorem}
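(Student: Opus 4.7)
The plan is to combine the incompleteness bound of Theorem \ref{incompletness_thm} with the Hitting Lemma (Lemma \ref{hitting_lemma}) applied to random walks on a near-Ramanujan graph $R_n$. For the construction of $R_n$, I would invoke an explicit near-Ramanujan family (e.g., Lubotzky--Phillips--Sarnak graphs, or bipartite lifts in the spirit of Marcus--Spielman--Srivastava suitably derandomized) to produce a $d_n$-regular graph with spectral parameter $\lambda_n \leq 2\sqrt{d_n-1}+1$; the slack $\epsilon_1 \leq 1/2$ in the vertex count and $\epsilon_2 \leq 1$ in the spectral parameter absorbs the divisibility and primality conditions of such constructions, after which any injection $V(R_n)\hookrightarrow \mathcal{L}^n_n$ suffices. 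Computing $\vartheta(W_{n,k})$ proceeds exactly as in Proposition \ref{theta_count_prop}: each length-$k$ walk $w$ on $R_n$ labels a Pauli measurement on $k$ separate systems whose $2^{nk}$ stabilizer-state outcomes partition a block of $V(W_{n,k})$, and pairing the stabilizer orthonormal representation with any unit vector $\ket{\psi}$ yields $\vartheta(W_{n,k}) = |V(R_n)|\,d_n^{k-1}$.

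The upper bound on $\alpha(W_{n,k})$ is the heart of the argument. Fix an independent set $A\subset V(W_{n,k})$. Since the outcomes of any single walk form a clique in $W_{n,k}$, $A$ contains at most one outcome per walk; call the walks that contribute \emph{colored}. For each position $i$, define a partial function $g_i$ on $V(R_n)$ by $g_i(x)=o_i$ whenever some colored walk $w$ has $w_i=x$ and outcome $(o_1,\dots,o_k)$. This is well-defined: if two colored walks share a vertex at position $i$ their $i$-th outcomes must be non-orthogonal (else the tensor outcomes would be orthogonal, contradicting $A$ independent) and hence equal as outcomes of the same measurement. Moreover, for any $x,y\in Dom(g_i)$, choosing colored walks through $x$ and $y$ at position $i$ forces $g_i(x)\not\perp g_i(y)$, so the image of $g_i$ is an independent set in $S_n$. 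By Proposition \ref{pval_independence_prop} this is a partial hidden variable assignment, and Theorem \ref{incompletness_thm} gives $|Dom(g_i)|/|V(R_n)|\leq 2\,Pval(\mathcal{L}^n)=:\mu_n$ (the factor $2$ from $|V(R_n)|\geq |\mathcal{L}^n_n|/2$). Every colored walk satisfies $w_i\in Dom(g_i)$ for all $i$, so Lemma \ref{hitting_lemma} yields
\[
|A| \;\leq\; |V(R_n)|\, d_n^{k-1}\left(\mu_n + \tfrac{\lambda_n}{d_n}(1-\mu_n)\right)^{k}.
\]
The Ramanujan bound gives $\lambda_n/d_n = O(1/\sqrt{d_n}) = O(Pval(\mathcal{L}^n))$, so the parenthesized quantity is $O(Pval(\mathcal{L}^n)) = 2^{-n+O(1)}$, producing $\alpha(W_{n,k})/\vartheta(W_{n,k}) \leq 2^{(-n+O(1))k}$.

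Finally, $|V(W_{n,k})| = |V(R_n)|\,d_n^{k-1}\,2^{nk}$, so
\[
T(W_{n,k}) \;\geq\; \frac{k(n-O(1))}{\log|V(R_n)| + (k-1)\log d_n + nk}.
\]
Letting $k\to\infty$ for fixed $n$ kills the $\log|V(R_n)|$ term, and using $\log d_n \leq 2n+O(1)$ (from $d_n = \lceil Pval(\mathcal{L}^n)^{-2}\rceil \leq 2^{2n+O(1)}$), the ratio tends to $(n-O(1))/(3n+O(1))$, which approaches $1/3$ as $n\to\infty$. The hard part will be engineering $R_n$ to simultaneously meet the degree, vertex-count, and spectral specifications: since $d_n\approx 4^n$ and $|V(R_n)|$ is pinned to within a factor of $2$ of the specific number $|\mathcal{L}^n_n|$, one must combine flexible near-Ramanujan constructions with the slack in $\epsilon_1,\epsilon_2$ to thread the needle. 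Everything downstream is a direct application of Lemma \ref{hitting_lemma} and Theorem \ref{incompletness_thm}.
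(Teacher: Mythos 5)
Your proposal follows essentially the same route as the paper: an explicit near-Ramanujan graph $R_n$ (the paper cites a construction that works for essentially arbitrary degree and vertex count, which sidesteps the degree/vertex-count rigidity you worry about at the end), $\vartheta(W_{n,k})$ equal to the number of walks, projection of an independent set onto each coordinate to obtain independent sets in $S_n$ (i.e.\ partial hidden variable assignments) bounded via Theorem \ref{incompletness_thm}, the Hitting Lemma, and the same iterated limit. One caution: your step ``$\log d_n \leq 2n+O(1)$'' reverses an inequality --- Theorem \ref{incompletness_thm} only \emph{upper}-bounds $Pval(\mathcal{L}^n)$ and hence only \emph{lower}-bounds $d_n = \lceil Pval(\mathcal{L}^n)^{-2}\rceil$; the conclusion nevertheless survives because the ratio $\frac{-\log(O(Pval(\mathcal{L}^n)))}{-2\log(Pval(\mathcal{L}^n))+n}$ is increasing in $-\log Pval(\mathcal{L}^n)$ and is therefore minimized at $-\log Pval(\mathcal{L}^n) = n-O(1)$, where it tends to $\frac{1}{3}$.
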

    See Appendix \ref{supp_proofs_section} for proof. 

     The value of $\frac{1}{3}$ is competitive with examples in \cite{Amaral_2015}, though a graph with $T(G)=\frac{1}{2}$ is described in \cite{alo97}. Our example has the advantage of having an explicit orthonormal representation that is state-independent. Our construction is implementable with current technology, since Pauli measurements can be implemented with Clifford circuits and measurement in the computational basis.

   Any state-independent contextual scenario can be converted into nonlocality in a simple way: give Alice and Bob the maximally entangled state and allow them each the local measurements of the contextuality scenario \cite{cab21}. However, this procedure doesn't seem to preserve the strength of the violation of the noncontextual inequality as a local inequality. Theorem \ref{inconsistency_thm} allow us to derive a result about nonlocality as expressed by the following nonlocal game.

    \begin{definition}[Pauli Agreement Game]\label{pauli_agreement_game_def}
        Let $\mathcal{G}_n$ be the Pauli Agreement Game on $n$ qubits, defined as follows.
        
         First, the referee chooses $w\in \mathcal{L}^n_1$ and keeps it secret. He then chooses a pair of questions $(x,y) \in \mathcal{L}^n_n \times \mathcal{L}^n_n$ for Alice and Bob uniformly at random subject to $w\leq x, y$.
         
         Alice and Bob respond with outcomes for their respective measurements and win if their outcomes agree on $w$.
    \end{definition}
    \begin{theorem}There exists $\epsilon \in (0,1)$ such that $\frac{1}{2} \leq Val_{loc}(\mathcal{G}_n)\leq  \frac{1}{2} + \epsilon^n$.
    \end{theorem}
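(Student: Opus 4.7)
The plan is to handle the two bounds separately: the lower bound follows from any synchronous strategy, while the upper bound reduces to Theorem \ref{inconsistency_thm} via a Cauchy--Schwarz step. For the lower bound, I would pick any contextual hidden variable assignment $f$ for $\mathcal{L}^n$ and have both players answer $f(x)$ and $f(y)$ respectively. Writing $\alpha_w$ for the fraction of $x \in M_w$ on which $f(x)|_w$ equals a fixed outcome of $w$, the conditional winning probability is $\alpha_w^2 + (1-\alpha_w)^2 \geq \tfrac{1}{2}$, and averaging over the referee's secret $w$ yields $Val_{loc}(\mathcal{G}_n) \geq \tfrac{1}{2}$.

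For the upper bound I would use $Val_{loc}(\mathcal{G}_n) = Val_{det}(\mathcal{G}_n)$ and analyze an arbitrary deterministic strategy $(f_A, f_B)$. For each $w \in \mathcal{L}^n_1$, fix a labelling of the two outcomes of $w$ and let $\alpha_w$ (resp.\ $\beta_w$) denote the fraction of $x \in M_w$ on which $f_A(x)|_w$ (resp.\ $f_B(x)|_w$) returns the first label. The conditional winning probability factors as
\[
\alpha_w \beta_w + (1-\alpha_w)(1-\beta_w) = \tfrac{1}{2} + \tfrac{1}{2}(2\alpha_w - 1)(2\beta_w - 1).
\]
Averaging over $w$ drawn uniformly from $\mathcal{L}^n_1$ and applying Cauchy--Schwarz yields
\[
Val_{loc}(\mathcal{G}_n) \leq \tfrac{1}{2} + \tfrac{1}{2}\sqrt{\mathbb{E}_w[(2\alpha_w - 1)^2]\, \mathbb{E}_w[(2\beta_w - 1)^2]}.
\]
The key identity is $(2\alpha_w - 1)^2 = 1 - 4\alpha_w(1-\alpha_w) = 1 - 4\,Cval(f_A, w)$, and analogously for $\beta_w$. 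Theorem \ref{inconsistency_thm} then gives $\mathbb{E}_w[(2\alpha_w - 1)^2] \leq 4\epsilon_0^n$, and the same for $\beta$, so that $Val_{loc}(\mathcal{G}_n) \leq \tfrac{1}{2} + 2\epsilon_0^n$.

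The crux is the identification of $1 - 4\,Cval(f,w)$ with the square $(2\alpha_w - 1)^2$ that Cauchy--Schwarz produces, so the inconsistency theorem controls exactly the correlator appearing in the win probability; no deeper technical obstacle arises. The only cleanup is absorbing the factor $2$ into the exponential base to match the stated form $\tfrac{1}{2} + \epsilon^n$. This is routine: one can choose $\epsilon \in (\epsilon_0, 1)$ large enough that $2\epsilon_0^n \leq \epsilon^n$ for all $n$ past a threshold, and handle the finitely many small $n$ using the trivial bound $Val_{loc}(\mathcal{G}_n) \leq 1$.
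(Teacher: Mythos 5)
Your proposal is correct and follows the same route as the paper: reduce to Theorem \ref{inconsistency_thm} by writing the win probability, conditioned on the hidden $w$, in terms of the biases of Alice's and Bob's contextual assignments at $w$. In fact your execution is more careful than the paper's at one point: the paper asserts that Alice and Bob each assign outcomes to $w$ with probability in $(\tfrac{1}{2}-\delta,\tfrac{1}{2}+\delta)$ with $\delta = O(\epsilon^n)$, which is a per-$w$ statement that Theorem \ref{inconsistency_thm} does not literally provide (it only controls $\mathbb{E}_w[m_w(1-m_w)]$); your Cauchy--Schwarz step, together with the identity $(2\alpha_w-1)^2 = 1 - 4\,Cval(f,w)$, correctly converts the averaged bound into a bound on the averaged correlator and yields the same $\tfrac{1}{2}+2\epsilon_0^n$ conclusion rigorously. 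The only caveat you inherit from the paper is that Theorem \ref{inconsistency_thm} is stated for even $n$ only, so strictly speaking the argument covers even $n$ as written.
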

    \begin{proof}
        The lower bound is trivial because Alice and Bob can achieve a winning probability of $\frac{1}{2}$ by playing randomly.
    
        Fix Alice and Bob's strategies describe them with a pair of contextual hidden variable assignments for $\mathcal{L}^n_n$.
        
         Apply Theorem \ref{inconsistency_thm} to show that individually, Alice and Bob assign outcomes to $w$ with probabilities in the interval $(\frac{1}{2}-\delta,\frac{1}{2}+\delta)$ for $\delta \in O(\epsilon^n)$ for some $\epsilon \in (0,1)$. Based on this, their probability of winning $\mathcal{G}_n$ is at most $(\frac{1}{2}+\delta )^2 + (\frac{1}{2} - \delta)^2 = \frac{1}{2} +2\delta^2$. This shows that we can re-name $\epsilon$ such that $Val_{loc}(\mathcal{G}_n) \leq  \frac{1}{2} + \epsilon^n$.
    \end{proof}
    
    Of course $Val_{qtm}(\mathcal{G}_n) = 1$ because Alice and Bob can employ a strategy similar to the one in Proposition \ref{qtm_always_wins_prop}.

    Games like $\mathcal{G}_n$ where quantum strategies provide an advantage over local strategies can be mapped to violations of Bell inequalities so that the gap between the strategies maps to the strength of the Bell inequality violation \cite{Buhrman2012}. 
    
    In particular, our analysis of $\mathcal{G}_n$ maps to a family of Bell inequality violations \cite{Degorre09} that grow linearly in the number of qubits. This is not competitive with the best-known nonlocal games \cite{Buhrman2012}. The limiting factor in our analysis is our bound on $Val_{loc}(Z_{\frac{n}{2}})$ via parallel repetition. Our games do not provide larger Bell violations than can be obtained by parallel repetition.
\section{Discussion and Conclusions}\label{conclusions_section}
    The Pauli measurements have a natural combinatorial structure that we exploited to obtain non-trivial information about their behavior. In particular, we found incompleteness and inconsistency bounds (Theorem \ref{incompletness_thm} and Theorem \ref{inconsistency_thm}) for hidden variable assignments for the Pauli measurements. As applications, we produced graphs exhibiting contextuality in the sense of \cite{Amaral_2015} and showed that a natural nonlocal game has no local strategy that performs significantly better than the random strategy.

    The Pauli measurements are central to quantum information. The outcomes of $\mathcal{L}^n_k$ are stabilizer codes. They are a natural choice for the measurements needed for tomography. Many cryptographic protocols are based on the Pauli measurements. The eigenvalue techniques explored in this work may be versatile enough to apply to questions in these areas.

    In Section \ref{applications_section}, we described an explicit construction of graphs $W_{n,k}$ such that $\lim_{n\to \infty, k\to \infty}T(W_{n,k})\geq \frac{1}{3}$ that also exhibits state-independent contextuality. While it is known that there are families of graphs $\{G_k\}_{k=1}^\infty$ such that $\lim_{k\to \infty}T(G_k) = 1$ \cite{fei96}, it is not known whether the family still exists when the graphs are required to have optimal, state-independent orthonormal representations as in our case.
    
    We interpreted our bound on $Cval(\mathcal{L}^n)$ as a bound on the ability to win a certain natural nonlocal game using local resources. Using quantum resources, the games can be won with certainty. Such games give violations of Bell inequalities. The rate at which $Cval(\mathcal{L}^n)\to \frac{1}{4}$ determines the size of the violation. In our analysis, the limiting factor for this rate is given by our application of the parallel repetition lemma in Proposition \ref{parallel_prop}. Is there a better way to bound $Val_{loc}(Z_{\frac{n}{2}})$?
    
    It would be interesting to study the most natural Pauli game, whose questions are pairs from $\mathcal{L}^n_n\times \mathcal{L}^n_n$ that intersect in dimension $n-1$. As usual, the winning condition is that Alice and Bob must provide consistent answers on all questions in the intersection. We are not able to bound the value using eigenvalue techniques, due Proposition \ref{full_graph_spectrum_prop}.
    
    Our consistency games are similar to the games that define the quantum and classical chromatic numbers, $\chi_q$ and $\chi$ \cite{PAULSEN20162188}. In Proposition \ref{pval_independence_prop}, we interpret our incompleteness result as an upper bound on $\alpha(S_n)$, thus giving a lower bound on $\chi(S_n)$. We ask if it is possible to bound $\chi_q(S_n)$ by using the fact that our games can be won with certainty using a quantum strategy.
    
\section{Acknowledgements}
We thank Jitendra Prakash, Soumyadip Patra and Peter Beirhorst for some discussions. We thank Ad{\'{a}}n Cabello for a helpful comment regarding the introduction and providing useful references. We thank Andries Brouwer for helpful discussion regarding Appendix \ref{eigenvalue_conjectures_section}.

\printbibliography

\appendix

\section{Pauli Measurements}\label{Pauli_measurement_section}

Here, we justify identifying the Pauli measurements with the mathematical object $\mathcal{L}^n$ by showing exactly how $\mathcal{L}^n$ corresponds to commuting collections of elements of the Pauli group. Our construction is also described in \cite{Van_den_Nest_2004}.

First, let us define the Pauli group.

\begin{definition}

Let $X\coloneqq \begin{bmatrix} 0&& 1 \\ 1 && 0
\end{bmatrix}$ and $Z\coloneqq \begin{bmatrix}
                        1 && 0 \\
                        0 && -1
                    \end{bmatrix}$ be matrices acting on $\mathbb{C}^2$. For $x\in \mathbb{Z}_2^{2n}$ where $n\in \mathbb{N}$, define \[XZ(x) \coloneqq X^{x_1} Z^{x_{n+1}} \otimes X^{x_2}Z^{x_{n+2}} \otimes \dots \otimes X^{x_n} Z^{x_{2n}}.\]

We define the Pauli group on $n$ qubits to be \[\mathcal{P}(n)\coloneqq \{i^\omega XZ(x)\mid x \in \mathbb{Z}_2^{2n}, \omega \in \{0,1,2,3\}\}\] where the group operation is matrix multiplication.
\end{definition}

We will ignore phases and work with the vector space $\mathbb{Z}_2^{2n}$ instead by using the standard map $coord: \mathcal{P}(n)\to \mathbb{Z}_2^{2n}$ that sends $i^\omega XZ(x) \mapsto x$ for any $\omega \in \{0,1,2,3\}$. Algebraically, this map can be considered as quotienting the Pauli group by its center.

In the main text, we described $\mathcal{L}^n$ as the semilattice of isotropic subspaces of $\mathbb{Z}^{2n}_{2}$. Let us define the term ``isotropic.''
    
    \begin{definition}[Dot product]
			Let $a,b \in \mathbb {Z}_2^n$ with $a=(a_1,\dots,a_n)$ and $b=(b_1,\dots ,b_n)$. Then $a\cdot b \in \mathbb{Z}_2$ is defined by $a \cdot b \coloneqq \sum_{i=1}^n a_i b_i$, where operations are performed in $\mathbb{Z}_2$.
		\end{definition}
 		 
		\begin{definition}[Symplectic product]
			Let $a,b\in \mathbb{Z}^{2n}_2$ with $a=(a_1,a_2)$ and $b=(b_1,b_2)$ where $a_1,a_2$ are the projections of $a$ onto the first $n$ and second $n$ coordinates respectively, and similarly for $b_1,b_2$. Define the symplectic product $\left<a,b\right> \in \mathbb{Z}_2$ as being $0$ if $a_1 \cdot b_2 = a_2 \cdot b_1$ and $1$ otherwise.
	 	\end{definition}

\begin{definition}[Isotropic]
    		 	A subspace $M$ of $\mathbb{Z}_2^{2n}$ is called isotropic if $\left<x,y\right>=0,\forall x,y\in M$.
\end{definition}
When we move from $\mathcal{P}(n)$ to $\mathbb{Z}_2^{2n}$ via $coord$, we need to keep track of the commutation relations that define comeasurability of the observables. This is achieved with the following lemma.

	 \begin{lemma}\cite{Hostens04}
	 	$i^{\omega_1} XZ(a), i^{\omega_2} XZ(b)\in \mathcal{P}(n)$ commute iff $\left<a,b\right>=0.$
 	\end{lemma}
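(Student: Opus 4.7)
The plan is a direct computation that reduces the claim to the single-qubit anticommutation $ZX = -XZ$.

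First I would dispose of the overall scalar phases: since $i^{\omega_1}$ and $i^{\omega_2}$ are central in $\mathcal{P}(n)$, the operators $i^{\omega_1}XZ(a)$ and $i^{\omega_2}XZ(b)$ commute if and only if $XZ(a)$ and $XZ(b)$ do. So the task reduces to computing when $XZ(a)XZ(b) = XZ(b)XZ(a)$.

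Next, since $XZ(a) = \bigotimes_{i=1}^n X^{a_i}Z^{a_{n+i}}$ and likewise for $b$, the product $XZ(a)XZ(b)$ factorizes qubit-by-qubit. The sign incurred when commuting the two operators past each other is therefore the product of the single-qubit commutation signs. On a single qubit, I would apply $ZX = -XZ$ twice to obtain the identity
\[
X^\alpha Z^\gamma \cdot X^\beta Z^\delta = (-1)^{\alpha\delta + \beta\gamma}\, X^\beta Z^\delta \cdot X^\alpha Z^\gamma.
\]

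Specializing this with $(\alpha,\gamma) = (a_i, a_{n+i})$ and $(\beta,\delta) = (b_i, b_{n+i})$ on the $i$-th tensor factor and multiplying over $i = 1, \dots, n$, and writing $a = (a_1, a_2)$, $b = (b_1, b_2)$ for the splitting into the first and last $n$ coordinates, the global commutation phase becomes
\[
(-1)^{\sum_{i=1}^n (a_i b_{n+i} + b_i a_{n+i})} = (-1)^{a_1 \cdot b_2 + a_2 \cdot b_1}.
\]
Hence $XZ(a)$ and $XZ(b)$ commute precisely when $a_1 \cdot b_2 + a_2 \cdot b_1 = 0$ in $\mathbb{Z}_2$, i.e.\ when $a_1 \cdot b_2 = a_2 \cdot b_1$, which by definition is $\langle a, b\rangle = 0$.

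There is no real obstacle here; the proof is sign-bookkeeping. The only point that warrants a sentence of care is verifying that the vanishing of the accumulated exponent matches the condition $a_1 \cdot b_2 = a_2 \cdot b_1$ used to define $\langle \cdot, \cdot \rangle$, which is automatic in characteristic $2$.
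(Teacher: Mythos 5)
Your proof is correct and follows essentially the same route as the paper's: both dispose of the central phases, reduce to the single-qubit relation $ZX=-XZ$, and compare the resulting signs, with the paper writing $AB=(-1)^{b_1\cdot a_2}XZ(a+b)$ and $BA=(-1)^{a_1\cdot b_2}XZ(a+b)$ where you instead accumulate the commutator sign $(-1)^{a_1\cdot b_2+a_2\cdot b_1}$ directly — an equivalent bookkeeping.
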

 	\begin{proof}
 		Since phase factors do not affect commutation, we assume that $A= XZ(a)$ and $B = XZ(b)$, and let $a=(a_1,a_2)$ and $b=(b_1,b_2)$ as in the definition of the symplectic product.
 		
 		The commutation relation $ZX=XZ$ implies that $AB = (-1)^\omega XZ(a+b)$ where $\omega = b_1\cdot a_2 $. So $A$ and $B$ commute exactly when $AB= (-1)^{\omega_1} XZ(x + y) = BA  = (-1)^{\omega_2} XZ(a+b)$, which occurs exactly when $\omega_1 = \omega_2$, or equivalently $b_1 \cdot a_2 = a_1 \cdot b_2$ which by definition means $\left<a,b\right>=0$.
 	\end{proof}
	 	
The lemma explains why the ``isotropic'' condition appears in Definition \ref{Pauli_measurement_def}. We justify the name ``Pauli measurement'' by the fact that the pre-image of an isotropic subspace under $coord$ is a commutative subset of $\mathcal{P}(n)$, and so can be performed as a measurement in the standard quantum formalism.

To describe the relationship between $\mathcal{L}^n$ and its usual formulation in quantum mechanics further, we describe the outcomes in both cases by defining a partial inverse to $coord$.

  \begin{definition}[Conventional phase]
  	 The conventional phase is defined by $\widetilde{XZ}(x) \coloneqq i^{x_1\cdot x_2} XZ(x)\in \mathcal{P}(n)$ where $x=(x_1,x_2)\in \mathbb{Z}^{2n}_2$ and $x_1,x_2 \in \mathbb{Z}_2^n$ are the projections of $x$ onto the coordinates with indices $\{1,\dots,n\}$ and $\{n+1,\dots,2n\}$ respectively.
  \end{definition}

In the usual formulation of quantum mechanics, the outcomes of a Hermitian element of $\mathcal{P}(n)$ are its eigenvalues, $\pm 1$. We can define the outcome of a vector in $\mathbb{Z}_2^{2n}$ by using the conventional phase to convert it to an element of $\mathcal{P}(n)$ and describing the outcome there. We also switch to an additive notation: if we denote the outcomes by $f$, then $f(\widetilde{XZ}(x)) = +1$, equivalently expresses that $f(x)=0$ and $f(\widetilde{XZ}(x)) = -1$ equivalently expresses that $f(x)=1$.

Let $A,B$ be commuting Hermitian operators. The functional composition principle \cite{sep-kochen-specker} describes how $f(AB)$ can be deduced from $f(A)$ and $f(B)$. By converting vectors to Pauli operators using the conventional phase and applying the functional composition principle, we derive the following constraint on outcome assignments in terms of vectors in $\mathbb{Z}^{2n}_2$.

\begin{proposition}(Linearity/antilinearity constraints)

	 	\label{lin_antilin}
	 		If $x, y \in \mathbb{Z}^{2n}_2$ satisfy $\left<x,y\right> = 0$, any outcome $f$ that assigns values to $\widetilde{XZ}(x)$ and $\widetilde{XZ}(y)$ also assigns a value to $\widetilde{XZ}(x+y)$ according to $f(x+y)= w + f(x)+f(y)$ where addition on the right of the equality is performed in $\mathbb{Z}_2$ and $w=0$ if $\frac{i^{x_1\cdot x_2} i^{y_1 \cdot y_2}}{i^{(x_1+y_1) \cdot (x_2 +y _2)}}(-1)^{x_2 \cdot y_1}=1$ and $w=1$ otherwise.
	 	\end{proposition}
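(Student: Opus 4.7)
The plan is to reduce the statement to a direct computation of phases on $\mathcal{P}(n)$ combined with the functional composition principle for commuting Hermitian operators. The hypothesis $\langle x,y\rangle=0$ is exactly the condition (via the lemma preceding the proposition) under which $XZ(x)$ and $XZ(y)$ commute, so their product is again a Hermitian element of $\mathcal{P}(n)$ and we are entitled to apply functional composition.

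First I would compute the raw product $XZ(x)\,XZ(y)$ in $\mathcal{P}(n)$. Writing $x=(x_1,x_2)$ and $y=(y_1,y_2)$ with $x_i,y_i\in\mathbb{Z}_2^n$, the operator $XZ(x)$ is $X^{x_1}Z^{x_2}$ in tensor-product shorthand. Moving the block $X^{y_1}$ past $Z^{x_2}$ using $ZX=-XZ$ picks up a sign $(-1)^{x_2\cdot y_1}$, and since $X^2=Z^2=I$ the remaining factors collapse to $X^{x_1+y_1}Z^{x_2+y_2}=XZ(x+y)$. Hence
\[
XZ(x)\,XZ(y)=(-1)^{x_2\cdot y_1}\,XZ(x+y).
\]
Converting to the conventionally phased operators is then algebra with $i$'s:
\[
\widetilde{XZ}(x)\,\widetilde{XZ}(y)=\frac{i^{x_1\cdot x_2}\,i^{y_1\cdot y_2}}{i^{(x_1+y_1)\cdot(x_2+y_2)}}(-1)^{x_2\cdot y_1}\,\widetilde{XZ}(x+y).
\]
The coefficient on the right is exactly the quantity whose equaling $1$ (respectively $-1$) defines $w=0$ (respectively $w=1$) in the statement. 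It must lie in $\{\pm 1\}$, since the left side is a product of commuting Hermitian operators and $\widetilde{XZ}(x+y)$ is Hermitian; alternatively one can verify directly that $\langle x,y\rangle=0$ forces $(x_1+y_1)\cdot(x_2+y_2)\equiv x_1\cdot x_2+y_1\cdot y_2+x_1\cdot y_2+x_2\cdot y_1\pmod 2$, which makes the ratio of $i$-powers a sign.

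With the relation $\widetilde{XZ}(x+y)=(-1)^{w}\,\widetilde{XZ}(x)\,\widetilde{XZ}(y)$ in hand, the functional composition principle finishes the job: on a joint eigenvector the eigenvalues of commuting observables multiply, so if $\widetilde{XZ}(x)$ and $\widetilde{XZ}(y)$ take eigenvalues $(-1)^{f(x)}$ and $(-1)^{f(y)}$ then $\widetilde{XZ}(x+y)$ takes eigenvalue $(-1)^{w+f(x)+f(y)}$, which (translating back to the additive $\mathbb{Z}_2$ notation for $f$) is precisely $f(x+y)=w+f(x)+f(y)$.

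The computation is mechanical, so the only thing requiring care is bookkeeping the phases consistently — in particular keeping straight that $\widetilde{XZ}$ absorbs an extra $i^{x_1\cdot x_2}$ relative to $XZ$, and that the sign $(-1)^{x_2\cdot y_1}$ comes from reordering blocks rather than from the symplectic form itself. Everything else is automatic from the commutation lemma and functional composition; I do not foresee a substantive obstacle.
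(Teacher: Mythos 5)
Your proposal is correct and follows exactly the route the paper intends: the paper derives this proposition (without writing out the details) by computing the product phase via the commutation relation $ZX=-XZ$, converting to the conventionally phased operators $\widetilde{XZ}$, and invoking the functional composition principle; your computation $\widetilde{XZ}(x)\widetilde{XZ}(y)=(-1)^{w}\widetilde{XZ}(x+y)$ with $w$ as in the statement is precisely the intended argument, and your check that $\langle x,y\rangle=0$ forces the coefficient to be a sign is the right consistency verification.
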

   
If $S\in \mathcal{L}^n$ , we call a function $f:S\to \mathbb{Z}_2$ that satisfies the condition of Proposition \ref{lin_antilin} a linear/antilinear function. The outcomes of $S$ defined to be the outcomes of $\{\widetilde{XZ}(x) |x \in S\}$. The following proposition shows that 
 the outcomes and linear/antilinear functions are equivalent, so linear/antilinear functions provide a concrete notation for describing the outcomes of the Pauli measurements.

\begin{proposition}\label{pauli_outcome_correspondence_prop}
         Let $S \in \mathcal{L}^n$ be a Pauli measurement. Let $B\subset S$ be a basis for $S$ over $\mathbb{Z}_2$. Then the outcomes of the  $S$ correspond bijectively to the $2^{|B|}$ assignments of $B\to \mathbb{Z}_2$, and these are in bijective correspondence with linear/antilinear functions on $S \to \mathbb{Z}_2$.
\end{proposition}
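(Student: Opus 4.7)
The plan is to establish a commutative triangle of maps between three finite sets---the outcomes of $S$, the linear/antilinear functions $S \to \mathbb{Z}_2$, and the assignments $B \to \mathbb{Z}_2$---show that each map is injective, and then pin all three cardinalities at $2^{|B|}$ to conclude bijectivity on every arrow.

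First I would write $B = \{b_1, \dots, b_k\}$ and recall that since $S$ is isotropic, the Hermitian operators $\widetilde{XZ}(b_1), \dots, \widetilde{XZ}(b_k)$ pairwise commute in $\mathcal{P}(n)$, each with spectrum $\{\pm 1\}$. The spectral theorem then decomposes $\mathbb{C}^{2^n}$ into common eigenspaces indexed by sign patterns $(\varepsilon_1, \dots, \varepsilon_k) \in \{\pm 1\}^k$, and the standard stabilizer-formalism fact that linearly independent generators (no nontrivial product equals $\pm I$) yield equidimensional non-trivial joint eigenspaces gives exactly $2^k$ outcomes. By definition of ``outcome,'' reading off the eigenvalue of $\widetilde{XZ}(b_i)$ supplies a bijection between outcomes of $S$ and assignments $B \to \mathbb{Z}_2$, which is the first of the claimed correspondences.

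Next I would show that each outcome extends canonically to a linear/antilinear function on all of $S$: given a joint eigenspace, define $f(x) \in \mathbb{Z}_2$ so that $\widetilde{XZ}(x)$ acts on it by $(-1)^{f(x)}$, for every $x \in S$. The identity $\widetilde{XZ}(x)\widetilde{XZ}(y) = (-1)^{w(x,y)} \widetilde{XZ}(x+y)$, which is exactly the content of Proposition \ref{lin_antilin}, forces $f(x+y) = w(x,y) + f(x) + f(y)$, so $f$ is linear/antilinear. Conversely, an induction on the number of basis elements used shows that the linear/antilinear constraint uniquely determines $f$ on $\mathrm{span}(b_1, \dots, b_{j+1})$ from its values on $\mathrm{span}(b_1, \dots, b_j)$; hence the restriction map from linear/antilinear functions to basis assignments is injective. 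Composing, I obtain a chain outcomes $\to$ linear/antilinear functions $\to$ basis assignments in which the composite is a bijection and the second arrow is injective, which is enough to force both arrows to be bijections.

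The only point that really requires care---and the step I anticipate as the main obstacle---is verifying that an \emph{arbitrary} basis assignment extends to a linear/antilinear function on $S$, i.e.\ that the inductive extension via the constraint $f(x+y) = w(x,y) + f(x) + f(y)$ is order-independent and free of contradictions. Rather than perform a direct $2$-cocycle computation with $w$, I would bypass it by the counting argument: each of the $2^k$ outcomes produces a linear/antilinear function, distinct outcomes give distinct basis assignments, so there are at least---hence exactly---$2^k$ linear/antilinear functions, matching $|2^B|$. This closes the triangle and finishes the proof.
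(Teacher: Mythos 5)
Your proposal is correct and follows essentially the same route as the paper: set up the chain outcomes $\to$ linear/antilinear functions $\to$ basis assignments, count $2^{|B|}$ objects at both ends, and use the cardinality match to close the triangle without a direct consistency (cocycle) check of the extension. You are somewhat more explicit than the paper about why the restriction map is injective and why there are exactly $2^{|B|}$ joint eigenspaces, but the underlying argument is the same.
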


\begin{proof}
    Every outcome of a Pauli measurement induces a unique linear/antilinear function as in Proposition \ref{lin_antilin}. Each linear/antilinear function restricts to a unique assignment of $B\to \mathbb{Z}_2$. 
    
    There are $2^{|B|}$ different outcomes for the measurement, since the measurement consists of performing $|B|$ $2$-outcome measurements simultaneously. There are also $2^{|B|}$ assignments $B \to \mathbb{Z}_2$. This counting shows that the associations in the first paragraph are bijective. Hence, every assignment $B\to \mathbb{Z}_2$ can be extended to a unique linear/antilinear function $S\to \mathbb{Z}_2$.
\end{proof}

\begin{corollary}\label{lin_cor}
    Let $S\subset \mathbb{Z}_2^{2n}$ be a Pauli measurement spanned by vectors of the form $(v,0)$ where $v\in \mathbb{Z}_2^n$, and by $0$ we mean the zero vector of $\mathbb{Z}^n_2$. Then the outcomes of $S$ are linear functions $S\to\mathbb{Z}_2$.
\end{corollary}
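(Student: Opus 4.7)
The plan is to reduce the statement to a direct computation with Proposition \ref{lin_antilin}. By Proposition \ref{pauli_outcome_correspondence_prop}, every outcome of $S$ corresponds to a linear/antilinear function $f : S \to \mathbb{Z}_2$, so it suffices to show that the correction term $w$ appearing in Proposition \ref{lin_antilin} vanishes whenever the two input vectors lie in $S$. Once this is established, the condition $f(x+y) = w + f(x) + f(y)$ collapses to $f(x+y) = f(x)+f(y)$, which is precisely linearity.

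First I would observe that $S$ consists entirely of vectors of the form $(v,0)$: since $S$ is spanned by such vectors and $\mathbb{Z}_2$-linear combinations of vectors with zero second block still have zero second block, every $x \in S$ can be written as $x = (x_1, 0)$ with $x_1 \in \mathbb{Z}_2^n$. Note that any two such vectors automatically satisfy $\langle x, y\rangle = 0$, so Proposition \ref{lin_antilin} applies to every pair $x,y \in S$.

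Next, I would substitute $x_2 = 0$ and $y_2 = 0$ into the formula for $w$. The phase factor in Proposition \ref{lin_antilin} is
\[
\frac{i^{x_1\cdot x_2}\, i^{y_1 \cdot y_2}}{i^{(x_1+y_1)\cdot(x_2+y_2)}}(-1)^{x_2\cdot y_1}.
\]
With $x_2 = y_2 = 0$, each dot product in the exponents is $0$, so every factor equals $1$, forcing $w=0$. Thus for any outcome $f$ of $S$ and any $x,y \in S$, $f(x+y) = f(x) + f(y)$, proving that $f$ is linear.

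I do not anticipate any real obstacle; the entire proof is the observation that the only source of antilinearity in the functional composition principle is the phase arising from the conventional phase convention, and this phase is trivial on the purely $X$-type subspace. I would include one sentence verifying that this covers all outcomes (not just those on a basis) by noting that a linear function is determined by its values on a basis and that the recursion $f(x+y)=f(x)+f(y)$ allows us to extend uniquely from a spanning set to all of $S$, consistent with the bijection established in Proposition \ref{pauli_outcome_correspondence_prop}.
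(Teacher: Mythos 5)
Your proposal is correct and follows exactly the paper's (much terser) argument: the paper simply asserts that the linearity/antilinearity conditions of Proposition \ref{lin_antilin} are always linear on $S$, and your explicit substitution of $x_2=y_2=0$ into the phase factor to get $w=0$ is the computation behind that assertion. No gaps.
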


\begin{proof}
    Our linearity/antilinearity conditions described in Proposition \ref{lin_antilin} are always linear on $S$, so the outcomes are linear functions.
\end{proof}
 
See \cite{calderbank97} for some discussion of the algebraic structure of spaces whose outcomes are described by linear functions.

The symmetry group of the Pauli measurements $Sp(2n,\mathbb{Z}_2)$ \cite{Put} \cite{Tolar_2018} acts transitively on isotropic subspaces of a given dimension. Therefore, Corollary \ref{lin_cor} shows that without loss of generality, any particular Pauli measurement has outcomes that can be described by linear functions. This gives us the following corollary that will be useful for our inconsistency result.

\begin{corollary}\label{disagreement_on_half_cor}
    If $S_1,S_2 \in \mathcal{L}^n$ are Pauli measurements and $f_1:S_1\to \mathbb{Z}_2$ and $f_2:S_2\to \mathbb{Z}_2$ are inconsistent outcomes, then  $|\{ w\in S_1 \cap S_2 | f_1(w)\neq f_2(w)\}| = 2^{\dim(S_1\cap S_2)-1}$.
\end{corollary}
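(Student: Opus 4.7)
The plan is to show that the two outcomes, when viewed as functions on $S_1 \cap S_2$, differ by a nonzero linear functional, which then forces the disagreement set to be the complement of a codimension-$1$ subspace.

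First I would restrict both outcomes to $W \coloneqq S_1 \cap S_2$, obtaining $g_i \coloneqq f_i|_W : W \to \mathbb{Z}_2$ for $i=1,2$. By Proposition \ref{lin_antilin}, each $g_i$ satisfies a linearity/antilinearity relation $g_i(x+y) = w(x,y) + g_i(x) + g_i(y)$, where the ``twist'' $w(x,y) \in \mathbb{Z}_2$ depends only on $x$ and $y$ (through the coordinates used in Proposition \ref{lin_antilin}), not on $i$. The key observation is therefore that $g \coloneqq g_1 - g_2 = g_1 + g_2$ (in $\mathbb{Z}_2$) satisfies
\[
g(x+y) = g_1(x+y) + g_2(x+y) = 2w(x,y) + g_1(x) + g_2(x) + g_1(y) + g_2(y) = g(x) + g(y),
\]
so the twists cancel and $g : W \to \mathbb{Z}_2$ is a genuine $\mathbb{Z}_2$-linear functional.

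Next I would use the inconsistency hypothesis to conclude that $g$ is not the zero functional. By Definition \ref{consistency_def}, inconsistency of $f_1$ and $f_2$ means there exists some $m_3 \leq S_1 \wedge S_2 = W$ on which the induced outcomes disagree, i.e.\ some $w_0 \in W$ (a generator of a $1$-dimensional subspace will suffice) with $g_1(w_0) \neq g_2(w_0)$, hence $g(w_0) = 1$.

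Finally, a nonzero $\mathbb{Z}_2$-linear functional on a $d$-dimensional $\mathbb{Z}_2$-vector space has kernel of dimension $d-1$ and takes value $1$ on its complement, which has size exactly $2^{d-1}$. Applying this with $d = \dim(W) = \dim(S_1 \cap S_2)$ gives
\[
|\{ w \in S_1 \cap S_2 \mid f_1(w) \neq f_2(w) \}| = |\{ w \in W \mid g(w) = 1\}| = 2^{\dim(S_1 \cap S_2) - 1},
\]
which is the desired equality. The main (mild) obstacle is the verification that the $\mathbb{Z}_2$-valued twist $w(x,y)$ in the linear/antilinear constraint is a property of the pair $(x,y)$ alone and not of the specific outcome; once that is granted (as it is by the explicit formula in Proposition \ref{lin_antilin}), the proof reduces to elementary linear algebra over $\mathbb{F}_2$, and no invocation of Corollary \ref{lin_cor} or of the $Sp(2n,\mathbb{Z}_2)$ symmetry is actually necessary.
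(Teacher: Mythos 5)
Your proof is correct, and it takes a genuinely different route from the paper's. The paper reduces to the linear case by symmetry: it invokes Corollary \ref{lin_cor} together with the transitive action of $Sp(2n,\mathbb{Z}_2)$ to assume without loss of generality that $S_1$ is spanned by vectors of the form $(v,0)$, so that $f_1$ and $f_2|_{S_1\cap S_2}$ are both honestly linear, and then applies rank--nullity to $f_1-f_2$. You instead work directly on $W=S_1\cap S_2$ and observe that the antilinear twist $w(x,y)$ in Proposition \ref{lin_antilin} is a function of the pair $(x,y)$ alone (the proposition's formula makes no reference to the outcome), so it cancels mod $2$ in the sum $g=g_1+g_2$, which is therefore a bona fide $\mathbb{Z}_2$-linear functional; inconsistency forces $g\neq 0$ and the count follows. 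Your argument is arguably cleaner: it sidesteps the ``without loss of generality'' step, which in the paper quietly requires checking that a symplectic transformation carries outcomes to outcomes and preserves the disagreement count, whereas your twist-cancellation observation is purely local to $W$ and needs no symmetry. What the paper's approach buys in exchange is reuse of Corollary \ref{lin_cor}, which it has already established for other purposes. Both proofs land on the same elementary endgame (a nonzero functional on a $d$-dimensional $\mathbb{F}_2$-space takes the value $1$ exactly $2^{d-1}$ times), and your identification of the outcome-independence of the twist as the one point requiring verification is exactly right.
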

\begin{proof}
    Without loss of generality (appealing to Corollary \ref{lin_cor}), assume elements of $S_1$ have the form $(v,0)$ so that $f_1$ is a linear function and $f_2$ restricted to $S_1 \cap S_2$ is also a linear function. Then $f_1-f_2$ is a non-zero linear function from $S_1 \cap S_2$ to $\mathbb{Z}_2$. The rank-nullity theorem implies that $\dim(\ker((f_1 -f_2)|_{S_1\cap S_2})) = \dim(S_1\cap S_2)
    -1$, and this completes the proof.
\end{proof}
\section{Supplemental Proofs}\label{supp_proofs_section}
This section contains some supplemental proofs.

\begin{proof} (Of Proposition \ref{L22_lemma_prop}) Define a square subspace $S \subset \mathbb{Z}^{2n}_2$ to be an $n-1$-dimensional subspace of $\mathbb{Z}_2^{2n}$ with a symplectic basis- a basis $\{x_1, x_2,\dots x_{n-1}\} \cup \{z_1,z_2,\dots z_{n-1}\}$ such that $\left<x_i,z_i\right> = 1$ for all $i$ and all other symplectic inner products between pairs of basis elements are $0$. 

It is easy to see that $Sp(2n,\mathbb{Z}_2)$ acts transitively on square subspaces and that each isotropic $2$-dimensional subspace is contained in the same number of square subspaces.

To choose a random element of $\mathcal{L}^n_2$, we first choose a random square subspace $S$, then choose an isotropic $2$-dimensional subspace of $S$.

If we assign outcomes to more than $\frac{4}{5}$ of the elements in $\mathcal{L}^n_2$, then we must also assign outcomes to more than $\frac{4}{5}$ of the measurements in some square subspace, isomorphic to $\mathcal{L}^{n-1}_2$.

This is impossible because
square subspaces can be thought of as a sub-system of $(n-1)$ qubits \cite{Zanardi_2004}. In other words,
the semilattice of isotropic subspaces of a square subspaces is isomorphic to $\mathcal{L}^{n-1}$ so we have a bound by induction. The base case is $Pval(\mathcal{L}^2_2)= \frac{12}{15}$, as found by a direct computation.
\end{proof}

\begin{proof}
(Of Proposition \ref{partial_var_prop}) Suppose that $f$ is a partial hidden variable assignment for $\mathcal{L}^n$ with domain $B\subset \mathcal{L}^n_n \subset V(B_{n,2})$ and let $\beta \coloneqq \frac{|B|}{|\mathcal{L}^n_2|}$. 

Then $f$ restricts to a partial hidden variable assignment for $\downarrow\mathcal{L}^n_2$ that is defined for any $a\in \mathcal{L}^n_2$ such that $a\leq b$ for some $b\in B$. Call this set $A$ and its complement $\bar{A} \coloneqq \mathcal{L}^n_2 - A$. Define $c \coloneqq \frac{|A|}{|\mathcal{L}^n_2|}$ and $d\coloneqq \Delta(B_{n,2})$

Lemma \ref{biregular_expander_mixing_lemma} states that

\[
|\frac{E(\bar{A},B)}{E(B_{n,2})} - (1-c) \beta| \leq \frac{\lambda}{d}\sqrt{(1-c) c (1-\beta) \beta}.
\]

Note that $E(\bar{A},B) =0$ by definition. Squaring both sides and some simple algebra gives 
\[
\beta \leq \frac{\lambda^2 c}{d^2 (1-c)}.
\]

Since $\frac{4}{5}\geq c$, we find that $\beta \leq 4(\frac{\lambda}{d})^2$. This argument applies to the domain of any partial hidden variable model $f$ and so bounds $Pval(\mathcal{L}^n_n)$ above.
\end{proof}

\begin{proof}
    (Of Proposition \ref{qtm_always_wins_prop})

    To show that Alice and Bob's strategy works, we must show that for $z=(z_1, z_2)\in x\cap y$, Alice and Bob's outcomes agree when $z_1 \cdot z_2=0$ and disagree when $z_1\cdot z_2=1$. To see this, let $c=(0,0,0,1)\in (\mathbb{Z}^2_2)^2$. Let $U$ be a Clifford operator such that $U^\dagger \widetilde{XZ}(z) U = \widetilde{XZ}(c)$, and note that the measurement operator $\widetilde{XZ}(c)$ corresponds to measuring the second qubit in the computational basis. Let $\overline{U}$ be the complex conjugate of $U$ and let $\tau =\begin{cases}1 &\text{if } z_1\cdot z_2 =0\\ -1 &\text{otherwise}\end{cases}$. Because $XZ(z)$ has real entries, it follows that $\overline{\widetilde{XZ}(z)} = \tau \widetilde{XZ}(z)$, and $\overline{\widetilde{XZ}(c)} = \widetilde{XZ}(c)$. From these two facts, we find that $\tau \overline{U^\dagger} \widetilde{XZ}(z) \overline{U}=\overline{U^\dagger \widetilde{XZ}(z) U} =\tau \widetilde{XZ}(c)$.
        
        We use the well-known fact that $V\otimes \overline{V}\ket{\psi} = \ket{\psi}$, for any unitary operator $V$ of appropriate dimensions (in particular, $U$). 
        
        \[Tr( \widetilde{XZ}(z) \otimes  \widetilde{XZ}(z) \ket{\psi}\bra{\psi}) = Tr(\widetilde{XZ}(z) \otimes  \widetilde{XZ}(z) (U\otimes \overline{U})\ket{\psi}\bra{\psi} (U \otimes \overline{U})^\dagger). \]

        Using the cyclic property of the trace and combining tensor factors, we find

        \[
        =Tr( U^\dagger \widetilde{XZ}(z) U \otimes \overline{U}^\dagger \widetilde{XZ}(z)\overline{U} \ket{\psi}\bra{\psi}).
        \]

        \[
        = Tr( \widetilde{XZ}(c) \otimes \tau \widetilde{XZ}(c) \ket{\psi}\bra{\psi})= \tau Tr(\ket{\psi}\bra{\psi})=\tau.
        \]
        In the last line, we have used the fact $V\otimes \overline{V} \ket{\psi}\bra{\psi}$ with $V=XZ(c)$.

        The calculation shows that when $z_1\cdot z_2=0$ (and hence $\tau = 1$), the expected value of the measurement $\widetilde{XZ}(z) \otimes  \widetilde{XZ}(z)$ is $1$. Since $1$ is also the maximum eigenvalue of the operator, the outcome corresponding to eigenvalue $1$ occurs with certainty. Therefore, when Alice and Bob perform the measurements $\widetilde{XZ}(z)\otimes I$ and $I \otimes \widetilde{XZ}(z)$, the products of the eigenvalues corresponding to their outcomes are $1$, and therefore their outcomes agree.

        Similarly, when $z_1\cdot z_2=1$ (and hence $\tau = -1$), the expected value of the measurement $\widetilde{XZ}(z) \otimes  \widetilde{XZ}(z)$ is $-1$, which is also the smallest eigenvalue of the operator, so it must occur with certainty. This shows that the outcomes of the measurements $\widetilde{XZ}(z)\otimes I$ and $I \otimes \widetilde{XZ}(z)$ have product $-1$, and therefore disagree with certainty.
\end{proof}

\begin{proof}
(Of Theorem \ref{random_walk_construction_theorem}) The graphs $R_n$ exist by the construction in \cite{mohanty19}.

Let $A\subset V(W_{n,k})$ be a maximal independent set. Let $M$ be the set of measurements represented by length-$k$ walks on $R_n$. Let $M^\prime\subset M$ be the set of measurements whose outcomes intersect $A$. Since different outcomes of the same measurement are orthogonal,  and because each each outcome corresponds to at most $1$ measurement, we have $|A|=|M^\prime|$.

We find that $|M| = \vartheta(W_{n,k})$ by appealing to an argument similar to the proof of Proposition \ref{theta_count_prop}. Therefore, $\frac{|M^\prime|}{|M|} = \frac{\alpha(W_{n,k})}{\vartheta({W_{n,k}})}$. We calculate the probability that a randomly chosen measurement of $M$ is in $M^\prime$.

Let $x,y\in A$. We can write $x=(x_i)_{i=1}^k$ and $y=(y_i)_{i=1}^k$ where $x_i$ and $y_i$ are outcomes of measurements in $\mathcal{L}^n_n$. Because $x\not\sim y$, $x_i \not \sim y_i$ for all $1\leq i \leq k$. Hence, $A$ projects to an independent set $A_i\subset S_n$ on the $i^{th}$ coordinate. 

Let $M^\prime_i \subset \mathcal{L}^n_n$ be the set of measurements for the $i^{th}$ system that have an outcome in $A_i$. Equivalently, $M_i^\prime$ is the set of $i^{th}$ coordinates of measurements in $M^\prime$. A length-$k$ walk $v$ in $R_n$ is in $M^\prime$ iff it is in $M^\prime_i$ at each step, so we can apply Lemma \ref{hitting_lemma}.
\[
\frac{\alpha(W_{n,k})}{\vartheta(W_{n,k})} = Pr(v\in A)\leq \left(\frac{Pval(\mathcal{L}^n)}{1-\epsilon_1}  + \frac{\lambda_n}{d_n}(1-\frac{Pval(\mathcal{L}^n)}{1-\epsilon_1})\right)^k.
\]

Our assumptions (and some tedious algebra) about $\lambda_n$ and $d_n$ imply that $\frac{\lambda_n}{d_n} \leq 3 Pval(\mathcal{L}^n)$ when $n$ is large enough. We obtain the inequality

\[\frac{\alpha(W_{n,k})}{\vartheta (W_{n,k})} \leq (5 Pval(\mathcal{L}^n) )^k. \]

We count the total number of length-$k$ walks of $R_n$ by starting at any vertex, then taking $k-1$ steps, each with $d_n$ choices. Our choice of $d_n =\frac{1}{\lceil Pval(\mathcal{L}^n)^2 \rceil}$ allows the close approximation $d^{k-1} \approx Pval(\mathcal{L}^n)^{2(1-k)}$ for large $k$. \[|V(W_{n,k})| = \frac{|\mathcal{L}^n_n|}{1-\epsilon_1} d^{k-1}_n 2^{nk} \approx \frac{1}{1-\epsilon_1} |\mathcal{L}^n_n| Pval(\mathcal{L}^n)^{2(1-k)} 2^{nk}.\]

This implies that 
\[
T(W_{n,k})\geq -\frac{k\log(5 Pval(\mathcal{L}^n))}{\log(|\mathcal{L}^n_n|) + 2(1-k)\log(Pval(\mathcal{L}^n)) +nk - (1-\epsilon_1)}.
\]

Taking the limit as $k\to \infty$ while holding $n$ fixed gives

\[\lim_{k\to \infty} T(W_{n,k}) \geq - \frac{\log(5Pval(\mathcal{L}^n))}{-2\log(Pval(\mathcal{L}^n)) +n}.\]

We apply Theorem \ref{incomplete_inconsistent_cor} and calculate the limit $n\to \infty$ to arrive at the final result,

\[ \lim_{n\to \infty } \lim_{k\to \infty} T(W_{n,k}) \geq \frac{1}{3}.\]
\end{proof}
\section{Eigenvalue Propositions}\label{eigenvalue_conjectures_section}

In this section, we prove Propositions \ref{full_graph_spectrum_prop}, \ref{half_graph_spectrum_prop}, and \ref{bipartite_spectrum_prop} by relating the graphs $G_w^\prime, G_w$ and $B_{n,2}$ to dual-polar graphs, defined in \cite{Bro89} and analyzed further in \cite{Bro18}.

\begin{definition}[Dual-polar graphs]
    For $n\in \mathbb{N}$, $i\geq 1$, define $C^i_n(2)$ to be the graph with vertex set $V(C^i_n(2)) \coloneqq \mathcal{L}^n_n$, and edges defined by $x\sim y \iff d(x,y)=i$.
\end{definition}

\begin{proposition}\label{lattice_isomorphism_prop}
    Let  $w\in \mathcal{L}^n_1$. Then $\mathcal{L}^n \cap \uparrow w \simeq \mathcal{L}^{n-1}$.
\end{proposition}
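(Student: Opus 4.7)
The approach is the standard symplectic reduction: since $w$ is isotropic, any isotropic subspace containing $w$ lives inside the symplectic complement $w^\perp \coloneqq \{v \in \mathbb{Z}_2^{2n} \mid \langle v, w\rangle = 0\}$, and quotienting by $w$ removes the degeneracy introduced by $w$.

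First I would set $V \coloneqq w^\perp / w$ and verify two linear-algebraic facts. Because $w$ is a $1$-dimensional isotropic subspace, $w \subset w^\perp$, and by standard symplectic rank-nullity $\dim(w^\perp) = 2n - 1$, so $\dim(V) = 2n - 2$. The symplectic form $\langle \cdot, \cdot\rangle$ on $\mathbb{Z}_2^{2n}$ descends to a well-defined bilinear form on $V$: for $u \in w^\perp$ and $u' = u + w$, we have $\langle u', v\rangle = \langle u, v\rangle$ whenever $v \in w^\perp$, since $\langle w, v\rangle = 0$. This induced form is non-degenerate on $V$ precisely because its radical on $w^\perp$ is exactly $w$ (any $u \in w^\perp$ with $\langle u, w^\perp\rangle = 0$ must lie in $(w^\perp)^\perp = w$).

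Next I would invoke the classification of non-degenerate symplectic spaces over $\mathbb{Z}_2$: any such space of dimension $2(n-1)$ admits a symplectic basis and is therefore isometric to $(\mathbb{Z}_2^{2(n-1)}, \langle \cdot, \cdot\rangle)$. Fix such an isometry $\varphi : V \to \mathbb{Z}_2^{2(n-1)}$. The bijection of posets is then the usual correspondence of subspaces through a quotient: for an isotropic subspace $S \subseteq \mathbb{Z}_2^{2n}$ with $w \subseteq S$, the isotropy condition forces $S \subseteq w^\perp$, so $S$ maps to $\varphi(S/w)$, which is isotropic in $\mathbb{Z}_2^{2(n-1)}$ because the form on $V$ is the restriction of the form on $\mathbb{Z}_2^{2n}$. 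The inverse sends an isotropic $T \subseteq \mathbb{Z}_2^{2(n-1)}$ to the preimage of $\varphi^{-1}(T)$ under the projection $w^\perp \to V$. Both maps respect inclusion and shift dimension by $1$, so the graded $\wedge$-semilattice structure is preserved.

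The only step that requires any care is confirming that the induced form on $V$ is non-degenerate and that the isotropy condition transports correctly through the quotient; everything else is bookkeeping. I would not expect serious obstacles, as this is essentially the content of Witt's extension/cancellation theorem for symplectic spaces over $\mathbb{F}_2$, and the Clifford-group transitivity already invoked elsewhere in the paper furnishes an alternative proof by choosing a Clifford $\phi \in \mathrm{Sp}(2n,\mathbb{Z}_2)$ sending $w$ to a fixed standard $1$-dimensional isotropic line $w_0$, after which the identification $\mathcal{L}^n \cap \uparrow w_0 \simeq \mathcal{L}^{n-1}$ is immediate by projecting onto the complementary $2(n-1)$ coordinates.
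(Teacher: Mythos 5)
Your proposal is correct and follows essentially the same route as the paper: both pass to the symplectic reduction $w^\perp/w$, check that the form descends and is non-degenerate, and transport isotropic subspaces through the quotient. The only cosmetic difference is that the paper fixes an explicit symplectic basis with $e_1$ spanning $w$ rather than invoking the abstract classification of non-degenerate symplectic $\mathbb{Z}_2$-spaces, and your explicit observation that isotropy of $S \supseteq w$ forces $S \subseteq w^\perp$ is a detail the paper leaves implicit.
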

\begin{proof}

    We will show that $w^\perp / w$ is isomorphic to $\mathbb{Z}^{2(n-1)}_2$, where $w^\perp \coloneqq \{v \in \mathbb{Z}^{2n}_2 | \left<v,w\right>=0\}$. The isomorphism preserves the symplectic product, and therefore  induces an isomorphism between the two semilattices $\mathcal{L}^n \cap \uparrow w$ and $\mathcal{L}^{n-1}$.
    Without loss of generality, we can assume that $e_1$ is the unique vector in $w$ and the first element of a basis $\{e_i\}_{i=1}^{2n}$ for $\mathbb{Z}_2^{2n}$ satisfying the standard symplectic relations, $\left< e_i,e_j  \right> = 0$ unless $i=j \text{ mod }n$ in which case $\left< e_i,e_j  \right> = 1$.

    Using the definition of the symplectic product, we can see that $w^\perp = span(\{e_i\}_{i\neq n+1})$, so $w^\perp/w$ has a basis $\{e_i+w\}_{i\neq 1, n+1}$. This shows that $w^\perp / w \simeq \mathbb{Z}^{2n}_2$ as a linear space. To see that the isomorphism preserves the symplectic product, let $a,b \in w^\perp$. Then linearity of the symplectic product gives $\left<e_i+w,e_j\right> = \left<e_i,e_j\right>+\left<w,e_j\right>=\left<e_i,e_j\right>$, so the symplectic product is constant on $w$-cosets, and therefore well-defined on $w^\perp / w$. Hence, $\{e_i+w\}_{i\neq 1,n+1}$ satisfies the standard commutation relations as a basis of $\mathbb{Z}^{2(n-1)}_2$ and the isomorphism preserves the symplectic product. This induces an isomorphism on the semilattices as claimed.
\end{proof}

\begin{proposition}\label{dual_polar_graph_prop}
    $G_w^\prime \simeq C^1_{n-1}(2)$ and
    $G_w \simeq C_{n-1}^{\frac{n}{2}}(2)$.
\end{proposition}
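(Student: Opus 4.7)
The plan is to leverage Proposition \ref{lattice_isomorphism_prop}, which yields a semilattice isomorphism $\Phi : \mathcal{L}^n \cap \uparrow w \to \mathcal{L}^{n-1}$ given by $S \mapsto S/w$ under the identification $w^\perp/w \simeq \mathbb{Z}^{2(n-1)}_2$. First, I would observe that $\Phi$ respects the grading by dimension (since $\dim(S/w) = \dim S - 1$), so it restricts to a bijection between $\mathcal{L}^n_n \cap \uparrow w$ and $\mathcal{L}^{n-1}_{n-1}$. These are precisely the vertex sets of $G_w^\prime$ and $G_w$ on the one hand, and of $C^{i}_{n-1}(2)$ on the other, so $\Phi$ provides the required bijection on vertices.

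Next, I would verify that $\Phi$ preserves the distance function of Definition \ref{pauli_distance_def}. Since $\Phi$ is a semilattice isomorphism, it commutes with intersection, so for any $x,y \in \mathcal{L}^n_n \cap \uparrow w$ we have $\Phi(x) \cap \Phi(y) = \Phi(x \cap y) = (x \cap y)/w$, hence $\dim(\Phi(x) \cap \Phi(y)) = \dim(x \cap y) - 1$. Therefore the distance in $\mathcal{L}^{n-1}_{n-1}$ is
\[
(n-1) - \dim(\Phi(x) \cap \Phi(y)) = n - \dim(x \cap y) = d(x,y),
\]
so $\Phi$ is an isometry.

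With distances preserved, the two claims follow at once: $x \sim y$ in $G_w^\prime$ iff $d(x,y) = 1$ iff $d(\Phi(x),\Phi(y)) = 1$ iff $\Phi(x) \sim \Phi(y)$ in $C^1_{n-1}(2)$, and likewise $x \sim y$ in $G_w$ iff $d(x,y) = n/2$ iff $\Phi(x) \sim \Phi(y)$ in $C^{n/2}_{n-1}(2)$. No substantive obstacle is expected; the only step that requires mild care is confirming that $\Phi$ commutes with intersection and drops the dimension of every isotropic subspace by exactly one, both of which are immediate from the construction of $\Phi$ in the proof of Proposition \ref{lattice_isomorphism_prop}.
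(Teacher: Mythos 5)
Your proposal is correct and follows essentially the same route as the paper: both use the semilattice isomorphism of Proposition \ref{lattice_isomorphism_prop}, restrict it to the top level, and note that the distance is preserved because it is defined in terms of the semilattice structure. Your version simply makes explicit the dimension-shift computation $(n-1)-(\dim(x\cap y)-1)=d(x,y)$ that the paper leaves implicit.
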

\begin{proof}
    The semilattice isomorphism of Proposition \ref{lattice_isomorphism_prop} induces a bijection on the top-level. Since distance is defined in terms of the semilattice, it is preserved by the bijection.
\end{proof}

The eigenvalues of $C_n^i(2)$ have been studied in the literature, and we reference these known values to prove Proposition \ref{full_graph_spectrum_prop}.

\begin{proof}(Of Proposition \ref{full_graph_spectrum_prop})
    The eigenvalues of $C^1_n(2)$ are $\{2{n-k\choose 1}_2-{k\choose 1}_2 \mid k\in\{0,\dots,n\} \}$ \cite{Bro18}. Clearly, the eigenvalues are ordered with $k=0$ being the largest. This gives $\Delta(C_n^1(2)) = 2{n\choose 1}$.

    The second-largest eigenvalue in absolute value either corresponds to $k=1$ or $k=n$, and direct calculation reveals that $k=1$ has the greater magnitude. 
    
    \[\lambda = \max(|2{n-1 \choose 1}_2 - 1|, |2 - {n \choose 1}_2|) = 2^{n-1}-2.\]

    The spectral ratio is
    \[ R(C_n^1(2)) =  \frac{2^{n-1}-2}{2{n\choose 1}_2} = \frac{2^{n-1} -2}{2^n-1} = \frac{1}{2} - \frac{1}{2^n-1}.\]

    Taking the logarithm of both sides and replacing $n$ with $n-1$ gives 
    \[
    \log(R(G_w^\prime)) = \log(R(C^1_{n-1}(2))) \approx -1 - \frac{2}{2^{n-1}-1}
    \]
    using the standard approximation $\log(1+x) \approx x$ for small $x$.
\end{proof}

\begin{proof}(Of Proposition \ref{half_graph_spectrum_prop})
    We use Proposition \ref{dual_polar_graph_prop} and reference \cite{Bro18} to find formulas for the eigenvalues of $C_{n-1}^{\frac{n}{2}}(2)$.

    The largest eigenvalue is $\Delta(C^{\frac{n}{2}}_{n-1}) = {n-1\choose\frac{n}{2} }_2 2^{{\frac{n}{2}+1 \choose 2}}$, according to part $3$ of Lemma \ref{cval_approximation_helper_lemma}.

    Proposition $4.1$ of \cite{Bro18} shows that $C_{n-1}^{\frac{n}{2}}(2)$ has an eigenvalue $(-1)^{\frac{n}{2}} {n-1 \choose \frac{n}{2} } 2^{\frac{n}{2} \choose 2} $, while Proposition $6.4$ part (iii) of \cite{Bro18} shows that the absolute value of this eigenvalue is the spectral parameter.

    Hence, 
    \[R(G_w)=R(C_{n-1}^{\frac{n}{2}}(2)) = \frac{{n-1\choose\frac{n}{2} }_2 2^{{\frac{n}{2} \choose 2}}}{{n-1 \choose \frac{n}{2} } 2^{\frac{n}{2} +1 \choose 2}} = 2^{{\frac{n}{2} \choose 2}-{\frac{n}{2} +1 \choose 2} } = 2^{-\frac{n}{2}}.\]

    Taking the log of both sides gives the result.
\end{proof}

Proposition \ref{bipartite_spectrum_prop} is more involved and we need some lemmas.

\begin{lemma}\label{actual_mediant_inequality}
    Let $n,m\in \mathbb{N}$ and assume $m\leq \frac{(n-1)}{2}$. Then ${n\choose m}_2 \geq 2^{m(n-m)}$.
    \end{lemma}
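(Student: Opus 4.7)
The plan is to bound the Gaussian binomial coefficient term-by-term using its explicit product form. Recall that
\[
{n \choose m}_2 \;=\; \prod_{i=0}^{m-1} \frac{2^{n-i}-1}{2^{m-i}-1}.
\]
This is a product of exactly $m$ factors, and the target bound $2^{m(n-m)}$ is a product of $m$ copies of $2^{n-m}$. So the natural route is to show that each factor $\frac{2^{n-i}-1}{2^{m-i}-1}$ is at least $2^{n-m}$; the result then drops out by multiplying these inequalities together.

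To establish the per-factor inequality, I would reindex by $k = m-i$, so $k$ runs over $\{1,\ldots,m\}$ and the factor becomes $\frac{2^{(n-m)+k}-1}{2^k-1}$. The inequality $\frac{2^{(n-m)+k}-1}{2^k-1} \geq 2^{n-m}$ is equivalent, after clearing the positive denominator, to $2^{(n-m)+k}-1 \geq 2^{n-m}(2^k-1)$, i.e.\ $2^{n-m} \geq 1$. This holds whenever $n \geq m$, which follows trivially from the hypothesis $m \leq (n-1)/2$. (In fact the hypothesis is much stronger than needed for this lemma alone, which suggests it is probably calibrated for the companion bounds in the surrounding eigenvalue computation.)

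Assembling, the proof will be essentially two short displays: expand the Gaussian binomial as the product above; apply the inequality $\frac{2^{a}-1}{2^{b}-1} \geq 2^{a-b}$ (for $a \geq b \geq 1$) to each of the $m$ factors, which have $a-b = n-m$; and take the product to conclude ${n\choose m}_2 \geq 2^{m(n-m)}$. There is no real obstacle here — the only subtlety worth flagging is the reindexing that makes the $(a-b)$ gap constant across all factors, which is what turns a product of heterogeneous ratios into a clean geometric bound.
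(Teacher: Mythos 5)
Your proof is correct. Both you and the paper expand the Gaussian binomial as a product of $m$ factors and bound each factor separately, but the pairing of numerators with denominators differs, and this changes the character of the per-factor bound. The paper uses the form $\prod_{i=0}^{m-1}\frac{2^{n-i}-1}{2^{i+1}-1}$ and applies the mediant inequality to each factor, $\frac{2^{n-i}-1}{2^{i+1}-1}\geq\frac{2^{n-i}}{2^{i+1}}=2^{n-2i-1}$; the resulting bounds vary with $i$ and their exponents sum to $m(n-m)$. This step genuinely uses the hypothesis $m\leq\frac{n-1}{2}$, since the mediant inequality only goes the right way when each factor exceeds $1$, i.e.\ when $n-i>i+1$ for all $i\leq m-1$. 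You instead reindex the denominator so that the numerator--denominator gap $a-b=n-m$ is the same in every factor, and the inequality $\frac{2^a-1}{2^b-1}\geq 2^{a-b}$ reduces to $2^{n-m}\geq 1$. Your observation that the stated hypothesis is stronger than your argument requires is accurate: your route needs only $n\geq m$, whereas the paper's needs the full hypothesis. The trade-off is purely cosmetic --- the paper's pairing matches the formula as stated in its Proposition on Gaussian binomials, while yours yields a uniform geometric bound and a slightly more general lemma.
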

    \begin{proof}
        By definition, 
        \[{n\choose m}_2= \prod_{i=0}^{m-1}\frac{2^{n-i}-1}{2^{i+1}-1}.\]

        Our assumption on $n$ and $m$ implies that each factor is greater than $1$, so we can apply the mediant inequality to each factor to conclude that 

        \[{n \choose m}_2 \geq \prod_{i=0}^{m-1} \frac{2^{n-i}}{2^{i+1}} = 2^{\sum_{i=0}^{m-1} n-2i-1} = 2^{m(n-m)}.\]
    \end{proof}

\begin{lemma}\label{mediant_lemma}
    Let $x,y \in \mathbb{R}_+$. Then $\frac{x}{y}\leq \frac{x+1}{y+1}\frac{1}{1-\frac{1}{y+1}}$.
    \end{lemma}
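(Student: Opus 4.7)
The plan is to verify the inequality by simplifying the right-hand side algebraically until it reduces to an obvious statement.

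First I would rewrite the factor $\frac{1}{1-\frac{1}{y+1}}$. Since $y \in \mathbb{R}_+$, we have $1 - \frac{1}{y+1} = \frac{y}{y+1}$, so $\frac{1}{1-\frac{1}{y+1}} = \frac{y+1}{y}$. Substituting into the right-hand side gives
\[
\frac{x+1}{y+1} \cdot \frac{y+1}{y} = \frac{x+1}{y}.
\]
So the claim reduces to $\frac{x}{y} \leq \frac{x+1}{y}$, which is immediate from $x \leq x+1$ and $y > 0$.

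The only thing worth checking is that no division-by-zero issues arise, but since $x, y > 0$ both $y$ and $y+1$ are strictly positive, and $1 - \frac{1}{y+1} = \frac{y}{y+1} > 0$, so the expression on the right is well-defined. There is no real obstacle here — the lemma is a one-line algebraic identity packaged as an inequality, presumably stated in this form because that is exactly the shape in which it will be applied to bound ratios of Gaussian binomial coefficients in the proof of Proposition \ref{bipartite_spectrum_prop}.
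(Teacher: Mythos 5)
Your proof is correct. The paper's own proof is the same kind of one-line algebra (it bounds the difference $\frac{x}{y}-\frac{x+1}{y+1}$ by $\frac{x}{y(y+1)}$ and rearranges), but your route --- clearing the factor $\frac{1}{1-\frac{1}{y+1}}=\frac{y+1}{y}$ so that the right-hand side collapses to $\frac{x+1}{y}$ --- is more direct and makes the lemma transparently equivalent to $x\leq x+1$.
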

    \begin{proof}
        This is just simple algebra.
        \[
        \frac{x}{y}- \frac{x+1}{y+1} = \frac{x-y}{y(y+1)} \leq \frac{x}{y(y+1)}.
        \]

        Re-arranging gives the result.

        \[
        \frac{x}{y}-\frac{x}{y} \left( \frac{1}{y+1}\right) = \frac{x}{y}\left( 1 - \frac{1}{y+1}\right) \leq \frac{x+1}{y+1}
        \]
    \end{proof}

    The following lemma is proven in \cite{product_reference} using advanced number theoretic techniques.
    
    \begin{lemma}\label{product_lemma}
    $\prod_{i=1}^\infty \frac{1}{(1-\frac{1}{2^i})} < 5$.
    \end{lemma}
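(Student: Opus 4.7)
The plan is to take the logarithm of the product, expand each factor with the power series $-\log(1-x)=\sum_{k\geq 1} x^k/k$, swap the order of summation, and then bound the resulting single series geometrically.

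More concretely, I would proceed as follows. Let $P = \prod_{i=1}^\infty (1-2^{-i})^{-1}$. Since all terms below are non-negative, Tonelli's theorem justifies every reordering. Taking logs,
\[
\log P \;=\; \sum_{i=1}^\infty -\log(1-2^{-i}) \;=\; \sum_{i=1}^\infty \sum_{k=1}^\infty \frac{2^{-ik}}{k} \;=\; \sum_{k=1}^\infty \frac{1}{k}\sum_{i=1}^\infty 2^{-ik} \;=\; \sum_{k=1}^\infty \frac{1}{k(2^k-1)}.
\]

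Next I would bound the series on the right. For every $k \geq 1$ one has $2^k - 1 \geq 2^{k-1}$, so $\frac{1}{k(2^k-1)} \leq \frac{2}{k\, 2^k}$, and therefore
\[
\log P \;\leq\; 2\sum_{k=1}^\infty \frac{1}{k\,2^k} \;=\; -2\log\!\left(1-\tfrac{1}{2}\right) \;=\; 2\log 2.
\]
Exponentiating yields $P \leq e^{2\log 2} = 4 < 5$, which is the desired bound (in fact with a bit of room to spare).

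There is no real obstacle here: the only subtle point is justifying the swap of summation, which is immediate by Tonelli since all summands are positive. The only reason one might prefer the reference's number-theoretic machinery is to obtain a sharper constant (the true value of $P$ is about $3.463$); for the weak bound $P<5$ needed downstream, the elementary geometric-series estimate above is entirely sufficient and self-contained.
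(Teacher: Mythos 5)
Your proof is correct, and it takes a genuinely different route from the paper: the paper does not prove this lemma at all, but instead defers to a citation described as using ``advanced number theoretic techniques.'' Your argument --- logarithm, power-series expansion of $-\log(1-x)$, Tonelli to swap the sums, and the elementary bound $2^k-1\geq 2^{k-1}$ --- is complete and self-contained, and it actually delivers the sharper conclusion $\prod_{i=1}^\infty(1-2^{-i})^{-1}\leq 4$, comfortably below the $5$ that the paper needs in Corollary \ref{gauss_estimation_cor}. Every step checks out: the interchange is justified since all summands are positive, the inner geometric sum gives $\sum_{k\geq 1}\frac{1}{k(2^k-1)}$, and the final series $\sum_{k\geq 1}\frac{1}{k2^k}=\log 2$ is exact. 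What the cited reference presumably buys is the precise value of the infinite product (a classical constant, roughly $3.4627$, tied to partition-theoretic identities), but none of that precision is used downstream; your elementary estimate would let the paper drop the external dependency entirely.
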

    \begin{corollary}\label{gauss_estimation_cor}
        Let $n,m\in \mathbb{N}$. Then
        ${n\choose m}_2 \leq 5\cdot 2^{m(n-m)}$.
    \end{corollary}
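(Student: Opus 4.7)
The plan is to write ${n \choose m}_2$ as the explicit product $\prod_{i=0}^{m-1} \frac{2^{n-i}-1}{2^{i+1}-1}$, apply Lemma \ref{mediant_lemma} to each factor, and then recognize the resulting correction factor as a partial product of the infinite product controlled by Lemma \ref{product_lemma}.

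Concretely, I would apply Lemma \ref{mediant_lemma} with $x = 2^{n-i}-1$ and $y = 2^{i+1}-1$, so that $x+1 = 2^{n-i}$, $y+1 = 2^{i+1}$, and $\frac{1}{y+1} = 2^{-(i+1)}$. The lemma then gives
\[
\frac{2^{n-i}-1}{2^{i+1}-1} \;\leq\; \frac{2^{n-i}}{2^{i+1}} \cdot \frac{1}{1 - 2^{-(i+1)}}.
\]
Taking the product over $i = 0, \dots, m-1$, the first factor telescopes to $\prod_{i=0}^{m-1} 2^{n-2i-1} = 2^{m(n-m)}$, exactly as in the proof of Lemma \ref{actual_mediant_inequality}. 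Hence
\[
{n\choose m}_2 \;\leq\; 2^{m(n-m)} \prod_{j=1}^{m} \frac{1}{1 - 2^{-j}}.
\]

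Finally, since every factor in the last product exceeds $1$, the partial product is bounded by the corresponding infinite product, and Lemma \ref{product_lemma} gives
\[
\prod_{j=1}^{m} \frac{1}{1 - 2^{-j}} \;\leq\; \prod_{j=1}^{\infty} \frac{1}{1 - 2^{-j}} \;<\; 5,
\]
which yields ${n\choose m}_2 \leq 5 \cdot 2^{m(n-m)}$. There is no real obstacle here; the only thing to verify carefully is the substitution into Lemma \ref{mediant_lemma} so that the error factors land exactly as the terms $\frac{1}{1-2^{-j}}$ that Lemma \ref{product_lemma} is designed to control.
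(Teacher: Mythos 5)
Your proposal is correct and matches the paper's own proof essentially step for step: the same substitution $x=2^{n-i}-1$, $y=2^{i+1}-1$ into Lemma \ref{mediant_lemma}, the same telescoping of the main factors to $2^{m(n-m)}$, and the same bounding of the correction factors by the infinite product of Lemma \ref{product_lemma}. No differences worth noting.
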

    \begin{proof}
        By definition, 
        
        \[{n\choose m}_2= \prod_{i=0}^{m-1}\frac{2^{n-i}-1}{2^{i+1}-1}.\]

        We use Lemma \ref{mediant_lemma} on each factor with $x=2^{n-i}-1$ and $y = 2^{i+1}-1$ to obtain
        \[{n\choose m}_2 \leq \prod_{i=0}^{m-1} 2^{n-2i-1} \prod_{i=0}^{m-1}\frac{1}{(1-\frac{1}{2^{i+1}})}.
        \]

        We apply Lemma \ref{product_lemma} to the right product. We turn the left product into a sum on the exponent to find 

        \[{n \choose m}_2 \leq 5 \cdot 2^{\sum_{i=0}^{m-1} n-2i-1} = 5 \cdot 2^{m(n-m)}.\]
    \end{proof}

  \begin{lemma}\label{summation_lemma}
        Let $n>3$ be a natural number. Let 

        \[Q(i) \coloneqq 2(n-i-2) + i(n-i) + {i\choose 2}. \]

        Then 
        \[\sum_{i=0}^{n-2} 2^{Q(i)}  \leq 3 \cdot 2^{Q(n-3)}.\]
    \end{lemma}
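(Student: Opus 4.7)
The plan is to locate the index where $Q$ attains its maximum, peel off the top terms, and bound the remainder by a rapidly-decaying series. First I would compute the forward differences. Expanding $Q(i+1) - Q(i)$ term by term gives
\[
Q(i+1) - Q(i) = -2 + n - (2i+1) + i = n - i - 3.
\]
This is strictly positive for $i < n-3$, equals zero at $i = n-3$, and is negative for $i > n-3$. Consequently $Q$ is unimodal on $\{0,1,\ldots,n-2\}$ with maximum value attained at both $i = n-3$ and $i = n-2$, and in particular $Q(n-2) = Q(n-3)$.

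Next I would split the sum according to this profile:
\[
\sum_{i=0}^{n-2} 2^{Q(i)} = 2^{Q(n-2)} + 2^{Q(n-3)} + \sum_{i=0}^{n-4} 2^{Q(i)} = 2 \cdot 2^{Q(n-3)} + \sum_{i=0}^{n-4} 2^{Q(i)}.
\]
So it remains to show that the residual tail $\sum_{i=0}^{n-4} 2^{Q(i)}$ is at most $2^{Q(n-3)}$, which will give the desired factor of $3$.

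For the tail, I would telescope the differences computed above:
\[
Q(n-3) - Q(i) = \sum_{j=i}^{n-4} (n-j-3) = \sum_{k=1}^{n-3-i} k = \frac{(n-3-i)(n-2-i)}{2}.
\]
Substituting $k = n-3-i$ (so $k$ ranges from $1$ to $n-3$),
\[
\sum_{i=0}^{n-4} 2^{Q(i)} = 2^{Q(n-3)} \sum_{k=1}^{n-3} 2^{-k(k+1)/2}.
\]
Since $k(k+1)/2 \geq k$ for every $k \geq 1$, the inner sum is bounded by $\sum_{k=1}^{\infty} 2^{-k} = 1$, yielding the required bound.

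The proof is essentially bookkeeping, and there is no real conceptual obstacle. The only care needed is to verify the forward-difference identity $\Delta Q(i) = n-i-3$ correctly and to handle the index shift $k = n-3-i$ without off-by-one errors; both are routine once the unimodal shape of $Q$ is recognized.
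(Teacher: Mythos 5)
Your proof is correct and follows essentially the same route as the paper: identify that $Q$ is unimodal with its maximum attained at both $i=n-3$ and $i=n-2$, then bound the remaining terms by a geometric series (the paper does this by noting the exponents $Q(0)<\cdots<Q(n-3)$ are distinct integers, you do it via the telescoped gap $Q(n-3)-Q(i)=\binom{n-2-i}{2}\geq n-3-i$; both yield the factor $3$). No gaps.
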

    \begin{proof}
        We can expand the quadratic equation into

        \[
        Q(i) = - \frac{1}{2} i^2 + (n-\frac{5}{2})i + (2n-4).
        \]
        To find the extrema, we find the value of $i$ for which $Q^\prime(i)=0$.

        \[Q^\prime(i) = -i +n-\frac{5}{2} =0.\]

        The two nearest integers to $n-\frac{5}{2}$ are $i=n-2$ or $i=n-3$, so $Q(n-2)=Q(n-3)$ are the maximal values for $i \in \mathbb{N}$.

        Because $Q(i)$ is quadratic with negative $i^2$ coefficient, the sequence $(Q(0),Q(1),\dots, Q(n-3))$ is a monotone strictly increasing sequence of integers. Hence,
        \[
        \sum_{i=0}^{n-2} 2^{Q(i)} \leq 2^{Q(n-2)} + \sum_{k=0}^{Q(n-3)} 2^k < 3 \cdot 2^{Q(n-3)} 
        \]
        where the last inequality uses the formula for the sum of a geometric sequence.
    \end{proof}

    To work towards a proof of Proposition \ref{bipartite_spectrum_prop}, let $X$ be the adjacency matrix of $B_{n,2}$, and let $\lambda$ and $\Delta$ be the spectral parameter and $\max(Spec(B_{n,2}))$ respectively.

   Because $B_{n,2}$ is bipartite, we can order the vertices such that $X$ is block-antidiagonal, and has the form $X=\begin{bmatrix}
        0 && B\\
        B^T && 0
        \end{bmatrix}$.

        It follows that \[X^2 = \begin{bmatrix}
        BB^T && 0\\
        0 && B^TB
        \end{bmatrix}.\]

        The nonzero eigenvalues of $BB^T$ and $B^TB$ are the same (with different multiplicities).

        Thus, $\Delta$ and $\lambda$ can be computed as the largest and second-largest eigenvalues of $BB^T$.

        The entries of $B B^T$ (indexed by $(v,w)$) count the number length-2 walks $v\sim t \sim w$, where $\dim(v)=\dim(w)=n$ and $\dim(t)=2$.

        This gives us the decomposition
        \[
        B B^T = \sum_{i=0}^n {n-i\choose 2}_2 A_i
        \]

        where $A_i$ is the adjacency matrix of $C^i_{n}(2)$. The decomposition shows that $B B^T$ is in the Bose-Mesner algebra \cite{Bro89} so its eigenspaces are the idempotents, $E_i$, of the algebra.

        We use the eigendecompositions for $A_i$, given by
        $A_j= \sum_{i} P_{i,j} E_i$, where $P_{i,j}$ is the $i^{th}$ eigenvalue\footnote{ordered by the corresponding eigenvalue for $C^1_n(2)$} of $C^j_{n}(2)$ along with the fact that the $E_i$ are eigenspaces of $BB^T$ to conclude that the eigenvalues of $BB^T$ are $\{\sum_{i=0}^n {n-i \choose 2}_2 P_{h,i} \mid h \in \{0,\dots,n\}\}$. 
        
        Since $E_0$ is the idempotent corresponding to the largest eigenvalue of each $A_i$, $E_0$ is also the idempotent for the largest eigenvalue of $BB^T$.

    We have established
    \begin{proposition}\label{background_proposition}
        The eigenvalues of $X^2$ are given by
        
        \[Spec(X^2) = \{\sum_{i=0}^n {n-i \choose 2}_2 P_{h,i} \mid h \in\{0,\dots,n\}\}.
        \]

        The largest eigenvalue is given by setting $h=0$.
    \end{proposition}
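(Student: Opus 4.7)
The plan is to reduce the spectral problem on the bipartite graph $B_{n,2}$ to a spectral problem on a single side, and then recognize that the resulting matrix lies in the Bose-Mesner algebra of the dual-polar association scheme on $\mathcal{L}^n_n$, where the eigenvalues are understood.

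First I would order the vertices so that the adjacency matrix takes block-antidiagonal form $X = \begin{bmatrix} 0 & B \\ B^T & 0 \end{bmatrix}$, making $X^2$ block-diagonal with blocks $BB^T$ and $B^TB$. These two blocks share their nonzero spectrum, and the eigenvalues of $X$ are $\pm$ the square roots of those of $X^2$, so the spectrum of $X^2$ is controlled by $\mathrm{Spec}(BB^T)$. Indexing $BB^T$ by $\mathcal{L}^n_n \times \mathcal{L}^n_n$, the $(v,w)$-entry counts length-$2$ walks $v \sim t \sim w$ with $t \in \mathcal{L}^n_2$, which is the number of $2$-dimensional isotropic subspaces contained in $v \cap w$. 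This count depends only on $\dim(v \cap w) = n - d(v,w)$ and equals $\binom{n-i}{2}_2$ when $d(v,w) = i$.

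This combinatorial identity yields the matrix decomposition $BB^T = \sum_{i=0}^{n} \binom{n-i}{2}_2 A_i$, where $A_i$ is the distance-$i$ adjacency matrix of the dual-polar graph $C_n^i(2)$. Since $\{A_i\}$ spans the Bose-Mesner algebra of a commutative association scheme, all $A_i$ are simultaneously diagonalized by a common basis of primitive idempotents $E_0, \dots, E_n$, with $A_j = \sum_h P_{h,j} E_h$. Substituting gives $BB^T = \sum_h \bigl( \sum_{i=0}^n \binom{n-i}{2}_2 P_{h,i} \bigr) E_h$, from which the claimed list of eigenvalues can be read off directly.

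To identify $h=0$ as producing the largest eigenvalue, note that $E_0$ is the projector onto the all-ones (Perron) eigenspace shared by every $A_i$, so each $P_{0,i}$ is the maximum eigenvalue of $A_i$. Because the coefficients $\binom{n-i}{2}_2$ are nonnegative, the weighted sum $\sum_i \binom{n-i}{2}_2 P_{h,i}$ is maximized at $h=0$. I do not anticipate a substantive obstacle: the only genuine observation is that the length-$2$ count depends solely on $d(v,w)$, which places $BB^T$ inside the Bose-Mesner algebra; after that, the conclusion is standard association-scheme bookkeeping.
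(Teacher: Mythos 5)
Your proposal is correct and follows essentially the same route as the paper: block-antidiagonal form of $X$, the identification of the $(v,w)$-entry of $BB^T$ with the count $\binom{n-i}{2}_2$ of $2$-dimensional subspaces of $v\cap w$, the resulting decomposition $BB^T=\sum_i \binom{n-i}{2}_2 A_i$ inside the Bose–Mesner algebra, and the identification of $h=0$ with the Perron idempotent. Your write-up is if anything slightly more explicit than the paper about why the walk count depends only on $d(v,w)$ and why nonnegativity of the coefficients forces the maximum at $h=0$.
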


    Armed with expressions for the spectrum of $X^2$, we find upper bounds for $\lambda^2$ and lower bounds for $\Delta^2$.

    \begin{proposition}\label{lambda_upper_bound}
        $\lambda^2 \leq 75 \cdot 2^{Q(n-3)}$, where $Q(i)=2(n-i-2) + i(n-i) + {i \choose 2}.$
    \end{proposition}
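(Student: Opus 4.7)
The plan is to read off $\lambda^2$ directly from Proposition \ref{background_proposition}: since $X^2$ is positive semidefinite and $h=0$ yields the largest eigenvalue $\Delta^2$, there exists $h^{*} \in \{1,\dots,n\}$ with
\[
\lambda^2 \;=\; \sum_{i=0}^{n} {n-i \choose 2}_{2} P_{h^{*},i} \;\leq\; \sum_{i=0}^{n} {n-i \choose 2}_{2} |P_{h^{*},i}|.
\]
The overall strategy is to bound each summand by a constant multiple of $2^{Q(i)}$ and then collapse the sum using Lemma \ref{summation_lemma}, which contributes the factor of $3$ needed so that the constants combine to $75 = 5 \cdot 5 \cdot 3$.

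The Gaussian-binomial factor is handled immediately by Corollary \ref{gauss_estimation_cor}, giving ${n-i \choose 2}_{2} \leq 5 \cdot 2^{2(n-i-2)}$ whenever $n-i \geq 2$; the boundary cases $n-i \in \{0,1\}$ need a separate direct check but do not alter the dominant term in Lemma \ref{summation_lemma}. The genuine obstacle, and the hard step, is to establish the uniform estimate
\[
|P_{h^{*},i}| \;\leq\; 5 \cdot 2^{i(n-i) + {i \choose 2}} \qquad \text{for all } h^{*} \geq 1.
\]
My plan here is to exploit the fact that the $P$-matrix of the symplectic dual polar scheme is expressible through $q$-Krawtchouk polynomials (see \cite{Bro18, stanton_1984}). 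The principal eigenvalue is $P_{0,i} = {n \choose i}_{2} \cdot 2^{{i+1 \choose 2}}$, and for $h \geq 1$ the subdominant ratio $|P_{h,i}/P_{0,i}|$ is at most $2^{-i}$, in the same spirit as how Proposition \ref{half_graph_spectrum_prop} extracts the second-largest eigenvalue via Proposition 6.4(iii) of \cite{Bro18}. Coupling this with the estimate ${n \choose i}_{2} \leq 5 \cdot 2^{i(n-i)}$ from Corollary \ref{gauss_estimation_cor} yields the required bound on $|P_{h^{*},i}|$.

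Assembling the two pieces, each summand is at most $25 \cdot 2^{Q(i)}$, and Lemma \ref{summation_lemma} then concludes
\[
\lambda^2 \;\leq\; 25 \sum_{i=0}^{n-2} 2^{Q(i)} \;\leq\; 75 \cdot 2^{Q(n-3)}.
\]
The main difficulty in executing this plan is justifying the uniform off-by-$2^i$ bound on the $q$-Krawtchouk polynomials for $h \geq 1$; if that precise inequality is not directly available as a black box, my fallback is to invoke the orthogonality relation $\sum_h m_h P_{h,i}^2 = |V(B_{n,2})| \, k_i$ together with the explicit multiplicities $m_h$ of the symplectic dual polar scheme (cf.\ \cite{Bro89}) to derive a comparable bound, possibly with a slightly worse constant that can still be absorbed into the factor of $75$.
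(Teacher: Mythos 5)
Your proposal follows essentially the same route as the paper: start from Proposition \ref{background_proposition}, bound each ${n-i\choose 2}_2$ via Corollary \ref{gauss_estimation_cor}, bound $|P_{h,i}|$ for $h\neq 0$ by the spectral parameter ${n\choose i}_2 2^{{i\choose 2}}$ of $C^i_n(2)$ (your ``subdominant ratio $2^{-i}$'' is precisely this known formula from \cite{Bro18}, which the paper cites as a black box rather than re-deriving), and collapse the sum with Lemma \ref{summation_lemma} to obtain $25\cdot 3=75$. Your worry about the boundary cases $n-i\in\{0,1\}$ is moot since those Gaussian binomials vanish, which is why the sum truncates at $i=n-2$.
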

    \begin{proof}
        We use Proposition \ref{background_proposition}, along with the fact that $h=0$ corresponds to the largest eigenvalue to conclude that
        \[\lambda^2 = \max_{h\neq 0}( | \sum_{i=0}^n {n-i\choose 2}_2 P_{h,i} |)
        \leq  \max_{h\neq 0} (\sum_{i=0}^n {n-i\choose 2}_2 |P_{h,i}|) = \sum_{i=0}^n {n-i\choose 2}_2 {n \choose i}_2 2^{{i\choose 2}}.
        \]

    In the second inequality, we use the formula for the spectral parameter of $C^i_n(2)$ known from \cite{Bro18} that we used in the proof of Proposition \ref{half_graph_spectrum_prop}.

        We apply Corollary \ref{gauss_estimation_cor} to obtain

        \[\lambda^2 \leq 25\sum_{i=0}^{n-2} 2^{2(n-i-2) + i(n-i) + {i \choose 2}}.\]

        Note that $Q(i)$ is the exponent.
        Using Lemma \ref{summation_lemma}, we find

        \[\lambda^2 \leq 75 \cdot 2^{Q(n-3)}.\]
    \end{proof}
\begin{proposition}\label{d_lower_bound}
        For $n> 8$, $\Delta^2\geq 2^{3(n-3) + {n-2\choose 2}}$.
    \end{proposition}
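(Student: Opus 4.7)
The plan is to lower-bound $\Delta^2$ by a single well-chosen term in the expansion from Proposition \ref{background_proposition}, mirroring the structure of the upper bound for $\lambda^2$ in Proposition \ref{lambda_upper_bound}. Taking $h=0$ picks out the largest eigenvalue of $BB^T$, so
\[\Delta^2 = \sum_{i=0}^n \binom{n-i}{2}_2 P_{0,i},\]
and the key point is that $P_{0,i}$ is the valency $k_i$ of the distance-$i$ graph $C_n^i(2)$, hence non-negative. For the symplectic dual polar graph this valency is the standard formula $k_i = \binom{n}{i}_2 \cdot 2^{\binom{i+1}{2}}$ (one factor of $2^i$ larger than the spectral parameter $\binom{n}{i}_2 \cdot 2^{\binom{i}{2}}$ that was used in the proof of Proposition \ref{lambda_upper_bound}). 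Thus every summand is non-negative and we may drop all but one.

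Next, I would select the term $i = n-3$, since it is the exponent that directly produces the target. Then $\binom{n-i}{2}_2 = \binom{3}{2}_2 = 7$ and $k_{n-3} = \binom{n}{3}_2 \cdot 2^{\binom{n-2}{2}}$, so
\[\Delta^2 \;\geq\; 7 \cdot \binom{n}{3}_2 \cdot 2^{\binom{n-2}{2}}.\]
To finish, apply Lemma \ref{actual_mediant_inequality} with $m=3$, which requires $3 \leq (n-1)/2$, i.e., $n \geq 7$. This gives $\binom{n}{3}_2 \geq 2^{3(n-3)}$, so
\[\Delta^2 \;\geq\; 7 \cdot 2^{3(n-3) + \binom{n-2}{2}} \;\geq\; 2^{3(n-3) + \binom{n-2}{2}},\]
which is the claimed bound. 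The hypothesis $n > 8$ in the statement is even more generous than what the argument strictly needs.

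The only step that is not essentially a calculation is invoking the valency formula $P_{0,i} = \binom{n}{i}_2 \cdot 2^{\binom{i+1}{2}}$; that is the main ``obstacle'' but it is standard and can either be cited from \cite{Bro18} or verified directly by counting pairs $(x,y)$ of maximal isotropic $n$-subspaces with $\dim(x \cap y) = n-i$. One selects an $(n-i)$-subspace of $x$ in $\binom{n}{n-i}_2 = \binom{n}{i}_2$ ways and then counts the $2^{\binom{i+1}{2}}$ extensions to a maximal isotropic complement, using the same parameter counts that were already implicit in parts 2--4 of Lemma \ref{cval_approximation_helper_lemma}. Everything else is bookkeeping on exponents.
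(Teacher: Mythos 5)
Your proposal is correct and follows essentially the same route as the paper: expand $\Delta^2$ via Proposition \ref{background_proposition}, use non-negativity of the valencies $P_{0,i}={n\choose i}_2 2^{{i+1\choose 2}}$ to drop all terms except $i=n-3$, and apply Lemma \ref{actual_mediant_inequality} with $m=3$. Your side remark that $n\geq 7$ already suffices is also accurate; the paper's hypothesis $n>8$ is indeed slightly more generous than the argument requires.
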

    \begin{proof}
        $\Delta^2$ is the largest eigenvalue of $BB^T$, and it is the eigenvalue associated with the idempotent $E_0$. Thus, we have an explicit formula for $\Delta^2$ by Proposition \ref{background_proposition}.

        \[\Delta^2 = \sum_{i=0}^n {n-i \choose 2}_2 P_{0,i}\]

        where $P_{0,i}$ is the largest eigenvalue of $A_{i}$ and is given by the formula

        \[
        P_{0,i} = {n \choose i}_2 2^{{i+1 \choose 2}}.
        \]

        All terms in the sum are non-negative. We keep only the $i=n-3$ term and ignore the others. \[\Delta^2 \geq {3 \choose 2}_2 {n\choose 3}_2 2^{{n-2 \choose 2}} \geq 2^{3(n-3) + {n-2\choose 2}}.\] The second inequality is Lemma \ref{actual_mediant_inequality}, and our use of this lemma explains why we assume $n>8$.
    \end{proof}

    \begin{proof}(Of Proposition \ref{bipartite_spectrum_prop})
        We use the bounds of Propositions \ref{lambda_upper_bound} and \ref{d_lower_bound} to conclude that for some constant $C$ independent of $n$,

        \[
        \frac{\lambda^2}{\Delta^2} \leq \frac{C \cdot 2^{3n +{n-3 \choose 2}}}{2^{3n+{n-2 \choose 2}}} = C2^{{n-3 \choose 2} - {n-2 \choose 2} } = C 2^{-(n-3)}.
        \]

    Thus, up to a different constant $C^\prime$, 

    \[
    \frac{\lambda}{\Delta} \leq C^\prime 2^{-\frac{n}{2}}
    \]
    and taking the log of both sides gives the result.
    \end{proof}

\section{Solutions to Counting Problems}\label{degree_count_section}
In this section, we derive formulas for $\mathcal{L}^n_k$, $Q$, and $\Delta(G_w)$.

These objects are defined in terms of isotropic subspaces, so we count them with methods similar to the q-analog from combinatorics, defined in \cite{van_lint_wilson_2001}.

\begin{definition}\label{q_analog_def}
     For $n,m\in \mathbb{N}$, define ${n \choose m}_2 \in \mathbb{N}$ to be the number of $m$-dimensional subspaces of $\mathbb{Z}_2^n$.
\end{definition}

We calculate ${n\choose m}_2$ by counting subspaces. We count subspaces by counting bases, then dividing by the number of bases of that subspace. See \cite{van_lint_wilson_2001} for details.

\begin{proposition}\label{q_analog_formula_prop}
         \[{n 
         \choose m}_2 = \prod_{i=0}^{m-1} \frac{2^{n-i}-1}{2^{i+1}
         -1}.\]
\end{proposition}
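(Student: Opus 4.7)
The plan is to count in two ways, following the standard double counting argument for Gaussian binomial coefficients. I would count ordered bases of $m$-dimensional subspaces of $\mathbb{Z}_2^n$ in two different ways: first as ordered $m$-tuples of linearly independent vectors in $\mathbb{Z}_2^n$, and second by grouping them according to which subspace they span.

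First I would count the number $N$ of ordered linearly independent $m$-tuples $(v_1,\dots,v_m)$ in $\mathbb{Z}_2^n$. The vector $v_1$ can be any nonzero vector, giving $2^n - 1$ choices. Given $v_1,\dots,v_i$, the span of these vectors has exactly $2^i$ elements, so $v_{i+1}$ must be one of the $2^n - 2^i$ vectors outside this span. Multiplying, one gets $N = \prod_{i=0}^{m-1}(2^n - 2^i)$.

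Next I would count the number of ordered bases of a fixed $m$-dimensional subspace $U \subset \mathbb{Z}_2^n$. Since $U \simeq \mathbb{Z}_2^m$, the same argument applied internally to $U$ gives $\prod_{i=0}^{m-1}(2^m - 2^i)$ ordered bases per subspace. Dividing $N$ by this count yields
\[
{n \choose m}_2 = \prod_{i=0}^{m-1} \frac{2^n - 2^i}{2^m - 2^i}.
\]
The final step is an algebraic simplification: factor $2^i$ from numerator and denominator of each term to obtain $\frac{2^{n-i}-1}{2^{m-i}-1}$, then reindex the denominator product via $j = m-i$ so that it becomes $\prod_{i=0}^{m-1}(2^{i+1}-1)$, yielding the stated formula. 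There is no real obstacle here; the only thing to be careful about is the reindexing in the denominator to match the exact form in the proposition statement.
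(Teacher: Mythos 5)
Your proposal is correct and is exactly the argument the paper invokes: it states that ${n\choose m}_2$ is computed ``by counting bases, then dividing by the number of bases of that subspace,'' deferring the details to the cited reference. Your write-up simply fills in those details, including the correct reindexing of the denominator product.
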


\begin{proof}(Of Lemma \ref{cval_approximation_helper_lemma})
\begin{enumerate}
    \item We use a formula for  $|\mathcal{L}^n_n|$, (\cite{Gross_2006}, Theorem $20$). It can be derived by the standard technique of counting bases and dividing by overcounts. The formula states

    \[|\mathcal{L}^n_k| = \prod_{i=0}^{k-1} \frac{2^{2n-i}-2^i}{2^{k}-2^i}.\]

    \item We express $Q= \Omega_\frac{n}{2}\mathcal{L}^n_n$, where $\Omega_\frac{n}{2} = |\{ x\in\mathcal{L}^n_n \mid d(y,x)=\frac{n}{2}\}|$ for any fixed $y\in \mathcal{L}^n_n$. Note that $Q$ counts the question pair $(x,y)$ as distinct from $(y,x)$.
    The formula for $\Omega_{\frac{n}{2}}$ is described as ``not hard to see'' in \cite{stanton_1984} (see page 113). One way\footnote{It is also possible to perform these calculations by just counting bases and controlling for overcounting, but this is less direct.} to see this is by modifying Proposition $2.2.2$ of \cite{CHEN19927} to incorporate the isotropic condition.
    
    The same technique can be used to count the degree  $\Delta(G_w)$, as we describe in the next item.
    
    \item
    Let $y\in V(G_w)$. By abuse of notation, $y\in \mathcal{L}^n_n$ where $w\leq y$.

    $\Delta(G_w)$ is the number of $x\in \mathcal{L}^n_n$ such that $w\leq x$ and $\dim (x \cap y) = \frac{n}{2}$.
    
    First, we count the number of possible intersections, $y^\prime \subset y$ with $\dim(y^\prime) = \frac{n}{2}$. Then, we count the number of $x \in \mathcal{L}^n_n$ such that $x\cap y = y^\prime$.
    
Let $y/w$ be the $n-1$-dimensional space that results by the quotient of $y$ by $w$. Choose an $\frac{n}{2}-1$-dimensional subspace of $y/w$ in any of ${n-1 \choose \frac{n}{2} -1}_2$ ways, and lift this $\frac{n}{2}-1$-dimensional subspace to an $\frac{n}{2}$-dimensional subspace  $y^\prime \subset y$ by taking the preimage of the quotient by $w$. 

Let $z$ be a complement of $y^\prime$ in $y^{\prime\perp}$, meaning that $z \cap y^\prime = \{0\}$ and $z+y^\prime = y^{\prime \perp}$. Note that $\dim(y\cap z) = \frac{n}{2}$. Let $z^\prime$ be a complement of $y\cap z$ in $z$.

Any $x$ such that $x \sim y $ can be written uniquely as $x=r + y^\prime$, where $r$ is an isotropic 
$\frac{n}{2}$-dimensional subspace satisfying $r \subset z^\prime$ and $r \cap y = \{0\}$. These subspaces ($r$) are in bijection with linear functions $f:z^\prime \to y\cap z$, under $r \leftrightarrow span(f(z_1)+z_1, f(z_2) +z_2, \dots, f(z_{\frac{n}{2}}) + z_{\frac{n}{2}})$, where $\{z_i\}_{i=1}^{\frac{n}{2}}$ is a basis for $z^\prime$.

The requirement of $r$ being isotropic is equivalent to requiring $\left< f(z_i)+ z_i, f(z_j)+z_j\right> = 0$ for all pairs of basis vectors $z_i,z_j$ of $z^\prime$. We count the number of such functions by counting the number of possibilities for $f(z_i)$, one basis vector at a time. There are $2^{\frac{n}{2}}$ choices for $f(z_1)$. We require $f(z_2)$ to satisfy one linear equation corresponding to the requirement $\left< f(z_1)+ z_1, f(z_2)+z_2\right> = 0$. A single linear constraint reduces the number of choices for $f(z_2)$ to $2^{\frac{n}{2} -1}$. Continuing in this way, we will have $2^{\frac{n}{2}+1-i}$ choices for $f(z_i)$. We calculate the total number of functions by taking the product of choices for each basis vector. This quantity is $2^{\frac{n}{2}+1 \choose 2}$.
\item 
We apply the previous items in this proof and some basic algebra to get the equality. The inequality results from applying the mediant inequality to each factor.
\end{enumerate}
\end{proof}

\end{document}